\documentclass[a4paper,11pt,twoside,reqno]{amsart}

\usepackage[utf8]{inputenc}
\usepackage{xcolor}
\usepackage{hyperref}
\hypersetup{
    colorlinks,
    linkcolor={red!80!black},
    citecolor={green!80!black},
    urlcolor={blue!80!black},
    plainpages=false
}
\usepackage{amssymb,amsthm}
\usepackage[margin=1in]{geometry}
\usepackage{slashed}
\usepackage{doi}
\usepackage{caption}
\usepackage{adjustbox}

\newtheorem{theorem}{Theorem}[section]
\newtheorem{proposition}[theorem]{Proposition}

\theoremstyle{definition}
\newtheorem{definition}[theorem]{Definition}
\newtheorem{remark}[theorem]{Remark}

\newcommand{\tr}{\operatorname{Tr}}
\newcommand{\Ric}{\operatorname{Ric}}
\newcommand{\Id}{\operatorname{Id}}
\newcommand{\Scal}{\operatorname{Scal}}

\newcommand{\grad}{\operatorname{grad}}

\newcommand{\C}{{\mathbb{C}}}

\newcommand{\Z}{\mathbb{Z}}
\newcommand{\R}{\mathbb{R}}

\renewcommand{\epsilon}{\varepsilon}
\numberwithin{equation}{section}

\usepackage{booktabs,diagbox,hhline,multicol,multirow}
\usepackage{empheq}
\usepackage{tensor}
\usepackage[inline]{enumitem}
\usepackage{mathrsfs}
\setenumerate{font=\upshape,label=(\roman*),itemsep=0.1cm}

\newcommand{\diff}{\mathrm{d}}
\newcommand{\Y}{\mathrm{Y}}
\newcommand{\id}{\mathrm{Id}}
\newcommand{\iprod}{\mathbin{\lrcorner}}
\newcommand{\Fer}{\mathscr{F}} 
\newcommand{\Hig}{\mathscr{H}} 
\newcommand{\dirac}{\slashed{\mathrm{D}}{}}
\newcommand{\g}{\mathfrak{g}}
\renewcommand{\k}{\mathfrak{k}}
\newcommand{\h}{\mathfrak{h}}

\newcommand{\comment}[1]{}

\newcommand{\U}{\mathbf{U}}
\newcommand{\SU}{\mathbf{SU}}
\newcommand{\Spin}{\mathbf{Spin}}

\newcommand{\su}{\mathfrak{su}}
\newcommand{\spin}{\mathfrak{spin}}
\renewcommand{\u}{\mathfrak{u}}
\renewcommand{\d}{\mathfrak{d}}
\newcommand{\Ein}{\mathrm{Ein}}

\DeclareMathOperator{\Cl}{Cl}

\DeclareMathOperator{\SO}{\mathbf{SO}}

\DeclareMathOperator{\Ad}{Ad}

\makeatletter
\newsavebox\myboxA
\newsavebox\myboxB
\newlength\mylenA
\newcommand*\xoverline[2][0.75]{%
    \sbox{\myboxA}{$\m@th#2$}%
    \setbox\myboxB\null
    \ht\myboxB=\ht\myboxA%
    \dp\myboxB=\dp\myboxA%
    \wd\myboxB=#1\wd\myboxA
    \sbox\myboxB{$\m@th\overline{\copy\myboxB}$}
    \setlength\mylenA{\the\wd\myboxA}
    \addtolength\mylenA{-\the\wd\myboxB}%
    \ifdim\wd\myboxB<\wd\myboxA%
       \rlap{\hskip 0.5\mylenA\usebox\myboxB}{\usebox\myboxA}%
    \else
        \hskip -0.5\mylenA\rlap{\usebox\myboxA}{\hskip 0.5\mylenA\usebox\myboxB}%
    \fi}
\makeatother
\newcommand{\conj}[1]{\xoverline{#1}}

\usepackage[dvipsnames]{xcolor}

\title{Static spherically symmetric electroweak models with fermions and gravity}
\date{\today}
\author{Marko Sobak}
\address{University of Vienna, Faculty of Mathematics, Oskar-Morgenstern-Platz 1, 1090 Vienna, Austria}
\email{marko.sobak@univie.ac.at}
\thanks{The author gratefully acknowledges the support of the Austrian Science Fund (FWF) through the project \emph{The standard model as a geometric variational problem} (DOI: 10.55776/P36862). I am grateful to Jan Heck for useful comments on the manuscript. Many thanks also to Volker Branding and Adam Lindström for the continuous support and listening to my rants about notations and conventions.}

\begin{document}

\begin{abstract}
    In this article we provide a systematic reduction to static spherical symmetry for (classical) gauge theoretic models with structure group $\SU(2) \times \U(1)$, coupled to gravity.
    In particular, we derive the most general static spherically symmetric ansatz for the Einstein-Yang-Mills-Higgs-Dirac-Yukawa equation system.
    This model is reminiscent of the lepton sector of the Standard Model coupled to gravity. 
    The resulting total system of ordinary differential equations is rather large, so that finding global solutions is non-trivial.
    For this reason, an open-source interactive solver GUI written in C++ is provided together with the article.
\end{abstract}

\maketitle
\tableofcontents

\section{Introduction}

Gauge theory is the backbone of most modern physics, most notably the Standard Model of particle physics (SM), as well as general relativity (GR), whose associated fields are at a classical level most naturally viewed as objects living on a principal fiber bundle over the spacetime manifold.
From a physics perspective, these theories are formulated as variational problems, in the sense that they are modelled via a Lagrangian, i.e.\ an action functional. 
This makes it natural to also study the associated classical Euler-Lagrange equations, which form a system of nonlinear coupled PDEs, cf.\ \S \ref{sec-e-l-equations-general}. 

Due to the complexity of these PDEs, it is beneficial to further simplify them by imposing additional symmetries on the fields, which ultimately reduces the number of variables in the equations.
Possibly the most prominent choice of symmetry is that of spherical symmetry, which has been studied in literature in many different settings and contexts, particularly for isolated sectors of the model.
Let us give a brief and by no means exhaustive review of the relevant literature, particularly for the static setting.
The static spherically symmetric Einstein-Yang-Mills equations with gauge group $\SU(2)$ were extensively studied in the 90s, see the works of
Bartnik, McKinnon \cite{bart-mckin},
Bizoń \cite{bizon-colored-bh},
Smoller, Wasserman, Yau, McLeod \cite{smol-wass-1},
Breitenlohner, Forgacs, Maison \cite{breit-forg-mais}.
For a more extensive review, one can refer to the paper \cite{volkov-galtsov} of Volkov and Gal'tsov.
More recently, Gervalle and Volkov \cite{volkov-electroweak, volkov-electroweak-blackholes} performed an extensive numerical analysis of the Einstein-Yang-Mills-Higgs system with gauge group $\SU(2) \times \U(1)$.
Concerning the coupling of the Dirac sector to gravity and the Maxwell/Yang-Mills sector, there are the works of Finster, Smoller, Yau \cite{finster-smoller-yau-edym-solutions, finster-smoller-yau-edm-nonexistence, finster-smoller-yau-edym-bh-nonexistence}.
There have also been several works of Kain on the coupling between Einstein-Maxwell/Yang-Mills and Dirac sectors, see e.g.\ \cite{kain-edm-wormholes}.

When studying models with symmetries, it is common in the literature to employ different formalisms to derive a consistent ansatz for the fields, see e.g.\ the work of Balasin, Böhmer, Grumiller \cite{sph-sym-standard-model} for an exposition, where also various notions of spherical symmetry are covered.
What appears to be less known is that a general framework for studying gauge theoretic models with symmetries was developed by several authors at the turn of the 21st century, including Brodbeck \cite{brodbeck}, Künzle, Oliynyk \cite{kunzle}, Harnad, Shnider, Vinet \cite{harnad}, see also \S \ref{sec-symmetry-general} of this paper for a review.
This framework is incredibly convenient for two reasons.
Firstly, it provides unified definitions, in terms of group actions on principal bundles, of what it means for the geometric objects to be symmetric with respect to arbitrary group actions (not necessarily corresponding to spherical symmetry).
Secondly, it provides classifications of such symmetric objects in terms of simple algebraic conditions.
This allows one to systematically determine the most general symmetric reduction for essentially any gauge theoretic model, in a rather clean algebraic manner.

The main goal of this paper is to use this framework to obtain the most general spherically symmetric (in the above sense) gauge theoretic model including the following features:
\begin{itemize}
    \item a spherically symmetric base spacetime $(M,g)$,
    \item a non-abelian Yang-Mills sector modelling electroweak bosons,
    \item a Higgs sector with potential, modelling the mass mechanism for electroweak bosons,
    \item a Dirac sector, modelling fermions as twisted chiral spinor fields,
    \item a Yukawa coupling, modelling the mass mechanism of fermions via the Higgs field.
\end{itemize}
To the author's knowledge, no fully consistent spherically symmetric gauge theoretic model in the mathematical literature so far has checked all of these items in conjuction, in particular the last two.
In fact, the Dirac sector for most of the known spherically symmetric models in literature (in particular the above mentioned ones) either drop the chiral structure of the model and consequently also the Yukawa coupling, or they work with ansätze defined in terms of spin-weighted harmonics which would not be considered to be spherically symmetric in the context of the general framework used here (though they would be considered to be weakly spherically symmetric in the sense of \cite{sph-sym-standard-model}).

The existence of a consistent ansatz supporting all these requirements depends heavily on the representations for the associated vector bundles used in the model. 
For specific choices of the representation parameters, one can obtain all of the above listed models from literature as special cases, see \S \ref{sec-relation-to-literature} for a detailed comparison.
More importantly, on can also choose the representation parameters in such a way that the model resembles the lepton sector of the SM, albeit unfortunately does not match it exactly, as it features different $\U(1)$ charges for the fields. 

Studying a symmetry reduced model like the one presented in this work is instructive since the interactions and the finer structure of the model generally becomes simpler to see in the symmetry reduced form, in particular since all of the involved bundles, actions and fields can be written down rather explicitly.
In view of this, the model might provide a good tool for educational purposes in gauge theory, but also a playground for a more detailed study of axiomatic quantization.
That being said, we would like to emphasize again that this paper treats the model only as a classical field theory, i.e.\ we do not consider quantization in any form.

Aside from just deriving an ansatz for the model and its underlying structures, the more important purpose of this paper is to derive the associated static Euler-Lagrange equations, which reduce to ODEs as static spherical symmetry leaves only one degree of freedom.
The resulting equation system is rather large (4 complex-valued first-order equations, 2 complex-valued second-order equations, 4 real-valued second-order equations, and 3 real-valued constraints), cf.\ \S \ref{sec-total-eqs}.
Thus one may be inclined to search for further reductions for the ansatz, although the usual simplification tricks seem to fail if one does not want to lose generality and in particular drop some of the items from the above discussed checklist.
Vaguely speaking, the reason behind this is the fact that the chiral subbundles of the Dirac sector are twisted by non-isomorphic vector bundles (like in the SM), which leads to a certain degree of structural asymmetry, see \S \ref{sec-further-reductions} for a more detailed discussion.
This makes the problem of finding global solutions to the problem highly non-trivial.
On the other hand, since this model features some more intricate couplings between the fields compared to the simplified models mentioned above, it is not unreasonable to expect that this could lead also to more interesting dynamics of the solutions. 

The paper focuses on providing a detailed description of the structure of the model and the derivation of the ansatz and associated equations in detail, rather than the analysis of the equations themselves.
Nevertheless, some essential properties of the system, such as the propagation of constraints and constants of motion, are discussed for the system in full generality (i.e.\ for arbitrary valid representation parameters), as those are necessary for setting up the equations as a well-posed initial value problem.

In addition to the manuscript, an interactive solver GUI for the system of Euler-Lagrange equations is provided.
The program is open-source and can be obtained from the git repository
\begin{center}
    \href{https://codeberg.org/msobak/SphSymSM.git}{\texttt{https://codeberg.org/msobak/SphSymSM.git}}
\end{center}
or by messaging the author.
The code is written in C++ and uses an adaptive Runge-Kutta-Fehlberg 4(5) scheme to produce the solution in real time, based on the input parameters from the GUI.
More information can be found in the documentation of the linked repository.

\textbf{AI declaration.}
The mathematical contents of this work, in particular all the derivations and calculations, were done by hand by the author. 
AI was not used in the writing process.

The attached solver code is also human-written by the author.
The Claude chatbot (Sonnet 4--5) was used while writing the solver code, though mostly just for diagnosing bugs and referencing the C++ documentation, and not for automatic code generation.

\section{Gauge theories with gravity}

In this section we briefly review the geometric structure of geometric gauge theories (in particular inspired by the Standard Model of particle physics) coupled to gravity.
The goal of this section is ultimately to present the equations that will be studied in the rest of the work.
For a more comprehensive review of gauge theory, one can refer to \cite{hamilton}, while for the Einstein equations we recommend \cite{misner-thorne-wheeler}.

\subsection{Lagrangian}
\label{sec-sm-lagrangian}

Throughout the section, we assume $(M,g)$ to be an oriented and time-oriented four-dimensional spin Lorentzian manifold with the \emph{mostly-plus} signature convention. We use the same conventions as in \cite{bs25} for forms, inner products, curvature, signs of operators, etc., so we do not restate them here. 

Let $G$ be a compact Lie group equipped with an Ad-invariant positive-definite inner product on its Lie algebra $\g$.
Let $G \to P \xrightarrow\pi M$ be a principal bundle equipped with a connection $\omega \in \Omega^1(P,\g)$, and let $F_\omega \in \Omega^2(M, \Ad P)$ be its curvature form.
The connection $\omega$ encodes information about \emph{bosons}.
Let also $W$ be a complex linear space equipped with a Hermitian inner product and let $\rho : G \to \mathbf{U}(W)$ be a complex representation. We consider the associated bundle $\Hig = P \times_\rho W$ whose sections are called \textit{Higgs fields}.
The connection $\omega$ induces a covariant derivative on $\Hig$ that is denoted by $\nabla_{\omega}$.
Since the representation $\rho$ is unitary, the inner product on $W$ also induces a metric on the Higgs field bundle $\Hig$.
The \textit{Yang-Mills-Higgs (bosonic) Lagrangian density} is given by 
\begin{align*}
    (\omega, \Phi) \mapsto &-\left(|F_\omega|^2 + |\nabla_{\omega} \Phi|^2 + U(\Phi)\right)
    \\
    =& -\frac12 \langle (F_\omega)_{\mu\nu}, (F_\omega)^{\mu\nu}\rangle - \langle (\nabla_\omega\Phi)_\mu, (\nabla_\omega\Phi)^\mu \rangle - U(\Phi),
\end{align*}
where $U: \Hig \to \R$ is the smooth $G$-invariant potential
\begin{equation}\label{eq-mexican-hat-potential}
    U(\Phi) = \frac12 (\lambda^2 - |\Phi|^2)^2,
\end{equation}
for a constant $\lambda > 0$.

Now let $V$ be another complex linear space equipped with a Hermitian inner product.
Let $\chi : G \to \mathbf{U}(V)$ be a complex representation and define the associated bundle $\mathscr{S} = P \times_\chi V$. We consider the twisted spinor bundle 
\begin{equation*}
\Fer = \Sigma M \otimes \mathscr{S}.
\end{equation*}
The connection $\omega$ induces a covariant derivative on $\mathscr{S}$, which together with the spinor covariant derivative induces a \textit{twisted spinor covariant derivative} which will be denoted by $\nabla_\omega$.
The \textit{twisted Dirac operator} is the map
\begin{equation*}
    \dirac_{\omega} : \Gamma(\Fer) \to \Gamma(\Fer), \qquad 
    \dirac_{\omega}\Psi = \eta^{\mu\nu} e_\mu \cdot (\nabla_{\omega} \Psi)(e_\nu). 
\end{equation*}
Note that \(i\dirac_\omega\) is self-adjoint with respect to the \(L^2\)-inner product, since Clifford multiplication is symmetric (cf.\ \S \ref{sec-important-matrices} below for the conventions).

The four-dimensional spinor bundle splits into chiral subbundles $\Sigma M = \Sigma_+ M \oplus \Sigma_- M$.
We assume that the representation $\chi$ also splits so that $V = V_+ \oplus V_-$ is an orthogonal decomposition and $\chi = \chi_+ \oplus \chi_-$ where $\chi_\pm : G \to \mathbf{U}(V_\pm)$ are representations (in general non-isomorphic).
We also define the associated bundles $\mathscr{S}_\pm = P \times_{\chi_\pm} V_\pm$, as well as the \textit{twisted chiral/mixed spinor bundles}, respectively, as 
\begin{equation}
\label{eq-twisted-chiral-bundle}
    \Fer_+ = (\Sigma_+M \otimes \mathscr{S}_+) \oplus (\Sigma_-M \otimes \mathscr{S}_-), \qquad \Fer_- = (\Sigma_+M \otimes \mathscr{S}_-) \oplus (\Sigma_-M \otimes \mathscr{S}_+).
\end{equation}
Sections $\Psi \in \Gamma(\Fer_+)$ are called \textit{twisted chiral spinors}.
The twisted Dirac operator restricts to a mapping
\begin{equation*}
    \dirac_{\omega} : \Gamma(\Fer_\pm) \to \Gamma(\Fer_\mp).
\end{equation*}

A \textit{Yukawa map} is an $\R$-linear map $\Y : W \to \mathfrak{u}(V), \, w \mapsto \Y_w$, such that
\begin{enumerate}
    \item $\Y_w (V_\pm) \subset V_\mp$ for each $w\in W$,
    \item $\Y$ is $G$-equivariant in the sense that 
    \begin{equation}\label{eq-yukawa-equivariant}
        \Y_{\rho(g)w}(\chi(g)v) = \chi(g) \Y_w v
    \end{equation}
    for each $g\in G$, $w\in W$ and $v \in V$. 
\end{enumerate}
By the equivariance property (\ref{eq-yukawa-equivariant}), a Yukawa map induces an $\R$-tensorial map taking Higgs fields to couplings between $\Fer_\pm$, i.e.\
\begin{equation*}
    \Gamma(\Hig) \ni \Phi \mapsto \Y_\Phi : \Gamma(\Fer_\pm) \to \Gamma(\Fer_\mp),
\end{equation*}
such that $i\Y_\Phi$ is self-adjoint (see \cite{bs25} for more details).
Thus, the total Lagrangian density for fermions is given by
\begin{equation*}
    \Psi \mapsto  -\Re \langle \Psi, i(\dirac_{\omega} + \Y_\Phi) \Psi \rangle.
\end{equation*}

The total gauge-theoretic matter Lagrangian $L^G$ is thus given by
\begin{equation*}
    L^{G}: (\omega,\Phi,\Psi) \mapsto 
    -\left(\vert F_\omega \vert^2  + \vert \nabla_{\omega} \Phi \vert^2 + U(\Phi) + 
    \Re\langle \Psi, i(\dirac_{\omega} + \Y_\Phi) \Psi \rangle\right),
\end{equation*}
where $\omega \in \Omega^1(P,\g)$ is a connection, $\Phi \in \Gamma(\Hig)$ is a Higgs field, and $\Psi \in \Gamma(\Fer_+)$ is a twisted chiral spinor field.

Finally, we couple the matter Lagrangian $L^G$ with the Einstein-Hilbert Lagrangian $L^E$, to get the full Lagrangian density
\begin{align*}
    L : (g, \omega, \Phi, \Psi)
    \mapsto& \, 
    L^{E}(g) + L^{G}(g,\omega,\Phi,\Psi)
    \\
    =& \, \Scal_g - \left(\vert F_\omega \vert^2  + \vert \nabla_{\omega} \Phi \vert^2 + \frac12 (\lambda^2 - |\Phi|^2)^2 + 
    \Re\langle \Psi, i(\dirac_{\omega} + \Y_\Phi) \Psi \rangle\right),
\end{align*}
where $\Scal_g$ is the scalar curvature of $(M,g)$.
This Lagrangian models a classical coupling between gravity (i.e.\ the metric $g$) and matter (i.e.\ the fields $(\omega, \Phi, \Psi)$).

\subsection{Euler-Lagrange equations}
\label{sec-e-l-equations-general}

The matter field equations are obtained by varying the integral of the Lagrangian $L$ with respect to $(\omega, \Phi, \Psi)$, while a variation with respect to $g$ leads to the Einstein field equations.
The variations are well-known in the literature and have been presented in many places.
For a derivation using the same notations and conventions as in the present paper, we refer to \cite{bs25, bls26}.
The resulting equations are given by
\begin{subequations}
\begin{empheq}[left=\empheqlbrace]{align}
    \diff_\omega^\ast  F_\omega + \Re \langle \nabla_{\omega}\Phi \otimes \rho_*\Phi \rangle - \tfrac12\Im \langle \id\cdot\Psi \otimes \chi_*\Psi \rangle &= 0 , \label{eq-yang-mills}
    \\[0.2cm]
    \Box_{\omega} \Phi + (\lambda^2 - |\Phi|^2)\Phi  - \langle \Psi, i\Y^-  \Psi \rangle &= 0 , 
    \label{eq-higgs}
    \\[0.2cm]
    \dirac_{\omega} \Psi + \Y_\Phi\Psi &= 0, 
    \label{eq-dirac}
    \\[0.2cm] 
    -\Ein_g + \mathfrak{T}_g[\omega, \Phi, \Psi] &= 0.
    \label{eq-einstein}
\end{empheq}
\end{subequations}
Here, $\Box_\omega = -\nabla^*_\omega\nabla_\omega = \tr_g(\nabla_\omega^2)$ is the connection wave operator, and the currents are explicitly given by
\begin{align}
    \Re \langle \nabla_{\omega}\Phi \otimes \rho_*\Phi \rangle &= \Re \langle (\nabla_{\omega}\Phi)_\mu, \, \rho_*(\xi_a)\Phi \rangle \, e^\mu \otimes \xi_a,
    \label{eq-higgs-current}
    \\[0.1cm]
    \Im \langle \id\cdot\Psi \otimes \chi_*\Psi \rangle &= \frac12 \Im \langle e_\mu \cdot \Psi, \, \chi_*(\xi_a)\Psi \rangle\, e^\mu \otimes \xi_a,
    \label{eq-dirac-current}
    \\[0.1cm]
    \langle \Psi, i\Y^-  \Psi \rangle &= \frac{1}{2} \langle \Psi, (i\Y_{W_k} - \Y_{iW_k}) \Psi \rangle \, W_k,
    \label{eq-yukawa-current}
\end{align}
where $e_\mu$ is a semi-orthonormal frame for $\SO^+(TM)$, $\xi_a$ is an orthonormal frame for $\Ad P$, and $W_k$ is an orthonormal frame for $\Hig$.
Furthermore, $\Ein_g = \Ric_g - \frac12 \Scal_g g$ is the Einstein tensor and $\mathfrak{T}$ denotes the energy-momentum tensor, given by
\begin{equation*}
    \mathfrak{T} = \mathfrak{T}^{\mathrm{YM}} + \mathfrak{T}^{\mathrm{H}} + \mathfrak{T}^{\mathrm{D}},
\end{equation*}
where
\begin{align*}
\mathfrak{T}^{\mathrm{YM}}(X,Y)
&= \langle X \iprod F_\omega, Y \iprod F_\omega \rangle - \frac12 |F_\omega|^2 g(X,Y),
\\[0.1cm]
\mathfrak{T}^{\mathrm{H}}(X,Y)
&= \Re \langle (\nabla_\omega \Phi)(X), (\nabla_\omega \Phi)(Y) \rangle - \frac12 \left(|\nabla_\omega\Phi|^2 + U(\Phi)\right) g(X,Y),
\\[0.1cm]
\mathfrak{T}^{\mathrm{D}}(X,Y) 
&= \frac14 \, \Re \langle \Psi, iX \cdot (\nabla_\omega\Psi)(Y) + iY \cdot (\nabla_\omega\Psi)(X) \rangle,
\end{align*}
for vector fields $X,Y \in \Gamma(TM)$.
The Einstein-SM equations (\ref{eq-yang-mills}--\ref{eq-dirac}) are the equations we will study throughout the rest of the work.

\section{Representations}

To study the equations (\ref{eq-yang-mills}--\ref{eq-dirac}) and the model in closer detail, one of course needs to fix all the representations for the particle spaces.

\subsection{Some important matrices}
\label{sec-important-matrices}

Let $\sigma_k \in \C^{2\times 2}$ be the Pauli matrices, given by
\begin{equation}\label{eq-pauli-matrices}
    \sigma_1 = \begin{bmatrix}
        0 & 1\\
        1 & 0
    \end{bmatrix},
    \quad
    \sigma_2 = \begin{bmatrix}
        0 & -i\\
        i & 0
    \end{bmatrix},
    \quad
    \sigma_3 = \begin{bmatrix}
        1 & 0\\
        0 & -1
    \end{bmatrix}.
\end{equation}
The Pauli matrices induce a natural basis for the Lie algebra $\su(2) \cong \spin(3)$ given by
\begin{equation*}
    \tau_k = -\frac{i}{2} \sigma_k, \qquad [\tau_j,\tau_k] = \epsilon_{jk\ell} \tau_\ell.
\end{equation*}

For the Clifford relations with respect to a given inner product, we use the convention
\begin{equation*}
    v \cdot w + w \cdot v = -2\langle v,w\rangle.
\end{equation*}
We let $\Cl(r,s)$ be the Clifford algebra of the semi-Euclidean space $\R^{r,s}$ with metric $\eta$ of signature $(r,s)$, where $r$ is the number of minus signs and $s$ the number of plus signs. 
We consider the following two important cases:
For the spatial Clifford algebra $\Cl(3) = \Cl(0,3)$, we employ the representation
\begin{equation*}
    \beta_k = -i\sigma_k = 2\tau_k
\end{equation*}
on $\C^2$. It shall be useful to note that 
\begin{equation*}
    \exp(s\beta_k) = \cos(s) I_2 + \sin(s)\beta_k.
\end{equation*}
For the spacetime Clifford algebra $\Cl(1,3)$ (recall that we are using the mostly-plus convention for the spacetime metric throughout the manuscript), we employ the Weyl representation on $\C^4$ given by the matrices  
\begin{equation}\label{eq-gamma-matrices}
    \gamma_0 = \begin{bmatrix}
        0 & I\\
        I & 0
    \end{bmatrix},
    \qquad
    \gamma_k = \begin{bmatrix}
        0 & \sigma_k\\
        -\sigma_k & 0
    \end{bmatrix}.
\end{equation}
The volume element (or the fifth gamma matrix) is 
\begin{equation*}
    \gamma_5 = -i\gamma_0\gamma_1\gamma_2\gamma_3 = i\gamma^0\gamma^1\gamma^2\gamma^3 = \begin{bmatrix}
        I & 0\\
        0 & -I
    \end{bmatrix}.
\end{equation*}
The Weyl representation has the beneficial feature of emphasizing the chiral structure of spinors, and in particular we can split the components of $\psi$ as $\psi = (\psi_+, \psi_-)$, where $\psi_\pm \in \C^2$ are spinors of positive/negative chirality, i.e.\ they span the $\pm 1$-eigenspaces of $\gamma_5$. 
We will denote the standard basis for $\C^4$ by $\u_\pm, \d_\pm$ to emphasize the chirality, so that
\begin{equation}\label{eq-spinor-up-down-chiral-basis}
    \u_+ = e_1, \quad \d_+ = e_2, \quad \u_- = e_3, \quad \d_- = e_4,
\end{equation}
where $e_i$ is the standard basis for $\C^4$.
We summarize the Clifford action on this basis in Table \ref{tab:clifford}.
The inner product on $\mathbb{C}^4$ defined by the formula
\begin{equation}\label{eq-spinor-inner-product}
    \langle \psi, \phi \rangle = \psi^\dagger \gamma_0 \phi,
\end{equation}
is invariant under the induced action of $\Spin^+(1,3)$, and can thus be used to define an inner product on spinors. 
We note that this inner product has the property that Clifford multiplication is symmetric, i.e.\ 
$\langle \gamma_\mu \psi, \phi \rangle = \langle \psi, \gamma_\mu \phi \rangle$.

We recall the group embedding
\begin{equation}\label{eq-SU2-Spin13-embedding}
    \SU(2)\cong\Spin(3) \to \Spin^+(1,3), \qquad v_1\cdots v_{2n} \mapsto (ie_0 \cdot v_1) \cdots (ie_0 \cdot v_{2n}) = v_1 \cdots v_{2n},
\end{equation}
for vectors with unit norm $|v_k|=1$, where we use the complexified algebras.
In particular, with respect to the Clifford representations above, this embedding maps
\begin{align}
    \SU(2) \supset \U(1) \ni&\, \exp(s\tau_3) = \cos\left(\frac{s}{2}\right) I_2 + \sin\left(\frac{s}{2}\right) \beta_3
    \nonumber
    \\
    &= \left(\cos\left(\frac{s}{4}\right) \beta_1 + \sin\left(\frac{s}{4}\right) \beta_2\right)\left(-\cos\left(\frac{s}{4}\right) \beta_1 + \sin\left(\frac{s}{4}\right) \beta_2\right)
    \nonumber
    \\
    &\mapsto 
    \left(\cos\left(\frac{s}{4}\right) \gamma_1 + \sin\left(\frac{s}{4}\right) \gamma_2\right)\left(-\cos\left(\frac{s}{4}\right) \gamma_1 + \sin\left(\frac{s}{4}\right) \gamma_2\right)
    \nonumber
    \\
    &= \cos\left(\frac{s}{2}\right) I_4 + \sin\left(\frac{s}{2}\right) I_2 \otimes \beta_3
    = I_2 \otimes \exp(s\tau_3) \in \Spin^+(1,3),
    \label{eq-U(1)-inside-spin}
\end{align}
where we use $\otimes$ for the Kronecker product of matrices.

We define the maps $M_{\mu\nu} : \R^{1,3} \to \R^{1,3}$ via
\begin{equation}\label{eq-so13-basis}
    M_{\mu\nu}: e_\lambda \mapsto \eta_{\mu\lambda} e_\nu - \eta_{\nu\lambda} e_\mu.
\end{equation}
The set $\{M_{\mu\nu}, \; \mu < \nu\}$ then forms a basis for the Lie algebra $\mathfrak{so}(1,3)$.
The natural double covering $\rho : \Spin^+(1,3) \to \SO^+(1,3)$ induces the isomorphism $\rho_* : \mathfrak{spin}(1,3) \to \mathfrak{so}(1,3)$, which satisfies
\begin{equation}\label{eq-double-cover-lie-algebra}
    \rho_* : \tfrac12 \gamma_\mu\gamma_\nu \mapsto M_{\mu\nu},
\end{equation}
i.e.\ it maps the natural basis $\{\tfrac12 \gamma_\mu\gamma_\nu, \; \mu < \nu\}$ for $\mathfrak{spin}(1,3)$ to the natural basis for $\mathfrak{so}(1,3)$.
The action of this basis on the spinor space is displayed in Table \ref{tab:spin-action}.

\begin{table}[t]
\begin{adjustbox}{center}
\begin{minipage}[c]{0.5\textwidth}
\centering
\footnotesize
\renewcommand{\arraystretch}{1.5}
    \begin{tabular}{|c||c|c|c|c|}
        \hline
        $\gamma_\mu\psi$ & $\psi=\u_+$ & $\psi=\d_+$ & $\psi=\u_-$ & $\psi=\d_-$ \\\hline\hline
        $\mu=0$ & $\u_-$ & $\d_-$ & $\u_+$ & $\d_+$ \\\hline
        $\mu=1$ & $-\d_-$ & $-\u_-$ & $\d_+$ & $\u_+$ \\\hline
        $\mu=2$ & $-i\d_-$ & $i\u_-$ & $i\d_+$ & $-i\u_+$ \\\hline
        $\mu=3$ & $-\u_-$ & $\d_-$ & $\u_+$ & $-\d_+$ \\\hline
    \end{tabular}
    \captionof{table}{Clifford multiplication in the Weyl representation}
    \label{tab:clifford}
\end{minipage}
\begin{minipage}[c]{0.5\textwidth}
\centering
\footnotesize
\renewcommand{\arraystretch}{1.5}
\begin{tabular}{|c||c|c|c|c|}
        \hline
        $\frac12 \gamma_\mu\gamma_\nu\psi$  & $\psi=\u_+$         & $\psi=\d_+$       & $\psi=\u_-$           & $\psi=\d_-$ \\\hline\hline
        $(\mu,\nu)=(0,1)$                   & $-\frac12 \d_+$     & $-\frac12 \u_+$   & $\frac12 \d_-$        & $\frac12 \u_-$  \\\hline
        $(\mu,\nu)=(0,2)$                   & $-\frac{i}{2} \d_+$ & $\frac{i}{2}\u_+$ & $\frac{i}{2} \d_-$    & $-\frac{i}{2}\u_-$  \\\hline
        $(\mu,\nu)=(0,3)$                   & $-\frac12 \u_+$     & $\frac12 \d_+$    & $\frac12 \u_-$        & $-\frac12\d_-$ \\\hline
        $(\mu,\nu)=(1,2)$                   & $-\frac{i}{2}\u_+$  & $\frac{i}{2}\d_+$ & $-\frac{i}{2}\u_-$    & $\frac{i}{2}\d_-$
        \\\hline
        $(\mu,\nu)=(1,3)$                   & $-\frac12 \d_+$     & $\frac12 \u_+$    & $-\frac12 \d_-$       & $\frac12 \u_-$
        \\\hline
        $(\mu,\nu)=(2,3)$                   & $-\frac{i}{2}\d_+$  & $-\frac{i}{2}\u_+$& $-\frac{i}{2}\d_-$    & $-\frac{i}{2}\u_-$
        \\\hline
    \end{tabular}
    \captionof{table}{Action of the spin algebra in the Weyl representation}
    \label{tab:spin-action}
\end{minipage}
\end{adjustbox}
\end{table}

\subsection{Particle representations}
\label{sec-particle-representations}

Throughout the paper, we work with the structure group $G = \SU(2) \times \U(1)$.
We define the matrices
\begin{equation*}
    \tau_i = -\frac{i}{2}\sigma_i,
\end{equation*}
which form a basis for the Lie algebra $\mathfrak{su}(2)$.
We also need another linearly independent element to span the Lie algebra $\mathfrak{u}(1)$ of the second factor, and for this we set
\begin{equation*}
    \tau_4 = -\frac{i}{2} \in \mathfrak{u}(1) \cong i\R.
\end{equation*}
We then define an Ad-invariant inner product on $\mathfrak{su}(2) \oplus \mathfrak{u}(1)$ by demanding that $\tau_1,\ldots,\tau_4$ are orthonormal, explicitly we take
\begin{equation}\label{eq-su2-u1-inner-product}
\langle Z, W \rangle_{\mathfrak{su}(2)} = 2\tr(Z^\dagger W) = -2\tr(ZW),
\qquad
\langle z,w \rangle_{\mathfrak{u}(1)} = 4\conj{z}w = -4zw,
\end{equation}
which induce a natural inner product on $\su(2) \oplus \u(1)$ as well.

\begin{remark}
Here and throughout the rest of the manuscript we abuse the notation slightly, since the basis elements $\tau_\ell$ for $\ell =1,2,3$ should be understood as $\tau_\ell \oplus 0$, while $\tau_4$ should be understood as $0 \oplus \tau_4$, as elements of $\mathfrak{su}(2) \oplus \mathfrak{u}(1)$.
Furthermore, both of the inner products in \eqref{eq-su2-u1-inner-product} can in principle be rescaled by arbitrary positive constants, which should then be viewed as additional parameters for the model, cf.\ the discussion in \cite[\S 7.4]{hamilton}.
In this case the chosen basis elements $\tau_k$ would not be orthonormal however, so we do not pursue this generality for simplicity, since we will have enough parameters to deal with as-is.
\end{remark}

Irreducible representations of a direct product group are known to be tensor products of the irreducible representations of its factors.
Firstly, the irreducible representations of $\U(1)$ are given by integral powers, i.e.\ for each $q \in \Z$ we have the representation $\alpha_q: e^{it} \mapsto e^{iqt}$ on $\C$, or infinitesimally, $\tau_4 \mapsto q\tau_4$.
For $\SU(2)$, the irreducible representations are classified by integers $p \geq 0$, with the $n$-th representation being  $\beta_p: \SU(2) \to \U(X^{p+1})$ of dimension $p+1$ with an orthonormal basis $v_0, \dots, v_p$ of $X^{p+1}$ satisfying
\begin{align*}
    (\beta_p)_*(\tau_1)v_k &= -\frac{i}{2}\sqrt{k(p-k+1)}v_{k-1} - \frac{i}{2}\sqrt{(k+1)(p-k)}v_{k+1}\\
    (\beta_p)_*(\tau_2)v_k &= -\frac{1}{2}\sqrt{k(p-k+1)}v_{k-1} + \frac{1}{2}\sqrt{(k+1)(p-k)}v_{k+1}\\
    (\beta_p)_*(\tau_3)v_k &= -\frac{i}{2}(p-2k)v_k.
\end{align*}
We will refer to the representation $\rho_{(p,q)} = \xi_p \otimes \alpha_q$ of $\SU(2) \times \U(1)$ as the \emph{$(p,q)$-representation} of $G$, and $(p,q)$ as the \emph{representation parameters}.
Then explicitly
\begin{equation}
\label{eq-representation-actions}
\begin{split}
    (\rho_{(p,q)})_*(\tau_1)v_k &= -\frac{i}{2}\sqrt{k(p-k+1)}v_{k-1} - \frac{i}{2}\sqrt{(k+1)(p-k)}v_{k+1}\\
    (\rho_{(p,q)})_*(\tau_2)v_k &= -\frac{1}{2}\sqrt{k(p-k+1)}v_{k-1} + \frac{1}{2}\sqrt{(k+1)(p-k)}v_{k+1}\\
    (\rho_{(p,q)})_*(\tau_3)v_k &= -\frac{i}{2}(p-2k)v_k\\
    (\rho_{(p,q)})_*(\tau_4)v_k &= -\frac{i}{2}qv_k.
\end{split}
\end{equation}


\begin{definition}
\label{def-parameters}
Let $p,q,d_\pm,h_\pm$ be integers, with $p,d_\pm \geq 0$.  
\begin{itemize}[itemsep=0.1cm]
\item The \emph{Higgs vector space} $W = W_{(p,q)}$ is the $(p+1)$-dimensional space associated to the $(p,q)$-representation, with orthonormal basis $w_k \in W$,
\item The \emph{$\pm$-fermion vector space} $V_\pm = V_{(d_\pm, h_\pm)}$ is the $(d_\pm+1)$-dimensional space associated to the $(d_\pm, h_\pm)$-representation, with orthonormal basis $v^{\pm}_k \in V_\pm$. We also denote the total space by $V = V_+ \oplus V_-$.
\end{itemize}
\end{definition}

\subsection{Yukawa coupling}

Let us now derive an ansatz for a Yukawa map to be used in the Yukawa coupling.
We recall from \S \ref{sec-sm-lagrangian} that this is an $\R$-linear map $\Y:W\to\u(V)$ such that $\Y_w(V_\pm) \subset V_{\pm}$ and satisfies the equivariance law \eqref{eq-yukawa-equivariant}.
First, we note that we can decompose any Yukawa map as
\begin{equation*}
    \Y_w v = \Y_w (v^+\oplus v^-) = \Y_w^- v^- \oplus \Y_w^+ v^+,
    \qquad
    \Y_w^- = -(\Y_w^+)^\dagger
\end{equation*}
where $\Y_w^\pm : V_\pm \to V_\mp$ and the last condition ensures that $\Y_w \in \u(V)$.
In particular it suffices to determine $\Y_w^+ : V_+ \to V_-$.
For simplicity, we can assume that $w\mapsto \Y_w^+$ is $\C$-linear in $w$ (note that the full map $\Y_w$ is then still just $\R$-linear in $w$).
Then $\Y^+$ can be identified with an element 
\begin{equation*}
    \Y^+ \in W^* \otimes V_+^* \otimes V_-,
\end{equation*}
and the equivariance condition \eqref{eq-yukawa-equivariant} is equivalent to requiring that $\Y^+ \in \ker(\xi)$, where $\xi$ is the induced representation of $\g$.
We write
\begin{equation*}
    \Y^+ = \Y_{k,\ell}^{j} \, w^k \otimes v_+^{\ell} \otimes v^-_{j},
\end{equation*}
where $w_k,v^+_a,v^-_b$ are bases of $W, V_+, V_-$ respectively, and $w^k, v_+^a, v_-^b$ are their dual bases.
Then by \eqref{eq-representation-actions}, the condition $\Y^+ \in \ker(\xi)$ is solved as follows:
\begin{enumerate}[itemsep=0.1cm]
\item $\xi(\tau_4)\Y^+ = 0$ if and only if $q + h_+ - h_- = 0$,
\item $\xi(\tau_3)\Y^+ = 0$ if and only if $\Y_{k,\ell}^j =0$ whenever $p+d_+-d_- \not= 2(k+\ell-j)$,
\item $\xi(\tau_1)\Y^+ = \xi(\tau_2)\Y^+ = 0$ if and only if the components satisfy the recurrence relations
\begin{align*}
    0&= \sqrt{(k+1)(p-k)}\Y_{k+1,\ell}^j + \sqrt{(\ell+1)(d_+-\ell)}\Y_{k,\ell+1}^j - \sqrt{j(d_--j+1)}\Y_{k,\ell}^{j-1},\\
    0&= \sqrt{k(p-k+1)}\Y_{k-1,\ell}^j + \sqrt{\ell(d_+-\ell+1)}\Y_{k,\ell-1}^j - \sqrt{(j+1)(d_--j)}\Y_{k,\ell}^{j+1}.
\end{align*}
\end{enumerate}
Thus one only needs to solve the recurrence relations from (iii).
We note that (ii) implies that all three terms in the first (resp.\ second) equation vanish unless $p+d_+-d_-= 2(k+\ell-j+1)$ (resp.\ $p+d_+-d_-=2(k+\ell-j-1)$).
Assuming $p+d_+-d_-=0$ for simplicity (though not without loss of generality), the recurrence relations can be solved by induction on $k$, and the general solution is spanned by
\begin{equation}\label{eq-yukawa-matrix}
    \Y_{k,\ell}^j = 
    \begin{cases}
        \sqrt{\frac{\binom{p}{k}\binom{d_+}{\ell}}{\binom{d_-}{j}}}, & \text{if } j=k+\ell,\\
        0, & \text{otherwise}.
    \end{cases}
\end{equation}
The corresponding Yukawa map is then given by
\begin{equation*}
    \Y_w^+ v_\ell^+ = \lambda  \langle w_k, w \rangle \Y_{k,\ell}^j \, v_j^-,
    \qquad
    \Y_w^- v_j^- = -\conj{\lambda}\conj{\langle w_k, w\rangle} \Y_{k,\ell}^j \, v_\ell^+,
\end{equation*}
for a fixed $\lambda \in \C$, where summation over repeated indices is implied.
Since in our setting, $i\Y_w$ is self-adjoint, it is natural to set $\lambda = -im_\Y$ for a constant $m_\Y$, which is to be interpreted as the mass of the corresponding particles.

\begin{definition}\label{def-yukawa}
    Using the notation of Definition \ref{def-parameters}, assume that the representation parameters satisfy $q+h_+-h_-=0$ and $p+d_+-d_- = 0$,
    and assume $\Y_{k,\ell}^j$ is as in (\ref{eq-yukawa-matrix}).
    Then the map $w \mapsto \Y_w = \Y_w^- \oplus \Y_w^+$ given by
    \begin{equation*}
        \Y_w^+ v_\ell^+ = -im_\Y \langle w_k, w \rangle \Y_{k,\ell}^j \, v_j^-,
        \qquad
        \Y_w^- v_j^- = -im_\Y \conj{\langle w_k, w\rangle} \Y_{k,\ell}^j \, v_\ell^+
    \end{equation*}
    is said to be the \emph{Yukawa coupling of mass $m_\Y$}.
\end{definition}

\section{Static spherical symmetry}
\label{sec-static-spherical-symmetry}

\subsection{Symmetric principal bundles and associated objects}
\label{sec-symmetry-general}

Let us briefly recall the general notion of $K$-symmetry for manifolds, principal bundles, and associated objects. Here we only present a slightly simplified version of the definitions and classifications since they are sufficient for this paper. For the proofs and more details we refer to \cite{brodbeck, kunzle, harnad}, see also \cite{dym-adam-marko} for a somewhat more comprehensive review. 

Let $K$ and $G$ be compact Lie groups.
\begin{itemize}
    \item We say that a spacetime $(M, g)$ is $K$-symmetric if $K$ acts on $(M,g)$ by isometries. With a minor loss of generality, we can then assume that
    $M \cong N \times (K/H)$ for some closed subgroup $H \subset K$. We will also assume for simplicity of exposition that $N$ is contractible.
    \item A principal $G$-bundle over a $K$-symmetric manifold $M$ is said to be $K$-symmetric if the action of $K$ on $M$ lifts to an effective (left) action on $P$ by bundle automorphisms (i.e.\ it commutes with the right action of $G$). Such principal $G$-bundles are classified by conjugacy classes of homomorphisms $\lambda : H \to G$. In particular, the bundle $P_{\lambda}$ corresponding to a given homomorphism $\lambda$ can be constructed as the quotient
    \begin{equation}
    \label{eq-k-symmetric-pfb}
        P = N \times (K \times G) \mathbin{/} H, 
        \qquad (k,g) \sim (kh, \lambda(h^{-1})g),
    \end{equation}
    for $k \in K$, $g \in G$, $h \in H$, and where $P$ is
    equipped with the natural projection, right action by $G$, and left action by $K$.
    \item A connection $\omega$ on a $K$-symmetric principal bundle is said to be $K$-symmetric if it is invariant under the left action of $K$, i.e.\ $\ell_k^* \omega = \omega$ for $k\in K$. For the bundle $P_\lambda$ from above, the $K$-symmetric connections are classified by pairs $(\Lambda, \eta)$, where $\Lambda : N \to \text{Hom}(\k, \g)$ is a smooth map satisfying the Wang conditions
    \begin{equation*}
        \Lambda \circ \Ad_h = {\Ad_{\lambda(h)}} \circ \Lambda, \quad h \in H,
        \qquad\text{and}\qquad
        \Lambda|_{\mathfrak{h}} = \lambda_*,
    \end{equation*}
    (here $\k$, $\h$, and $\g$ are the Lie algebras of $K$, $H$, and $G$ respectively) and $\eta$ is a $\g$-valued one-form on $N$ invariant under the adjoint action of $\lambda(H) \subset G$.
    Explicitly, a choice of local section $\sigma : U\subset K/H \to K$ induces a local section
    \begin{equation*}
        \tilde\sigma : N \times U \subset M \to P, \qquad \tilde\sigma = (y, [\sigma, e]), \quad y \in N,
    \end{equation*}
    where $e\in G$ is the identity,
    under which the $K$-symmetric connection corresponding to $(\Lambda, \eta)$ pulls down to
    \begin{equation*}
        \tilde\sigma^*\omega = \Lambda \circ \sigma^\ast \mu_K + \eta,
    \end{equation*}
    where $\mu_K$ is the Maurer-Cartan form of $K$.
    \item If $\rho : G \to \mathbf{GL}(W)$ is a representation, then one can form the associated vector bundle $E = P\times_\rho W$. 
    Recall that sections $\Phi \in \Gamma(E)$ can be identified with maps $\phi : P \to W$ that are equivariant in the sense that $r_g^*\phi = \rho(g^{-1})\phi$, via $\Phi_{\pi(p)} = [p,\phi(p)]$.
    We say that a section $\Phi \in \Gamma(E)$ is $K$-symmetric if the associated map $\phi : P \to W$ is $K$-invariant in the sense that $\ell_k^* \phi = \phi$ for $k \in K$, or equivalently if the section $\Phi$ itself is (left) $K$-equivariant.
    For the bundle $P_\lambda$ from above, the $K$-symmetric sections $\Phi$ of the associated vector bundle $P \times_\rho W$ are classified by maps $N \to W$ satisfying the invariance condition
    \begin{equation}
    \label{eq-invariant-section-condition}
        \rho(\lambda(h)) \phi = \phi, \quad h \in H,
    \end{equation}
    or equivalently $\phi \in \mathrm{ker} (\rho_* \circ \lambda_*)|_{\h}$, provided that $H$ is connected.
    Explicitly, the associated symmetric section is given by
    \begin{equation*}
        \Phi_{(y, kH)} = [(y, [k, e]), \phi(y)], \qquad y \in N, \; k \in K,
    \end{equation*}
    where $e\in G$ is the identity.
\end{itemize}
This general theory is extremely useful for systematically determining the most general symmetric ansätze for all gauge theoretic fields, as we will see throughout this section. 
That being said, once the general ansatz is obtained, one can in principle forget about this theory and skip directly ahead to the calculations of the operators appearing in the equation system that one wishes to study, in our case (\ref{eq-yang-mills}--\ref{eq-einstein}).

\subsection{Spherical symmetry}

If the symmetry group $K = \SU(2)$ and $H = \U(1)$, the manifolds/bundles/sections are said to be \emph{spherically symmetric}.
In particular, the manifold then has topology
\begin{equation*}
    M = N \times (\SU(2) / \U(1)) = N \times \mathbb{S}^2.
\end{equation*}
In this context, we would like to recall the Hopf fibration
\begin{equation*}
    \pi_{\text{Hopf}} : \mathbb{S}^3 \cong \SU(2) \to \mathbb{S}^2 \cong \SU(2)/\U(1), \qquad (z,w) \mapsto (2\conj{z} w, |z|^2 - |w|^2)
\end{equation*}
where one identifies $\mathbb{S}^3 \subset \C^2$ with $\SU(2)$ via
\begin{equation*}
    (z,w) \leftrightarrow \begin{bmatrix}
        z & -\conj{w}\\
        w & \conj{z}
    \end{bmatrix},
    \qquad
    |z|^2+|w|^2 = 1.
\end{equation*}
For later purposes, it will be convenient to note that the map
\begin{equation}\label{eq-hopf-section}
    \sigma : (\theta,\varphi) \mapsto \exp(\varphi \tau_3) \exp(\theta \tau_2)
    = 
    \begin{bmatrix}
    e^{-\frac{i\varphi}{2}} \cos\frac{\theta}{2} & -e^{-\frac{i\varphi}{2}}\sin\frac{\theta}{2}  \\ 
    e^{\frac{i\varphi}{2}}\sin\frac{\theta}{2} & e^{\frac{i\varphi}{2}} \cos\frac{\theta}{2} 
    \end{bmatrix} 
\end{equation}
defines a local section of the bundle $\mathbb{S}^3 \to \mathbb{S}^2$ with respect to the standard spherical coordinates $(\theta,\varphi)$ defined on $\mathbb{S}^2$ minus the poles.
A calculation shows that
\begin{align}
    \nonumber
    \Ad_{\sigma(\theta,\varphi)} (\tau_1)
    &= \cos\theta\cos\varphi \, \tau_1 + \cos\theta\sin\varphi \, \tau_2 - \sin\theta \, \tau_3,\\
    \label{eq-Ad-cartesian-spherical}
    \Ad_{\sigma(\theta,\varphi)} (\tau_2)
    &= -\sin\varphi \,\tau_1 + \cos\varphi \,\tau_2,\\
    \nonumber
    \Ad_{\sigma(\theta,\varphi)} (\tau_3)
    &= \sin\theta\cos\varphi \, \tau_1 + \sin\theta\sin\varphi \, \tau_2 + \cos\theta \, \tau_3.
\end{align}
In particular, the image of $\sigma(\theta,\varphi)$ under the natural two-fold covering $\SU(2) \to \SO(3)$ (i.e.\ the adjoint map) provides the relation between the standard three-dimensional Cartesian basis and the spherical basis for $\su(2) \cong \R^3$.

\subsection{Spacetime and Einstein sector}

We consider a spacetime of the form 
\begin{equation*}
    M = \R \times I \times \mathbb{S}^2,
\end{equation*}
for some connected interval $I \subset \R$ (so that in the notation of the previous sections $N = I \times \R$).
Here, the first factor describes time  $t \in \R$ and the second factor describes a general radial coordinate $s \in I$.
Since we are working in the \emph{static} setting, our variables will generally depend only on $s$ and not on $t$.

We endow the manifold $M$ with a static spherically symmetric metric of the form
\begin{equation}
\label{eq-static-sph-sym-metric-general}
    g = -e^{2\alpha(s)} \diff t^2 + e^{2\beta(s)} \diff s^2 + r(s)^2 g_{\mathbb{S}^2},
\end{equation}
for some functions $\alpha, \beta, r : I \to \R$ where $r > 0$.
Note that only two of these three metric coefficients are truly independent, since one still has the freedom to choose the coordinate $s$.
There are several natural choices, but we defer this discussion to \S \ref{sec-coordinate-choices}.

We can define a natural (so-called spherical) local semi-orthonormal frame $\{\mathbf{e}_\mu\}$ for $M$ by setting
\begin{equation}\label{eq-frame-sph}
    \mathbf{e}_t = e^{-\alpha} \partial_t,
    \qquad
    \mathbf{e}_s = e^{-\beta} \partial_s,
    \qquad
    \mathbf{e}_\theta = \frac{1}{r} \partial_\theta,
    \qquad
    \mathbf{e}_\varphi = \frac{1}{r\sin\theta} \partial_\varphi,
\end{equation}
where $(\theta,\varphi)$ are the spherical coordinates, which are well-defined on $\mathbb{S}^2$ minus the poles, so the frame $\mathbf{e}_\mu$ is also defined on $\R \times I\times\mathbb{S}^2$ minus the poles. Its dual frame $\{\mathbf{e}^\mu\}$ is
\begin{equation}
    \mathbf{e}^t = e^{\alpha}\, \diff t,
    \qquad
    \mathbf{e}^s = e^{\beta} \,\diff s,
    \qquad
    \mathbf{e}^\theta = r\, \diff\theta,
    \qquad
    \mathbf{e}^\varphi = r\sin\theta\,\diff\varphi.
\end{equation}

The components of the Levi-Civita connection can then be computed as shown in Table \ref{tab:levi-civita}.
We can view the Levi-Civita connection as a connection $\omega^{\mathrm{LC}}$ on the frame bundle $\SO^+(M,g)$. 
If $\mathbf{e}$ is a section of $\SO^+(M,g)$, we have (cf.\ \cite[Beispiel 3.5]{baum})
\begin{equation*}
    (\mathbf{e}^*\omega^{\mathrm{LC}})(\mathbf{e}_\mu) = \sum_{\nu < \lambda} g(\nabla_{\mathbf{e}_\mu} \mathbf{e}_\nu, \mathbf{e}_\lambda) M^{\nu\lambda} = \frac12 g(\nabla_{\mathbf{e}_\mu} \mathbf{e}_\nu, \mathbf{e}_\lambda) M^{\nu\lambda},
\end{equation*}
where $M_{\mu\nu}$ are the matrices \eqref{eq-so13-basis}, indices are raised with respect to $\eta^{\mu\nu}$, and in the last equality summation over all indices is implied (hence the prefactor $1/2$).
Then with respect to the spherical frame (\ref{eq-frame-sph}), where the variables are indexed as%
\footnote{The ordering is not chosen randomly, and the naturality of this choice will become apparent later.}
$(0,1,2,3) \leftrightarrow (t,\theta,\varphi,s)$, we can write
\begin{equation}
\label{eq-levi-civita}
    \mathbf{e}^*\omega^{\mathrm{LC}} = -e^{-\beta} \alpha' \,\mathbf{e}^t \otimes M_{03} + e^{-\beta}\,\frac{r'}{r} ( \mathbf{e}^\theta \otimes M_{31} - \mathbf{e}^\varphi \otimes M_{23} ) + \frac{\cot\theta}{r} \, \mathbf{e}^\varphi \otimes M_{12}.
\end{equation}

\begin{table}[t]
    \centering
    \renewcommand{\arraystretch}{1.5}
    \begin{tabular}{|c||c|c|c|c|}
        \hline
        $\nabla_{\mathbf{e}_\mu} \mathbf{e}_\nu$  & $\nu=t$ & $\nu=s$ & $\nu=\theta$ & $\nu=\varphi$ \\\hline\hline
        $\mu=t$ & $e^{-\beta}\alpha'\, \mathbf{e}_s$ & $e^{-\beta}\alpha' \, \mathbf{e}_t$ & 0 & 0  
        \\\hline
        $\mu=s$ & 0 & 0 & 0 & 0  
        \\\hline
        $\mu=\theta$ & 0 & $e^{-\beta}\frac{r'}{r}\,\mathbf{e}_\theta$ & $- e^{-\beta}\frac{r'}{r}\, \mathbf{e}_s$ & 0  
        \\\hline
        $\mu=\varphi$ & 0  & $e^{-\beta}\frac{r'}{r} \, \mathbf{e}_\varphi$  & $\frac{\cot\theta}{r}\, \mathbf{e}_\varphi$ &  $- e^{-\beta}\frac{r'}{r}\, \mathbf{e}_s - \frac{\cot\theta}{r}\, \mathbf{e}_\theta$ \\\hline
    \end{tabular}
    \vskip0.25cm
    \caption{The covariant derivatives
    $\nabla_{\mathbf{e}_\mu} \mathbf{e}_\nu$ for the frame (\ref{eq-frame-sph}).}
    \label{tab:levi-civita}
\end{table}

Using Table \ref{tab:levi-civita}, one also computes the non-zero components of the Ricci tensor to be (cf.\ \cite[Exercise 14.16]{misner-thorne-wheeler})
\begin{align*}
    \Ric_g(\mathbf{e}_t, \mathbf{e}_t) 
    =&\, e^{-2\beta}\left[\alpha'' + \left( \alpha' - \beta' + \frac{2r'}{r} \right)\alpha'\right],
    \\[0.2cm]
    \Ric_g(\mathbf{e}_s, \mathbf{e}_s)
    =&\, - e^{-2\beta} \left[ \alpha'' + (\alpha' - \beta')\alpha' + \frac{2}{r}\left(r'' - \beta' r' \right) \right],
    \\[0.2cm]
    \Ric_g(\mathbf{e}_\theta, \mathbf{e}_\theta)
    = \Ric_g(\mathbf{e}_\varphi, \mathbf{e}_\varphi)
    =&\, \frac{1}{r^2}  -  e^{-2\beta} \left[ \frac{r''}{r} + \left( \alpha' - \beta' + \frac{r'}{r} \right)\frac{r'}{r} \right]
\end{align*} 
the scalar curvature is
\begin{align*}
    \Scal_g
    =&\, \frac{2}{r^2} - 2e^{-2\beta} \left[ \alpha'' + (\alpha' -\beta')\left(\alpha' + \frac{2r'}{r}\right) + \frac{2r''}{r} + \frac{r'{}^2}{r^2} \right],
\end{align*}
and the Einstein tensor $\Ein_g = \Ric_g - \frac12 \Scal_g g$ is
\begin{equation}
\label{eq-einstein-tensor-sph-sym}
\begin{split}
    \Ein_g 
    =&\, \left[ \frac{1}{r^2} - e^{-2\beta} \left( \frac{2r''}{r} + \left(\frac{r'}{r} - 2\beta'\right)\frac{r'}{r} \right) \right] \mathbf{e}^t \otimes \mathbf{e}^t
    \\[0.1cm]
    &+ \left[-\frac{1}{r^2} + e^{-2\beta}\left(2\alpha' + \frac{r'}{r}\right)\frac{r'}{r} \right] \mathbf{e}^s \otimes \mathbf{e}^s
    \\[0.1cm]
    & + e^{-2\beta} \left[ \alpha'' + \frac{r''}{r} + (\alpha' - \beta')\left(\alpha' + \frac{r'}{r}\right) \right]
    (\mathbf{e}^\theta \otimes \mathbf{e}^\theta + \mathbf{e}^\varphi \otimes \mathbf{e}^\varphi).
\end{split}
\end{equation}

\subsection{Principal bundles}
\label{sec-sph-sym-principal-bundles}

As hinted in \S \ref{sec-symmetry-general}, the spherically symmetric principal $G$-bundles are classified by homomorphisms $\lambda : \U(1) \to G$.
In our setting, we are dealing with two structure groups, namely the electroweak $G = \SU(2) \times \U(1)$ for the particles, and also the spin group $G = \Spin^+(1,3)$. We consider each of these separately.

\subsubsection{Particle bundle}
For the particle structure group $G = \SU(2) \times \U(1)$, the conjugacy classes of homomorphisms are given by
\begin{align*}
    \lambda = \lambda_{(n,m)} : \U(1) \subset \SU(2) &\to G = \SU(2) \times \U(1),
    \\[0.2cm]
    \begin{bmatrix}
        e^{it} & 0\\
        0 & e^{-it}
    \end{bmatrix}
    &\mapsto
    \left(
    \begin{bmatrix}
        e^{int} & 0\\
        0 & e^{-int}
    \end{bmatrix},
    e^{imt}
    \right)
\end{align*}
where $n,m\in\mathbb{Z}$ and without loss of generality $n\geq 0$.
Note that, with respect to the chosen basis matrices $\tau_k$ for $\su(2)\oplus\u(1)$ (cf.\ \S \ref{sec-particle-representations}), we can then write%
\footnote{The simplicity of this formula is one of the main reasons for choosing $\tau_4 = -i/2$, rather than just choosing it to just be $i$, since we have chosen the basis $\tau_\ell = -i\sigma_\ell/2$, $\ell=1,2,3$, as the natural basis for $\su(2)$.}
\begin{equation*}
    \lambda_{(n,m)} : \exp(t \tau_3) \mapsto \exp(nt\tau_3 + mt\tau_4) \cong (\exp(nt\tau_3), \, \exp(mt\tau_4)).
\end{equation*}
We will call $(n,m)$ the \emph{structure parameters} and the associated principal bundle from the classification by
\begin{equation*}
    P_{(n,m)} = N \times (\SU(2) \times G) \mathbin{/} \U(1),
\end{equation*}
where the quotient is taken with respect to $\lambda_{(n,m)}$ as described in (\ref{eq-k-symmetric-pfb}).

\subsubsection{Spin bundle} 
\label{sec-spin-bundle}
The spin group $G = \Spin^+(1,3)$ is not compact, so the aforementioned classification is not necessarily complete. Nevertheless, one can still obtain a spherically symmetric principal $\Spin^+(1,3)$-bundle via any homomorphism $\mu : \U(1) \to \Spin^+(1,3)$ by the same construction as in the classification. A natural choice here is obtained by using the standard identification $\SU(2) = \Spin(3)$ together with the standard embedding $\Phi : \SU(2) = \Spin(3)\to\Spin^+(1,3)$ described in \eqref{eq-SU2-Spin13-embedding}, i.e.\ we set 
\begin{align*}
    \mu : \U(1) \subset \SU(2) &\to \Spin^+(1,3)
    \\
    \exp(t\tau_3) &\mapsto \Phi(\exp(t\tau_3)).
\end{align*}
This provides a principal $\Spin^+(1,3)$-bundle $\Spin^+(M,g)$ defined by factoring in the same way as in (\ref{eq-k-symmetric-pfb}), i.e. we set
\begin{equation*}
    \Spin^+(M,g) = N \times (\SU(2) \times \Spin^+(1,3)) \mathbin{/} \U(1),
\end{equation*}
where the quotient is taken with respect to the homomorphism $\mu$.
We now only need to show that this is a spin structure, i.e.\ we need to construct an equivariant double covering $\mathscr{S} : \Spin^+(M,g) \to \SO^+(M,g)$ of the orthochronous frame bundle. To this end, we let $\rho : \Spin^+(1,3) \to \SO^+(1,3)$ be the standard group double covering, and we define a trivialization
\begin{equation*}
    E : \R^{1,3} \to T_{(t,s, kH)} M
\end{equation*}
given by (with respect to standard spherical coordinates $(\theta,\varphi)$)
\begin{align*}
    \gamma_0 &\mapsto \mathbf{e}_t = e^{-\alpha(s)}\partial_t
    \\
    \sin\theta\cos\varphi \, \gamma_1 + \sin\theta\sin\varphi \, \gamma_2 + \cos\theta \, \gamma_3 &\mapsto \mathbf{e}_s = e^{-\beta(s)} \partial_s,
    \\
    \cos\theta\cos\varphi \, \gamma_1 + \cos\theta\sin\varphi \, \gamma_2 - \sin\theta \, \gamma_3 &\mapsto \mathbf{e}_\theta = \frac{1}{r(s)} \partial_\theta,
    \\
    -\sin\varphi \, \gamma_1 + \cos\varphi \, \gamma_2 &\mapsto \mathbf{e}_\varphi = \frac{1}{r(s)\sin\theta} \partial_\varphi,
\end{align*}
where we identify $\gamma_\mu$ with the standard basis vectors of $\R^{1,3}$. Note that this map is globally defined (assuming $r>0$) and extends across the poles of the spherical coordinates. Then the desired double covering is given by
\begin{equation*}
    \mathscr{S} : [(t,s), k, g] \mapsto \left((t, s, kH), E_{(t, s, kH)} \circ \rho(\Phi(k)g) \right).
\end{equation*}
Note that if $\sigma(\theta, \varphi)$ is the standard Hopf section, then we get a natural spin frame 
\begin{equation*}
    \varepsilon(t,s, \theta,\varphi) = ((t,s), [\sigma(\theta,\varphi), e]),
\end{equation*}
as can easily be checked.
Using \eqref{eq-Ad-cartesian-spherical}, one can show that the associated semi-orthonormal (ordered) frame $\mathscr{S}\circ\varepsilon$ is given by the (ordered) spherical frame $(\mathbf{e}_t, \mathbf{e}_\theta, \mathbf{e}_\varphi, \mathbf{e}_s)$, which is also the reason for the ordering of the indices in \eqref{eq-levi-civita}.
In particular, with respect to this spin frame, a spinor field $\Psi \in \Gamma(\Sigma M)$ can be written (locally) as $\Psi = [\varepsilon, \psi]$ for a $\C^4$-valued function $\psi$, and Clifford multiplication operates as
\begin{equation*}
    \mathbf{e}_t \cdot \Psi = [\varepsilon, \, \gamma_0\psi], \quad
    \mathbf{e}_\theta \cdot \Psi = [\varepsilon, \, \gamma_1\psi], \quad
    \mathbf{e}_\varphi \cdot \Psi = [\varepsilon, \, \gamma_2\psi], \quad
    \mathbf{e}_s \cdot \Psi = [\varepsilon, \, \gamma_3\psi],
\end{equation*}
where we use the Weyl representation already described in \S \ref{sec-important-matrices}.
Furthermore, from (\ref{eq-double-cover-lie-algebra}, \ref{eq-levi-civita}), we see that the spin Levi-Civita connection is given in this frame by
\begin{align}
    \nonumber
    \epsilon^*\Gamma =&\, -e^{-\beta}\alpha' \, \mathbf{e}^t \otimes \frac12 \gamma_0\gamma_3 + \frac{\cot\theta}{r} \, \mathbf{e}^\varphi \otimes \frac12 \gamma_1\gamma_2
    \\
    \label{eq-spin-levi-civita}
    &\, + e^{-\beta}\,\frac{r'}{r} \left( \mathbf{e}^\theta \otimes \frac12\gamma_3\gamma_1 - \mathbf{e}^{\varphi} \otimes \frac12\gamma_2\gamma_3 \right).
\end{align}
We would like to note here that this spin structure coincides with the standard spin structure one uses on spherically symmetric spacetimes, albeit this is rarely described as explicitly as we do here.

\subsection{Yang-Mills sector}
\label{sec-yang-mills-ansatz}

Let us now derive the most general ansatz for the connection on $P_{(n,m)}$.
This ansatz is well-known in literature, in particular for $(n,m)=(1,0)$ it is more commonly known as the \emph{Witten ansatz}.
Nevertheless, we include some details of the derivation for completeness. 

\subsubsection{Derivation of the ansatz}
The symmetric connections on $P_{(n,m)}$ are classified by $(\Lambda,\eta)$ as described in \S \ref{sec-symmetry-general}.
Using the Wang conditions and the fact that $\tau_4$ belongs to the center of $\mathfrak{su}(2)\oplus\mathfrak{u}(1)$, one easily shows that
\begin{equation*}
    \Lambda(\tau_1) = -n[\tau_3, \Lambda(\tau_2)],\qquad
    \Lambda(\tau_2) = n[\tau_3, \Lambda(\tau_1)],\qquad
    \Lambda(\tau_3) = n\tau_3 + m\tau_4.
\end{equation*}
Writing out $\Lambda(\tau_1), \Lambda(\tau_2)$ in the basis $\tau_i$, the first two identities imply that 
\begin{equation*}
    \Lambda(\tau_1) = 
    \begin{cases}
        0, & n\not=1,\\
        v\tau_1 - w\tau_2, & n=1
    \end{cases},
    \qquad
    \Lambda(\tau_2) = 
    \begin{cases}
        0, & n\not=1,\\
        w\tau_1 + v\tau_2, & n=1
    \end{cases},
\end{equation*}
for real $v,w$.
On the other hand, for $\eta$, we note that $\tau_3$ and $\tau_4$ span the invariant subspace of the adjoint action of $\lambda_{(n,m)}(\U(1))$, so that one has
\begin{equation*}
    \eta = (a \, \diff t + c  \, \diff s) \otimes \tau_3 + (b \, \diff t + d\, \diff s) \otimes \tau_4,
\end{equation*}
for any real $a,b,c,d$.
The pair $(\Lambda,\eta)$ then defines a symmetric connection on $P_{(n,m)}$ for any choice of coefficients $a,b,c,d,v,w$ depending on $(t,s) \in \R \times I = N$.
To write this connection down locally, we consider the section $\sigma$ of the Hopf bundle described in (\ref{eq-hopf-section}).
One calculates
\begin{equation*}
    \sigma^\ast \mu_{\SU(2)} = \sigma^{-1}\diff\sigma 
    = \diff\theta \otimes \tau_2 - \sin\theta\, \diff\varphi \otimes \tau_1 + \cos\theta\, \diff\varphi \otimes \tau_3,
\end{equation*}
and therefore
\begin{align*}
    \Lambda \circ \sigma^\ast \mu_{\SU(2)}
    =&\,
    v(\diff\theta \otimes \tau_2 - \sin\theta \, \diff\varphi \otimes \tau_1)
    + w(\diff\theta \otimes \tau_1 + \sin\theta \, \diff\varphi \otimes \tau_2)\\[0.1cm]
    &+ \cos\theta \, \diff\varphi \otimes (n\tau_3 + m\tau_4).
\end{align*}
It follows that, with respect to the induced section $\tilde{\sigma} = ((t,s), [\sigma, e])$, the most general static spherically symmetric connection on $P_{(n,m)}$ can be written as
\begin{align*}
    \tilde{\sigma}^*\omega =&\, (a \, \diff t + c  \, \diff s) \otimes \tau_3 + (b \, \diff t + d\, \diff s) \otimes \tau_4
    + \cos\theta\, \diff\varphi \otimes (n\tau_3 + m\tau_4)
    \\[0.1cm]
    &+ w(\diff\theta \otimes \tau_1 + \sin\theta\,\diff\varphi \otimes \tau_2) + v (\diff\theta \otimes \tau_2 - \sin\theta\,\diff\varphi \otimes \tau_1),
\end{align*}
where $w,v = 0$ if $n\not=1$.

One can without loss of generality either make $c=d=0$ (resp.\ $a=b=0$) by a gauge transformation that is $s$-dependent (resp.\ $t$-dependent). Since only $s$-dependent gauge transformations are compatible with the staticity assumption, it is common at this stage to take $c = d =0$, and we will do so as well.

It is also common in pure $\SU(2)$ (Einstein-)Yang-Mills theory to assume that $v = 0$, which can be done without loss of generality in such simplified theories (i.e.\ without other matter fields than the Yang-Mills field) since the pure Yang-Mills equation for this ansatz implies that $v$ and $w$ must be constant multiples of one another, and then a constant gauge transformation takes $v \mapsto 0$ (see also \S \ref{sec-further-reductions}). However, when more fields are involved as in the setting of this paper, then the Yang-Mills equation instead just provides a certain constraint between $v$ and $w$ (cf.\ \S \ref{sec-constraints-invariances}) and in particular one seemingly cannot set $v = 0$ without losing generality.

\subsubsection{Operator formulas}
Since we do most of our computations with respect to frames, rather than coordinates, it is convenient to write the connection form as
\begin{align}
    \nonumber
    \tilde{\sigma}^*\omega =&\, \mathbf{e}^t \otimes e^{-\alpha}(a\tau_3 + b\tau_4)
    + \mathbf{e}^\varphi \otimes \frac{\cot\theta}{r}\,(n\tau_3 + m\tau_4)
    \\
    \label{eq-sph-sym-connection}
    &+ \frac{\Re(z)}{r} \,(\mathbf{e}^\theta \otimes \tau_1 + \mathbf{e}^\varphi \otimes \tau_2) + \frac{\Im(z)}{r} \,(\mathbf{e}^\theta \otimes \tau_2 - \mathbf{e}^\varphi \otimes \tau_1),
\end{align}
where we have also defined
\begin{equation*}
    z = \begin{cases}
        w+iv, & \text{if } n = 1,\\
        0, & \text{otherwise}.
    \end{cases}
\end{equation*}
Furthermore, we are interested only in the \emph{static} Euler-Lagrange equations, and consequently we will assume that the coefficients $a,b,c,d,v,w$ depend only on $s$.

Let us describe the curvature of $\omega$ and associated operators that are relevant for studying the Euler-Lagrange equations.
All of the formulas are expressed with respect to the section $\tilde{\sigma}$ as above, which will be suppressed for notational simplicity.
The curvature of this connection is given by
\begin{align*}
    F_\omega =&\;
    -\mathbf{e}^t \wedge \mathbf{e}^s \otimes e^{-\alpha-\beta}(a' \tau_3 + b' \tau_4)
    - \mathbf{e}^\theta \wedge \mathbf{e}^\varphi \otimes \frac{1}{r^2}\left((n-|z|^2) \tau_3 + m\tau_4\right)
    \\[0.1cm]
    &+ \frac{1}{r} (e^{-\alpha}a\Re(iz)\, \mathbf{e}^t + e^{-\beta}\Re(z')\, \mathbf{e}^s) \wedge (\mathbf{e}^\theta \otimes \tau_1 + \mathbf{e}^\varphi \otimes \tau_2)
    \\[0.1cm]
    &+ \frac{1}{r} (e^{-\alpha}a\Im(iz)\, \mathbf{e}^t + e^{-\beta}\Im(z')\, \mathbf{e}^s) \wedge (\mathbf{e}^\theta \otimes \tau_2 - \mathbf{e}^\varphi \otimes \tau_1).
\end{align*}
The Yang-Mills operator satisfies
\begin{equation}
\label{eq-ym-operator}
\begin{split}
    \diff_\omega^* F_\omega =& - \frac{1}{r^2}\left[e^{-\beta} (r^2 e^{-\alpha-\beta}a')' - 2 e^{-\alpha} a|z|^2 \right] \mathbf{e}^t \otimes \tau_3
    \\
    &+ \frac{2}{r^2} e^{-\beta}\Im(\conj{z}z') \, \mathbf{e}^s \otimes \tau_3
    - \frac{1}{r^2} e^{-\beta} (r^2 e^{-\alpha-\beta}b')' \, \mathbf{e}^t \otimes \tau_4
    \\[0.1cm]
    &- \frac{1}{r} \, \Re \left[e^{-\alpha-\beta}(e^{\alpha-\beta}z')' + e^{-2\alpha}a^2z + \frac{1}{r^2}z(n-|z|^2) \right] (\mathbf{e}^\theta \otimes \tau_1 + \mathbf{e}^\varphi \otimes \tau_2)
    \\[0.1cm]
    &- \frac{1}{r} \, \Im \left[e^{-\alpha-\beta}(e^{\alpha-\beta}z')' + e^{-2\alpha}a^2z + \frac{1}{r^2}z(n-|z|^2) \right] (\mathbf{e}^\theta \otimes \tau_2 - \mathbf{e}^\varphi \otimes \tau_1).
\end{split}
\end{equation}
The energy-momentum tensor is given by
\begin{align*}
    \mathfrak{T}^{\mathrm{YM}} 
    =&\, \langle \cdot \iprod F_\omega, \cdot \iprod F_\omega \rangle - \frac12 |F_\omega|^2 g
    \\[0.1cm]
    =& \mathrel{\phantom{+}} \bigg[
    \frac12 e^{-2\alpha-2\beta} (a'{}^2 + b'{}^2)
    + \frac{1}{r^2}\left (e^{-2\alpha}a^2|z|^2 + e^{-2\beta}|z'|^2 
    + \frac{(n-|z|^2)^2+m^2}{2r^2} \right)
    \bigg]\, \mathbf{e}^t \otimes \mathbf{e}^t
    \\[0.1cm]
    &+ \frac{2}{r^2} e^{-\alpha-\beta} a\Im(\conj{z}z')  (\mathbf{e}^t \otimes \mathbf{e}^s + \mathbf{e}^s \otimes \mathbf{e}^t)
    \\[0.1cm]
    &- \bigg[ \frac12 e^{-2\alpha-2\beta} (a'{}^2 + b'{}^2)
    - \frac{1}{r^2} \left( e^{-2\alpha}a^2|z|^2 + e^{-2\beta}|z'|^2 - \frac{(n-|z|^2)^2+m^2}{2r^2} \right) \bigg] \, \mathbf{e}^s \otimes \mathbf{e}^s
    \\[0.1cm]
    &+ \frac12 \left[ e^{-2\alpha-2\beta} (a'{}^2 + b'{}^2)
    + \frac{(n-|z|^2)^2+m^2}{r^4} \right] (\mathbf{e}^\theta \otimes \mathbf{e}^\theta + \mathbf{e}^\varphi \otimes \mathbf{e}^\varphi).
\end{align*}

\subsection{Higgs sector}
\label{sec-higgs-ansatz}

Next, we study the Higgs sector and general ansatz for the Higgs field.

\subsubsection{Derivation of the ansatz}
We recall that the Higgs field is a section of an associated vector bundle $P_{(n,m)} \times_{\rho_{(p,q)}} W$ for some choice of representation parameters $(p,q)$, cf.\ \S \ref{sec-particle-representations}.
Note that we can equivalently state the invariance condition (\ref{eq-invariant-section-condition}) as $\phi \in \ker \rho_*(\lambda_*(\tau_3))$.
We then note that
\begin{equation*}
    (\rho_{(p,q)})_*(\lambda_*(\tau_3))(w_k) 
    = (\rho_{(p,q)})_*(n\tau_3 + m\tau_4)(w_k) 
    = -\frac{i}{2}(n(p-2k)+qm)w_k.
\end{equation*}
Since this action is diagonal, it follows that invariant vectors of this representation are spanned by $w_k$ whose index $k$ satisfies
\begin{equation*}
    n(p-2k)+qm = 0.   
\end{equation*} 
\begin{itemize}[itemsep=0.1cm]
    \item If $n=0$, then necessarily $qm=0$, and the choice of $p$ and $k$ is immaterial.
    \item If $n > 0$, then
    \begin{equation*}
        k = \frac12\left(p+\frac{qm}{n}\right) \in \{0,1,\ldots, p\}.
    \end{equation*}
    Then it is necessary than $qm/n \in \Z$, $|qm|/n \leq p$, and $qm/n$ has the same parity as $p$.
\end{itemize}

In this case, we obtain a static%
\footnote{As before, the coefficient $\phi$ can also depend on $t$ if one wants the most general spherically symmetric ansatz (i.e.\ not necessarily static), but we are here specializing to the static setting and we will continue doing so without explicit mention.}
spherically symmetric section $\Phi$ of $P \times_{\rho_{(p,q)}} W$ by setting
\begin{equation*}
    \Phi = [((t,s), [k_0, e]),\; \phi(s)w_k], \qquad k_0 \in K=\SU(2).
\end{equation*}

\subsubsection{Operator formulas}
We now calculate $\nabla_\omega\Phi$ with respect to a spherically symmetric connection, and with respect to the section $\tilde{\sigma} = ((t,s), [\sigma(\theta,\varphi), e])$ of $P$.
To simplify notation, we define the real-valued function
\begin{equation}\label{eq-zeta-def}
    \zeta^{(p,q)}_j = -\frac{1}{2}\left[ (p-2j)a + qb \right],
\end{equation}
so that the action of the connection \eqref{eq-sph-sym-connection} can be written as (no summation over repeated $j$)
\begin{align}
    (\rho_{(p,q)})_*(\tilde{\sigma}^*\omega)w_j =& \;
    ie^{-\alpha}\zeta^{(p,q)}_j\, \mathbf{e}^{t} \otimes w_j - \frac{i}{2} (n(p-2j)+mq)\,\frac{\cot\theta}{r}\,\mathbf{e}^\varphi\otimes w_j
    \nonumber
    \\
    &-\frac{i}{2}\sqrt{(j+1)(p-j)}\, \frac{z}{r}\, (\mathbf{e}^\theta + i\mathbf{e}^\varphi) \otimes w_{j+1}
    \label{eq-sph-sym-connection-associated-action}
    \\
    \nonumber
    &-\frac{i}{2}\sqrt{j(p-j+1)}\, \frac{\conj{z}}{r}\, (\mathbf{e}^\theta - i\mathbf{e}^\varphi) \otimes w_{j-1}.
\end{align}
Note that the invariance condition ensures precisely that the $\cot\theta$ term vanishes%
\footnote{In fact, we would like to note at this stage that one could in principle also forget about the invariance conditions from \S \ref{sec-symmetry-general} and just calculate the derivatives on the entire space $W$ and then isolate the components that are spherically symmetric in the sense that the formulas do not depend on the angles.
The benefit of using the invariance condition however is that it identifies such components directly and in a systematic manner, which becomes convenient when working with more complicated spaces (such as the twisted spinors later).},
and at the invariant index $j=k$ (for $n > 1$) we could also simplify
\begin{equation*}
    \zeta^{(p,q)}_k = \frac{q}{2}\left(\frac{m}{n} a - b\right).
\end{equation*}
Then on static spherically symmetric $\Phi$, we have
\begin{align}
    \nonumber
    \nabla_\omega\Phi
    =&\,  \mathbf{e}^t\otimes ie^{-\alpha}\zeta_k^{(p,q)}\phi \, w_k
    + \mathbf{e}^s \otimes e^{-\beta} \phi'\, w_k
    \\\label{eq-higgs-covariant-derivative}
    &+\mathbf{e}^\theta \otimes \frac{i\phi}{2 r} \left(-\conj{z} \sqrt{k(p-k+1)} \, w_{k-1} - z \sqrt{(k+1)(p-k)}\, w_{k+1}\right)
    \\\nonumber
    &+ \mathbf{e}^\varphi \otimes \frac{\phi}{2 r} \left(-\conj{z} \sqrt{k(p-k+1)} \, w_{k-1} + z \sqrt{(k+1)(p-k)}\, w_{k+1}\right).
\end{align}
The wave operator is%
\footnote{Note that for $n = 1$, the $\cot \theta$ terms in the wave operator cancel, while for $n \not= 1$ we have $z = 0$ and the $\cot \theta$ terms are not present at all.}
\begin{equation*}
    \Box_\omega \Phi
    = \left\{
    e^{-2\beta} \bigg[ \phi'' + \left(\alpha' - \beta' + \frac{2r'}{r} \right)\phi' \bigg]
    + \left[e^{-2\alpha}  (\zeta^{(p,q)}_k)^2  
    - \frac{\kappa|z|^2}{2r^2} \right] \phi \right\} w_k,
\end{equation*}
where we denote
\begin{equation}
\label{eq-kappa}
    \kappa = k(p-k+1)+(k+1)(p-k) = \frac12\left(p(p+2)-\frac{q^2m^2}{n^2}\right).
\end{equation}
The energy-momentum tensor is
\begin{align*}
    \mathfrak{T}^{\mathrm{H}} =&\,
    \Re\langle\nabla_\omega\Phi \otimes \nabla_\omega\Phi\rangle - \frac12 (|\nabla_\omega\Phi|^2 + U(\Phi)) g
    \\[0.1cm]
    =&\mathrel{\phantom{+}} \frac12\left[ e^{-2\alpha}|\zeta_k^{(p,q)}\phi|^2 + e^{-2\beta}|\phi'|^2 + \frac{\kappa|z|^2|\phi|^2}{2r^2} + U(\phi)\right] \mathbf{e}^t \otimes \mathbf{e}^t
    \\[0.1cm]
    &+ \frac12\left[ e^{-2\alpha}|\zeta_k^{(p,q)}\phi|^2 + e^{-2\beta}|\phi'|^2 - \frac{\kappa|z|^2|\phi|^2}{2r^2} - U(\phi)\right] \mathbf{e}^s \otimes \mathbf{e}^s
    \\[0.1cm]
    &+ \zeta_k^{(p,q)} e^{-\alpha-\beta}\Im(\conj{\phi}\phi') (\mathbf{e}^t \otimes \mathbf{e}^s + \mathbf{e}^s \otimes \mathbf{e}^t)
    \\[0.1cm]
    &+ \frac12 \left[ e^{-2\alpha}|\zeta_k^{(p,q)}\phi|^2 - e^{-2\beta}|\phi'|^2 - U(\phi) \right](\mathbf{e}^\theta\otimes\mathbf{e}^\theta + \mathbf{e}^\varphi \otimes \mathbf{e}^\varphi),
\end{align*} 
where $U$ is the Higgs potential \eqref{eq-mexican-hat-potential}.

\subsection{Dirac sector}
\label{sec-dirac-ansatz}

Finally, we study the Dirac sector, which is modelled by twisted spinors.
It is well-known (and we shall also see this below) that there exist no spherically symmetric pure spinors, cf.\ \cite[Example 1.2]{sph-sym-standard-model}.
However, when spinors are twisted by an additional coefficient bundle, then it becomes possible to produce a spherically symmetric ansatz for the field in certain cases, as we shall see below.

\subsubsection{Derivation of the ansatz}

Twisted spinors are sections of $\Sigma M \otimes \mathscr{S}$, where $\mathscr{S} = P \times_\chi V$ is a vector bundle associated to $P$ via a representation $\chi : G \to \U(V)$. 
Note that the spinor bundle itself is also an associated vector bundle $\Sigma M = \Spin^+(M,g) \times_\kappa \C^4$ where $\kappa$ is the spinor representation, i.e.\ the restriction of the Clifford representation to $\Spin^+(1,3)$.
Hence we can view
\begin{equation*}
    \Sigma M \otimes \mathscr{S} = (\Spin^+(M,g) \times_\kappa \C^4) \otimes (P \times_\chi V) \cong (\Spin^+(M,g) \times P) \times_{\kappa \otimes \chi} (\C^4 \otimes V)
\end{equation*}
as an associated vector bundle to the principal fiber bundle $\Spin^+(M,g) \times P \to M$ with structure group $\Spin^+(1,3) \times G$ with respect to the representation
\begin{equation*}
    \kappa \otimes \chi : \Spin^+(1,3) \times G \to \mathbf{GL}(\C^4 \otimes V), \qquad (b, g) \mapsto ( u \otimes v \mapsto \kappa(b)u \otimes \chi(g)v). 
\end{equation*}
If the spin bundle $\Spin^+(M,g)$ and $P$ are $K$-symmetric, then trivially so is their fibered product $\Spin^+(M,g) \times P \to M$.
In particular, if $\mu : \U(1) \to \Spin^+(1,3)$ is the classifying homomorphism for $\Spin^+(M, g)$ and $\lambda = \lambda_{(n,m)} : \U(1) \to \SU(2) \times \U(1)$ is the classifying homomorphism for $P$, then the classifying homomorphism for the fibered product is
\begin{equation*}
    \tilde{\lambda} = \mu \times \lambda_{(n,m)} : \U(1) \to \Spin^+(1,3) \times G, \qquad \tilde{\lambda}(h) = (\mu(h), \lambda_{(n,m)}(h)).
\end{equation*}
Thus the general conditions for symmetric sections of vector bundles apply here as well, namely invariant twisted spinors are induced by elements of $\C^4\otimes V$ that reside in the kernel of 
\begin{align*}
    ((\kappa\otimes\chi) \circ \tilde{\lambda})_*(\tau_3) 
    &= ((\kappa\circ \mu) \otimes (\chi \circ \lambda_{(n,m)}))_*(\tau_3)\\
    &= 
    \kappa_*(\mu_*(\tau_3)) \otimes \id + \id \otimes \chi_*((\lambda_{(n,m)})_*(\tau_3)) \in \mathfrak{gl}(\C^4 \otimes V).
\end{align*}
Now $\kappa$ is the spinor representation and by \eqref{eq-U(1)-inside-spin} we have $\kappa(\mu(\exp(s\tau_3))) = I_2 \otimes \exp(s\tau_3)$
so that infinitesimally
\begin{equation*}
    \kappa_*(\mu_*(\tau_3)) \u_\pm = -\frac{i}{2}\u_\pm,
    \qquad
    \kappa_*(\mu_*(\tau_3)) \d_\pm = +\frac{i}{2}\d_\pm,
\end{equation*}
where we use the standard basis \eqref{eq-spinor-up-down-chiral-basis} for the spinor space in the Weyl representation.
This already shows that one cannot have a pure (i.e.\ non-twisted) spinor field satisfying the invariance condition, i.e.\ there exist no pure spinors in spherical symmetry, cf.\ \cite[Example 1.2]{sph-sym-standard-model}.
On the other hand, when the spinor is twisted then the twisting action can "cancel out" the actions above to satisfy the invariance condition.
We take $\chi = \rho_{(d,h)}$ to be the $(d,h)$-representation with basis $v_\ell$, and we denote the associated vector bundle by $\mathscr{S}_{(n,m,d,h)} = P_{(n,m)} \times_{\rho_{(d,h)}} V$.
It follows then that
\begin{align*}
    \left((\kappa\circ\mu) \otimes (\rho_{(d,h)}\circ\lambda_{(n,m)})\right)_*(\tau_3)(\u_\pm \otimes v_\ell) &= -\frac{i}{2}[n(d-2\ell)+hm+1](\u_\pm \otimes v_\ell)
    \\
    \left((\kappa\circ\mu) \otimes (\rho_{(d,h)}\circ\lambda_{(n,m)})\right)_*(\tau_3)(\d_\pm \otimes v_\ell) &= -\frac{i}{2}[n(d-2\ell)+hm-1](\d_\pm \otimes v_\ell).
\end{align*}
Thus, the subspace of invariant twisted spinors is generated by all $v_\ell$ whose index $\ell$ satisfies 
\begin{equation*}
    n(d-2\ell)+hm\pm 1 = 0,   
\end{equation*}
which certainly has solutions provided that the parameters are chosen correctly.
More precisely:
\begin{itemize}[itemsep=0.1cm]
\item If $n=0$, then necessarily $hm = \pm 1$. 
In this case the choice of $d$ (and hence $\ell$) is immaterial.
With no major loss of generality we can also set $m = 1$ and $h = \pm 1$.
Then we can set
\begin{equation*}
    \psi =
    \begin{cases}
        (\xi_+ \mathfrak{u}_{+} + \xi_- \mathfrak{u}_{-}) \otimes v_\ell, & \text{if }  h = -1\\
        (\eta_+ \mathfrak{d}_{+} + \eta_- \mathfrak{d}_{-}) \otimes v_\ell, & \text{if } h = 1\\
        0, & \text{otherwise},
    \end{cases}
\end{equation*}
for any $\ell$ and complex $\xi_\pm, \eta_\pm$.
\item If $n=1$, then we set
\begin{equation*}
    \ell=\frac12\left(d+hm+1\right),
    \qquad
    \ell-1 = \frac12\left(d+hm-1\right)
\end{equation*}
and invariant twisted spinors exist provided that $\ell \in \{ 0, \ldots, d+1 \}$. Note that $\ell = 0$ (resp.\ $\ell=d+1$) are degenerate cases, in the sense that $\ell-1$ is no longer non-negative (resp.\ $\ell > d$ but $\ell-1 = d$).
Hence, we define
\begin{equation*}
    \psi =
    \begin{cases}
        (\xi_+ \mathfrak{u}_{+} + \xi_- \mathfrak{u}_{-}) \otimes v_\ell + (\eta_+ \mathfrak{d}_{+} + \eta_- \mathfrak{d}_{-}) \otimes v_{\ell-1}, & \text{if } \ell \in \{1,\ldots,d\}\\
        (\xi_+ \mathfrak{u}_{+} + \xi_- \mathfrak{u}_{-}) \otimes v_\ell, & \text{if } \ell = 0\\
        (\eta_+ \mathfrak{d}_{+} + \eta_- \mathfrak{d}_{-}) \otimes v_{\ell-1}, & \text{if } \ell-1 = d\\
        0, & \text{otherwise},
    \end{cases}
\end{equation*}
for any complex $\xi_\pm, \eta_\pm$.
\item If $n>1$, then $\frac12\left(d+\frac{hm\pm1}{n}\right)$ is an integer for at most one choice of sign.
Here we can set
\begin{equation*}
    \psi =
    \begin{cases}
        (\xi_+ \mathfrak{u}_{+} + \xi_- \mathfrak{u}_{-}) \otimes v_\ell, & \text{if } \ell = \frac12\left(d+\frac{hm + 1}{n}\right) \in \{0,\ldots,d\}\\
        (\eta_+ \mathfrak{d}_{+} + \eta_- \mathfrak{d}_{-}) \otimes v_{\ell-1}, & \text{if } \ell-1 = \frac12\left(d+\frac{hm - 1}{n}\right) \in \{0,\ldots,d\}\\
        0, & \text{otherwise},
    \end{cases}
\end{equation*}
for any complex $\xi_\pm, \eta_\pm$.
\end{itemize}
In particular, we see that the case $n=1$ with $\ell \in \{1,\ldots,d\}$ provides the richest ansatz, just like in the Yang-Mills sector.
In all of these cases, the general static spherically symmetric twisted spinor is then defined as the section of $\Sigma M \otimes \mathscr{S}_{(n,m,d,h)}$ given by
\begin{equation*}
    \Psi_{((t,s), kH)} = [((t,s), [k, e]), \; \psi(s)],
\end{equation*}
where the coefficients $\xi_\pm, \eta_\pm$ also depend only on $s$.
If one demands that that the twisted spinors also be chiral, then one  additionally needs to assume that either $\xi_+ = \eta_+ = 0$ or $\xi_- = \eta_- = 0$. At this stage we shall not impose any such conditions, in order to get the most general ansatz, but in later sections we will work with twisted chiral spinors $\Psi = \Psi^+ + \Psi^-$, where $\Psi^\pm$ are chiral and both have the form as above, each for a separate choice of twisting representation.

\subsubsection{Operator formulas}
\label{sec-dirac-operator-formulas}

The connection on $\Sigma M \otimes E_{(n,m,d,h)}$ is the connection associated to $\Gamma \oplus \omega$, where $\Gamma$ is the spin Levi-Civita connection on $\Spin^+(M,g)$ \eqref{eq-spin-levi-civita}, and $\omega$ is a static spherically symmetric connection on $P$ \eqref{eq-sph-sym-connection}. Then, with respect to the spin frame $\varepsilon$ as defined in \S \ref{sec-spin-bundle} and the lifted section $\tilde{\sigma} = ((t,s), [\sigma(\theta,\varphi), e])$ of $P$ where $\sigma$ is the Hopf section (\ref{eq-hopf-section}), we have $\nabla_\omega\Psi = [\varepsilon\times \tilde{\sigma}, \nabla_\omega\psi]$, where
\begin{equation*}
    \nabla_\omega \psi = \diff\psi + \kappa_*(\varepsilon^*\Gamma)\psi + \chi_*(\tilde{\sigma}^*\omega)\psi,
\end{equation*}
and the twisted Dirac operator is given by
\begin{equation*}
    \dirac_\omega \Psi = \eta^{\mu\nu} \mathbf{e}_\mu \cdot (\nabla_\omega \Psi)(\mathbf{e}_\nu) = [\varepsilon \times \tilde\sigma, \, \gamma^\mu (\nabla_\omega\psi)_\mu].
\end{equation*}
One computes the non-vanishing components of the Dirac operator to be (with respect to the section $\varepsilon \times \tilde\sigma$)
\begin{align*}
    (\dirac_\omega\Psi)_{\u_\pm \otimes v_\ell} 
    =& \,
    \pm e^{-\beta} \left[\xi_\mp' + \left(\frac12\alpha' + \frac{r'}{r}\right)\xi_\mp \right]
    - ie^{-\alpha}\zeta^{(d,h)}_\ell \xi_\mp \mp i\sqrt{\ell(d-\ell+1)}\,\frac{z\eta_\mp}{r},
    \\[0.2cm]
    (\dirac_\omega\Psi)_{\d_\pm \otimes v_{\ell-1}} 
    =& \,
    \mp e^{-\beta} \left[\eta_{\mp}' + \left(\frac12 \alpha' + \frac{r'}{r}\right)\eta_{\mp}\right] 
    - ie^{-\alpha} \zeta^{(d,h)}_{\ell-1}\eta_{\mp} \mp i\sqrt{\ell(d-\ell+1)}\,\frac{\conj{z}\xi_{\mp}}{r},
\end{align*}
where $\zeta$ is defined by the same formula as in \eqref{eq-zeta-def}.
Here, the subscripts indicate the basis elements that the coefficient on the right-hand side corresponds to, 
and it is understood that the coefficients $\xi_{\pm}$ and $\eta_{\pm}$ vanish in certain cases, in line with the different choices of parameters, as shown above, in particular when $\ell$ or $\ell-1$ are not integers in $\{0,\ldots,d\}$, and we also recall again that $z$ vanishes unless $n=1$.
The energy-momentum tensor is (assuming that the Dirac-Yukawa equation $\dirac_\omega \Psi + \Y_\Phi\Psi = 0$ is satisfied)
\begin{align*}
    \mathfrak{T}^{\mathrm{D}}
    =&
    -\frac12 e^{-\alpha}\bigg[ \zeta^{(d,h)}_\ell (|\xi_+|^2+|\xi_-|^2)
    + \zeta^{(d,h)}_{\ell-1} (|\eta_+|^2+|\eta_-|^2)\bigg] \, \mathbf{e}^t \otimes \mathbf{e}^t
    \\[0.1cm]
    &+ \frac12 e^{-\alpha}\bigg[ \zeta^{(d,h)}_\ell (|\xi_+|^2-|\xi_-|^2) 
    - \zeta^{(d,h)}_{\ell-1} (|\eta_+|^2-|\eta_-|^2)
    \bigg] (\mathbf{e}^t \otimes \mathbf{e}^s + \mathbf{e}^s \otimes \mathbf{e}^t)
    \\[0.1cm]
    &+ \frac12 e^{-\beta} \, \Im\left(\conj{\xi}_+ \xi_+' - \conj{\xi}_- \xi_-' - \conj{\eta}_+ \eta_+' + \conj{\eta}_- \eta_-' \right) \mathbf{e}^s \otimes \mathbf{e}^s
    \\[0.1cm]
    &+ 
    \frac{\sqrt{\ell(d-\ell+1)}}{2r} \left[ -\Re(z\conj{\xi}_+\eta_+) + \Re(z\conj{\xi}_-\eta_-) \right]
    (\mathbf{e}^\theta \otimes \mathbf{e}^\theta + \mathbf{e}^\varphi \otimes \mathbf{e}^\varphi).
\end{align*}

\subsubsection{Chirality}
\label{sec-chiral}
Finally, having written down these general formulae, let us briefly also mention how to use them when deriving the formulae for the twisted chiral spinors.

To this end, assume $\Psi^{\pm} \in \Gamma(\Sigma_\pm \otimes \mathscr{S}_\pm)$ are symmetric twisted spinors as above, where
$\mathscr{S}_\pm = \mathscr{S}_{(n,m,d_\pm,h_\pm)}$ for some choices of representation parameters as above (so that $\chi_\pm = \rho_{(d_\pm, h_\pm)}$ in the notation of \S \ref{sec-sm-lagrangian}).
Then $\Psi=\Psi^++\Psi^-$ forms a symmetric twisted chiral spinor, i.e.\ a section of $\mathscr{F}_+$, cf.\ \eqref{eq-twisted-chiral-bundle}.
We note that
\begin{equation*}
    \dirac_\omega \Psi = \dirac_\omega \Psi^+ + \dirac_\omega \Psi^-,
\end{equation*}
by linearity, while one also has
\begin{align*}
    \Im\langle {\Id} \cdot \Psi, \chi_* \Psi \rangle
    &=
    \Im\langle {\Id} \cdot \Psi^+, \chi_* \Psi^+ \rangle
    +
    \Im\langle {\Id} \cdot \Psi^-, \chi_* \Psi^- \rangle,
    \\[0.2cm]
    \mathfrak{T}^{\mathrm{D}}[\Psi] &= \mathfrak{T}^{\mathrm{D}}[\Psi^+] + \mathfrak{T}^{\mathrm{D}}[\Psi^-],
\end{align*}
since the chiral subbundles $\Sigma_\pm M$ are totally isotropic with respect to the inner product induced by \eqref{eq-spinor-inner-product}, and the subbundles $\mathscr{S}_\pm$ are orthogonal (by definition).
For each of these operators, the terms on the right-hand side can be calculated by applying the formulae from \S \ref{sec-dirac-operator-formulas} to each term separately. More precisely:
\begin{itemize}[itemsep=0.1cm]
    \item the $\Psi^+$-terms are calculated by setting the components $\xi_-=\eta_-=0$, the parameters $(d,h)=(d_+,h_+)$ so that the basis vectors are $v_\ell=v_\ell^+$, and the invariant index is $\ell=\ell_+$,
    \item the $\Psi^-$-terms are calculated by setting the components $\xi_+=\eta_+=0$, the parameters $(d,h)=(d_-,h_-)$ so that the basis vectors are $v_\ell=v_\ell^-$, and the invariant index $\ell = \ell_-$.
\end{itemize}
See also Appendix \ref{appendix-operator-formulas}.

\subsection{Relation to known literature}
\label{sec-relation-to-literature}

With these considerations, one can now recover the static spherically symmetric ansätze from the literature.
Let us name a few.

\begin{enumerate}
    \item Setting the bundle parameters $n = m = 0$ and taking $a = 0$ (and $z=0$ which is forced in this case due to $n\not=1$) for the connection \eqref{eq-sph-sym-connection} the Yang-Mills sector reduces to the classical Maxwell model and the structure group reduces to $\U(1)$. 
    The Higgs sector also trivializes substantially, while there is no spherically symmetric (in the sense of \S \ref{sec-symmetry-general}) ansatz for the Dirac sector. That being said, one can still make a "weakly" spherically symmetric ansatz in terms of spin-weighted spherical harmonics, cf.\ e.g.\ \cite{sph-sym-standard-model, kain-edm-wormholes}.
    \item Setting $n = 1$ and $m = 0$, the Yang-Mills sector reduces to the classical spherically symmetric $\SU(2)$ Yang-Mills model that is ubiquitous in literature and often referred to as the \emph{Witten ansatz}. 
    Most notably this ansatz was used in the the entire $\SU(2)$ Einstein-Yang-Mills programme \cite{bart-mckin, bizon-colored-bh, smol-wass-1, breit-forg-mais} and many other papers. 
    In this case the Higgs field admits no consistent ansatz when the representation parameter $p = 1$ (i.e.\ the fundamental representation), but one can make an ansatz e.g.\ when $p = 2$ (i.e.\ the adjoint representation), which is also already well-known, see e.g.\ \cite{kunzle}.
    The Dirac sector admits a consistent ansatz most notably when the twisting representation is fundamental (i.e.\ $d = 1$), and this leads to the ansatz studied (among others) by \cite{finster-smoller-yau-edym-solutions, finster-smoller-yau-edym-bh-nonexistence}, though note that they use the Dirac representation rather than the Weyl representation for spinors which needs to be taken into account if one wishes to compare the formulas.
    \item Setting $n = m$ and $p = q = k = 1$, the coupled Yang-Mills and Higgs sectors are precisely the ones studied in \cite{volkov-electroweak, volkov-electroweak-blackholes}. 
    In particular, the ansatz for the Higgs field is in this case sometimes referred to as the \emph{Cho-Maison monopole}. 
    In this case one can also get a consistent ansatz for the Dirac sector but the author is not aware of any references where this has been done so we leave this discussion for the future sections.
\end{enumerate}

\section{The spherically symmetric model}
\label{sec-sphsym-model}

\subsection{General assumptions}
\label{sec-sph-sym-general-parameters}
Before writing down the Euler-Lagrange equations, let us briefly summarize the ansatz.
We denote the principal $\SU(2)\times\U(1)$-bundle with structure parameters $(n,m)$ by $P_{(n,m)}$, cf.\ \S \ref{sec-sph-sym-principal-bundles}.
We assume for simplicity that $n>0$ although the considerations from this section can be adapted also to the less interesting case $n=0$ (in fact, the only really interesting case is $n=m=1$).
We have natural section $\tilde{\sigma} = ((t,s), [\sigma(\theta,\varphi), e])$ of $P_{(n,m)}$, where $\sigma$ is the Hopf section (\ref{eq-hopf-section}).
The most general spherically symmetric connection on $P_{(n,m)}$ is given by
\begin{align*}
    \tilde{\sigma}^*\omega =&\, \mathbf{e}^t \otimes e^{-\alpha(s)}(a(s)\tau_3 + b(s)\tau_4)
    + \mathbf{e}^\varphi \otimes \frac{\cot\theta}{r(s)}\,(n\tau_3 + m\tau_4)
    \\
    &+ \frac{\Re(z)(s)}{r(s)} \,(\mathbf{e}^\theta \otimes \tau_1 + \mathbf{e}^\varphi \otimes \tau_2) + \frac{\Im(z)(s)}{r(s)} \,(\mathbf{e}^\theta \otimes \tau_2 - \mathbf{e}^\varphi \otimes \tau_1),
\end{align*}
where $a,b$ are real-valued functions and $z$ is a complex-valued function such that $z\equiv0$ if $n\not=1$, cf.\ \S \ref{sec-yang-mills-ansatz}.
For the Higgs field and the Dirac field, we consider representation parameters $(p,q)$ and $(d_\pm, h_\pm)$ as well as invariant indices $k, \ell_+, \ell_-$ respectively, such that:
\begin{itemize}[itemsep=0.2cm]
    \item $p+d_+=d_-$ and $q+h_+=h_-$,
    \item $k := \frac12\left(p+\frac{qm}{n}\right) \in \{0,\ldots,p\}$,
    \item $\ell_+ := \frac12\left(d_++\frac{h_+m+1}{n}\right) \in \{0,\ldots,d_+\}$.
\end{itemize}
We then set
\begin{equation*}
    \ell_- := \frac12\left(d_-+\frac{h_-m+1}{n}\right) = k+\ell_+ \in \{0,\ldots,d_-\}.
\end{equation*}
With these parameters define the Higgs vector space $W$ and the $\pm$-fermionic vector spaces $V_\pm$ in the same way as in Definition \ref{def-parameters}, and we also define the associated vector budles
\begin{equation*}
    \mathscr{H} = P_{(n,m)} \times_{\rho_{(p,q)}} W,
    \qquad
    \mathscr{S}_{\pm} = P_{(n,m)} \times_{\rho_{(d_\pm,h_\pm)}} V_\pm.
\end{equation*}
Then we can get a consistent ansatz for the Higgs field $\Phi \in \Gamma(\mathscr{H})$ by setting
\begin{equation*}
    \Phi_{(t,s,k_0H)} = [((t,s), [k_0, e]), \; \phi(s)\, w_k], \qquad k_0 \in K.
\end{equation*}
for a complex-valued function $\phi$, cf.\ \ref{sec-higgs-ansatz}.
Similarly, for the Dirac field we set $\Psi = \Psi^+ + \Psi^-$, where the twisted chiral parts $\Psi^\pm \in \Gamma(\Sigma_\pm M \otimes \mathscr{S}_\pm)$ are defined by
\begin{align*}
    \Psi^+_{(t,s,k_0H)} &= \left[((t,s), [k_0, e]), \; \xi_+(s)\, \u_+ \otimes v^+_{\ell_+} + \eta_+(s)\, \d_+ \otimes v^+_{\ell_+-1}\right],\\[0.1cm]
    \Psi^-_{(t,s,k_0H)} &= \left[((t,s), [k_0, e]), \; \xi_-(s)\, \u_- \otimes v^-_{\ell_-} + \eta_-(s)\, \d_- \otimes v^-_{\ell_--1}\right],
\end{align*}
for complex-valued functions $\xi_\pm, \eta_\pm$,
where it is understood that $\eta_+=0$ (resp.\ $\eta_-=0$) if $\ell=0$ (resp.\ $j=0$), 
and also $\eta_\pm = 0$ if $n > 1$, cf.\ \S \ref{sec-dirac-ansatz}.
We note finally that the Yukawa map associated to the chosen parameters is then given by
\begin{align}
    \Y_\Phi\Psi =& -im_\Y \Y_{k,\ell_+}^{\ell_-} \left( \phi\xi_+ \u_+ \otimes v_{\ell_-}^- + \conj{\phi}\xi_- \u_- \otimes v_{\ell_+}^+ \right)
    \nonumber
    \\
    &-im_\Y \Y_{k,\ell_+-1}^{\ell_--1} \left( \phi\eta_+\d_+ \otimes v_{\ell_--1}^- + \conj{\phi}\eta_- \d_- \otimes v_{\ell_+-1}^+ \right)
    \label{eq-yukawa-map-general}
\end{align}
where one interprets $\Y_{k,\ell_+-1}^{\ell_--1}=0$ if $\ell_+=0$ or $\ell_-=0$ (resp.\ $\Y_{k,\ell_+}^{\ell_-}=0$ if $\ell_+=d_++1$ or $\ell_-=d_-+1$), cf.\ Definition \ref{def-yukawa}.

\subsection{Equations}
\label{sec-total-eqs}

Having fixed all the parameters and defined all the required fields, we are now ready to write down the reduced Euler-Lagrange equations (\ref{eq-yang-mills}--\ref{eq-einstein}). 
Here we only write out the resulting equations themselves, but we provide complete formulas for the operators in Appendix \ref{appendix-operator-formulas} from which it can more clearly be seen where each equation comes from.


The Dirac equations are
\footnotesize
\begin{subequations}
\begin{empheq}[left=\empheqlbrace]{align}
    0=&\, 
    - e^{-\beta} \left[\xi_+' + \left(\frac12\alpha' + \frac{r'}{r}\right)\xi_+ \right]
    - ie^{-\alpha} \zeta^{(d_+, h_+)}_{\ell_+} \xi_+ 
    + i\sqrt{\ell_+(d_+ - \ell_+ +1)}\,\frac{z\eta_+}{r} - im_\Y \Y_{k,\ell_+}^{\ell_-} \conj{\phi} \xi_-
    \label{eq-dirac-xi+}
    \\[0.2cm]
    0=& \,
    e^{-\beta} \left[\xi_-' + \left(\frac12\alpha' + \frac{r'}{r}\right)\xi_- \right] - ie^{-\alpha} \zeta^{(d_-, h_-)}_{\ell_-} \xi_-
    - i\sqrt{\ell_-(d_- - \ell_- + 1)}\,\frac{z\eta_-}{r} - im_\Y \Y_{k,\ell_+}^{\ell_-} \phi\xi_+ 
    \label{eq-dirac-xi-}
    \\[0.2cm]
    0=& \,
    e^{-\beta} \left[\eta_+' + \left(\frac12 \alpha' + \frac{r'}{r}\right)\eta_+\right]  -ie^{-\alpha}\zeta^{(d_+, h_+)}_{\ell_+-1} \eta_+ 
    + i\sqrt{\ell_+(d_+ - \ell_+ + 1)}\,\frac{\conj{z}\xi_+}{r} - im_\Y \Y_{k,\ell_+-1}^{\ell_- - 1} \conj{\phi}\eta_-
    \label{eq-dirac-eta+}
    \\[0.2cm]
    0=& \,
    -e^{-\beta} \left[\eta_-' + \left(\frac12 \alpha' + \frac{r'}{r}\right)\eta_-\right]  -ie^{-\alpha} \zeta^{(d_-,h_-)}_{\ell_- - 1}\eta_- 
    - i\sqrt{\ell_-(d_- - \ell_- + 1)}\,\frac{\conj{z}\xi_-}{r} - im_\Y \Y_{k,\ell_+ - 1}^{\ell_- - 1} \phi\eta_+.
    \label{eq-dirac-eta-}
\end{empheq}
\end{subequations}
\normalsize
The Higgs equation is
\footnotesize
\begin{subequations}
\begin{empheq}[left=\empheqlbrace]{align}
0 = &\,  e^{-2\beta} \bigg[ \phi'' + \left(\alpha' - \beta' + \frac{2r'}{r} \right)\phi' \bigg]
    + \left[e^{-2\alpha}  (\zeta^{(p,q)}_k)^2  
    - \frac{\kappa|z|^2}{2r^2} + \lambda^2 - |\phi|^2 \right] \phi
    \nonumber
    \\
    &- m_\Y \left(\Y_{k,\ell_+}^{\ell_-}\, \conj{\xi}_+ \xi_- + \Y_{k,\ell_+ - 1}^{\ell_- - 1}\, \conj{\eta}_+\eta_-\right),
    \label{eq-higgs-phi}
\end{empheq}
\end{subequations}
\normalsize
where we recall that $\kappa$ is given by \eqref{eq-kappa}.
For the Yang-Mills sector, we first isolate the first-order equations
\footnotesize
\begin{subequations}
\begin{empheq}[left=\empheqlbrace]{align}
    0 &= \frac{4}{r^2} \Im(\conj{z}z') - (p-2k) \,\Im(\conj{\phi}\phi') 
    \nonumber
    \\
    &\quad - \frac{1}{2} e^\beta \left[ (d_+ - 2\ell_+)|\xi_+|^2 - (d_+ - 2\ell_+ + 2) |\eta_+|^2 - (d_- - 2\ell_-)|\xi_-|^2 + (d_- - 2\ell_- + 2)|\eta_-|^2 \right]
    \label{eq-ym-constraint-1}
    \\[0.2cm]
    0 &= q \Im(\conj{\phi}\phi') + \frac{1}{2} e^\beta \left[ h_+(|\xi_+|^2 - |\eta_+|^2) + h_- (- |\xi_-|^2 + |\eta_-|^2)\right],
    \label{eq-ym-constraint-2}
\end{empheq}
\end{subequations}
\normalsize
which can be viewed as constraint equations, as we will soon see.
The remaning Yang-Mills equations are
\footnotesize
\begin{subequations}
\begin{empheq}[left=\empheqlbrace]{align}
    0 =& \, - \frac{1}{r^2}e^{-\beta} (r^2 e^{-\alpha-\beta}a')' + \frac{2}{r^2} e^{-\alpha} a|z|^2 - \frac12(p-2k) e^{-\alpha} \,\zeta^{(p,q)}_k |\phi|^2 \nonumber
    \\
    & + \frac{1}{4}  \left[ (d_+ - 2\ell_+)|\xi_+|^2 + (d_+ - 2\ell_+ + 2) |\eta_+|^2 + (d_- - 2\ell_-)|\xi_-|^2 + (d_- - 2\ell_- + 2)|\eta_-|^2 \right]
    \label{eq-ym-a}
    \\[0.2cm]
    0 =& \,
    - \frac{1}{r^2}e^{-\beta} (r^2 e^{-\alpha-\beta}b')' -\frac{q}{2} e^{-\alpha} \,\zeta^{(p,q)}_k |\phi|^2
    +\frac{1}{4} \left[ h_+(|\xi_+|^2 + |\eta_+|^2) + h_-(|\xi_-|^2 + |\eta_-|^2) \right]
    \label{eq-ym-b}
    \\[0.2cm]
    0 =& \,
    e^{-\alpha-\beta}(e^{\alpha-\beta}z')' + e^{-2\alpha}a^2z + \frac{1}{r^2}z(n-|z|^2) 
    - \frac{\kappa}{4} z|\phi|^2 
    \nonumber
    \\
    &+ \frac{r}{2}  \left[ \sqrt{\ell_+(d_+ - \ell_+ + 1)}\, \xi_+\conj{\eta}_+ - \sqrt{\ell_-(d_- - \ell_- + 1)}\, \xi_-\conj{\eta}_- \right]
    \label{eq-ym-z}
\end{empheq}
\end{subequations}
\normalsize

Finally, for the Einstein sector we also isolate the first-order equations
\footnotesize
\begin{subequations}
\begin{empheq}[left=\empheqlbrace]{align}
    0 =& \, \frac{2}{r^2} a\Im(\conj{z}z')  + \zeta_k^{(p,q)} \Im(\conj{\phi}\phi') 
    \nonumber
    \\
    &+ \frac12 e^{\beta}\bigg( \zeta^{(d_+,h_+)}_{\ell_+}|\xi_+|^2 - \zeta^{(d_-,h_-)}_{\ell_-}|\xi_-|^2 
    - \zeta^{(d_+,h_+)}_{\ell_+-1} |\eta_+|^2  + \zeta^{(d_-,h_-)}_{\ell_--1}|\eta_-|^2
    \bigg) 
    \label{eq-einstein-constraint-1}
    \\[0.2cm]
    0 =& \,
    1 - r^2e^{-2\beta}\left(2\alpha' + \frac{r'}{r}\right)\frac{r'}{r}
    \nonumber
    \\
    &- \frac12 r^2e^{-2\alpha-2\beta} (a'{}^2 + b'{}^2)
    +  e^{-2\alpha}a^2|z|^2 + e^{-2\beta}|z'|^2 - \frac{(n-|z|^2)^2+m^2}{2r^2}
    \nonumber
    \\
    & + \frac12 r^2 e^{-2\alpha}|\zeta_k^{(p,q)}\phi|^2 + \frac12 r^2 e^{-2\beta}|\phi'|^2 - \frac14 \kappa|z|^2|\phi|^2 - \frac12 r^2 U(\phi)
    \nonumber
    \\[0.1cm]
    &- \frac12 r^2 e^{-\alpha} \left[ \zeta^{(d_+,h_+)}_{\ell_+} |\xi_+|^2 +  \zeta^{(d_-,h_-)}_{\ell_-}|\xi_-|^2
    + \zeta^{(d_+,h_+)}_{\ell_+ - 1} |\eta_+|^2 +  \zeta^{(d_-,h_-)}_{\ell_- - 1}|\eta_-|^2 \right]
    \nonumber
    \\[0.1cm]
    &+ \sqrt{\ell_+(d_+ - \ell_+ + 1)}\, r \Re(z\conj{\xi}_+\eta_+) 
    - \sqrt{\ell_-(d_- - \ell_- + 1)}\, r \Re(z\conj{\xi}_-\eta_-)
    \nonumber
    \\[0.1cm]
    &- m_\Y \Y_{k,\ell_+}^{\ell_-} r^2 \Re(\phi\conj{\xi}_-\xi_+)
    - m_\Y \Y_{k, \ell_+ - 1}^{\ell_- - 1} r^2 \Re(\phi\conj{\eta}_-\eta_+),
    \label{eq-einstein-constraint-2}
\end{empheq}
\end{subequations}
\normalsize
which can again be viewed as constraints.
In fact, we note already here that \eqref{eq-einstein-constraint-1} is just a linear combination of the Yang-Mills constraints (\ref{eq-ym-constraint-1}) and (\ref{eq-ym-constraint-2}), so that omit it henceforth.
The remaining Einstein equations are
\footnotesize
\begin{subequations}
\begin{empheq}[left=\empheqlbrace]{align}
    0 =& \,e^{-2\beta}\left[ \frac{r''}{r} + \left(\alpha' - \beta' + \frac{r'}{r}\right)\frac{r'}{r} \right] - \frac{1}{r^2}
    \nonumber
    \\[0.1cm]
    &\, + \frac12 e^{-2\alpha-2\beta}(a'{}^2 + b'{}^2) + \frac{(n-|z|^2)^2 + m^2}{2r^4} + \frac{\kappa|z|^2 |\phi|^2}{4r^2} + \frac12 U(\phi)
    \nonumber
    \\[0.1cm]
    &\, - \frac{1}{2r}\sqrt{\ell_+(d_+ - \ell_+ + 1)} \, \Re(z\conj{\xi}_+\eta_+) + \frac{1}{2r}\sqrt{\ell_-(d_- - \ell_- + 1)} \, \Re(z\conj{\xi}_-\eta_-)
    \nonumber
    \\
    &\, + \frac12 m_\Y \Y_{k,\ell_+}^{\ell_-} \Re(\phi\conj{\xi}_-\xi_+) + \frac12 m_\Y \Y_{k,\ell_+-1}^{\ell_- - 1} \Re(\phi\conj{\eta}_-\eta_+),
    \label{eq-einstein-evo-ddr}
    \\[0.2cm]
    0 =&\, -e^{-2\beta}\left[\alpha'' + \left( \alpha' - \beta' + \frac{2r'}{r} \right)\alpha'\right]
    \nonumber
    \\
    &+ \frac12 e^{-2\alpha-2\beta} (a'{}^2 + b'{}^2)
    + \frac{1}{r^2}\left (e^{-2\alpha}a^2|z|^2 + e^{-2\beta}|z'|^2 
    + \frac{(n-|z|^2)^2+m^2}{2r^2} \right)
    + e^{-2\alpha}|\zeta_k^{(p,q)}\phi|^2
    \nonumber
    \\
    & - \frac12 U(\phi) -\frac12 e^{-\alpha}\left( \zeta^{(d_+,h_+)}_{\ell_+} |\xi_+|^2 +  \zeta^{(d_-,h_-)}_{\ell_-}|\xi_-|^2
    + \zeta^{(d_+,h_+)}_{\ell_+ - 1} |\eta_+|^2 +  \zeta^{(d_-,h_-)}_{\ell_- - 1}|\eta_-|^2\right)
    \nonumber
    \\
    & - \frac12 m_\Y \Y_{k,\ell_+}^{\ell_-} \Re(\phi\conj{\xi}_-\xi_+) - \frac12 m_\Y \Y_{k,\ell_+-1}^{\ell_- - 1} \Re(\phi\conj{\eta}_-\eta_+)
    \label{eq-einstein-evo-ddt}
\end{empheq}
\end{subequations}
\normalsize
where we have implicitly used the constraint equation to simplify, and isolated the second-order derivatives.

\subsection{Constraints and invariances}
\label{sec-constraints-invariances}

At first glance, it may seem that the total equation system from the previous section is overdetermined.
However, equations (\ref{eq-ym-constraint-1}, \ref{eq-ym-constraint-2}, \ref{eq-einstein-constraint-1}, \ref{eq-einstein-constraint-2}) are constraints, as already hinted, in the following sense.

\begin{proposition}
\label{prop-constraint-propagation}
    \leavevmode
    \begin{enumerate}
        \item Assume the Yang-Mills equations {\normalfont(\ref{eq-ym-a}--\ref{eq-ym-z})}, Higgs equation \eqref{eq-higgs-phi}, and Dirac equations {\normalfont(\ref{eq-dirac-xi+}--\ref{eq-dirac-eta-})} are satisfied. 
        If the Yang-Mills constraint equations {\normalfont (\ref{eq-ym-constraint-1}, \ref{eq-ym-constraint-2})} are satisfied at some point $s = s_0$, then they are satisfied for all $s$ for which the solution exists.
        \item With the same assumptions as in {\normalfont (i) above}, assume additionally the Einstein equations {\normalfont (\ref{eq-einstein-evo-ddr}, \ref{eq-einstein-evo-ddt})} are satisfied.
        If the Einstein constraint equations {\normalfont (\ref{eq-einstein-constraint-1}, \ref{eq-einstein-constraint-2})} are satisfied at some point $s = s_0$, then they are satisfied for all $s$ for which the solution exists.
    \end{enumerate}
\end{proposition}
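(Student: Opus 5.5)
The plan is to show that each constraint, after multiplication by a suitable nowhere-vanishing weight, satisfies a linear homogeneous first-order ODE whose coefficients are built from the solution, so that vanishing at $s_0$ propagates by uniqueness. Conceptually, (i) is conservation of the gauge current, since $\diff_\omega^*\diff_\omega^*F_\omega=0$, and (ii) is the contracted Bianchi identity together with $\operatorname{div}\mathfrak{T}=0$ on-shell. In the static reduction, however, I would verify both directly by differentiating the reduced expressions.

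\textbf{Step 1 (Yang-Mills constraints).} Write $C_1, C_2$ for the right-hand sides of \eqref{eq-ym-constraint-1} and \eqref{eq-ym-constraint-2}, and set $\tilde C_i = r^2e^{\alpha-\beta}C_i$. I then differentiate term by term.
\begin{itemize}
\item For the gauge term, $(e^{\alpha-\beta}\Im(\conj z z'))' = \Im(\conj z(e^{\alpha-\beta}z')')$. Substituting \eqref{eq-ym-z} kills all terms that are real multiples of $|z|^2$. What survives is the $\kappa|\phi|^2|z|^2$-free spinor part $\Im\big(\conj z\, r\,\xi_\pm\conj\eta_\pm\big)e^{\alpha+\beta}$ with the square-root coefficients.
\item For the Higgs term, $(r^2e^{\alpha-\beta}\Im(\conj\phi\phi'))' = r^2e^{\alpha+\beta}\Im(\conj\phi\cdot e^{-2\beta}[\ldots])$. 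Using \eqref{eq-higgs-phi}, only the Yukawa terms survive, namely $-m_\Y r^2e^{\alpha+\beta}\Im\big(\conj\phi(\Y\conj\xi_+\xi_-+\Y\conj\eta_+\eta_-)\big)$.
\item For the spinor terms, $(r^2e^{\alpha}|\xi_\pm|^2)' = 2r^2e^{\alpha}\Re(\conj\xi_\pm(\xi_\pm'+(\tfrac12\alpha'+\tfrac{r'}{r})\xi_\pm))$. This is exactly the combination appearing in the Dirac equations \eqref{eq-dirac-xi+}--\eqref{eq-dirac-eta-}. Substituting, the $\zeta$-terms are purely imaginary and drop out, leaving the $z\eta\conj\xi$ terms and the Yukawa terms.
\end{itemize}
Collecting everything, the $z$-cross terms cancel against the gauge contribution because the $\sqrt{\ell(d-\ell+1)}$ coefficients match. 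The Yukawa terms cancel against the Higgs contribution precisely because of the charge relations $q+h_+=h_-$ and $(p-2k)+(d_+-2\ell_+)=d_--2\ell_-$, which hold since $\ell_-=k+\ell_+$ and $p+d_+=d_-$. This is where the equivariance of the Yukawa map enters. Hence $\tilde C_i'=0$, so $C_i = c_i e^{\beta-\alpha}/r^2$, and $C_i(s_0)=0$ forces $c_i=0$.

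\textbf{Step 2 (Einstein constraints).} The constraint \eqref{eq-einstein-constraint-1} is a linear combination of $C_1, C_2$ (as noted before the statement), so it is handled by Step 1. For \eqref{eq-einstein-constraint-2}, let $E$ denote its right-hand side; up to the factor $r^2$ this is the $ss$-component of $-\Ein_g+\mathfrak T$. Only the $s$-component of the contracted Bianchi identity $\nabla^\mu(\Ein-\mathfrak T)_{\mu\nu}=0$ is nontrivial in the static spherical setting. It reads schematically
\[
\big(\mathcal E_{ss}\big)' + \Big(\alpha'+\frac{2r'}{r}\Big)\mathcal E_{ss} + \alpha'\mathcal E_{tt} - \frac{2r'}{r}\mathcal E_{\theta\theta} = \text{(terms in } \operatorname{div}\mathfrak T\text{)}.
\]
The $tt$ and $\theta\theta$ components vanish by \eqref{eq-einstein-evo-ddr} and \eqref{eq-einstein-evo-ddt}, after undoing the simplification that used the constraint itself, which reintroduces multiples of $E$. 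It then remains to show that $\operatorname{div}\mathfrak T=0$ in this component, modulo the Yang-Mills constraints. I would check this either by the general identity that $\operatorname{div}\mathfrak T$ is a combination of the Euler-Lagrange expressions of the matter fields, or by direct differentiation of the reduced $\mathfrak T_{ss}$ using the matter equations as in Step 1. This yields $E' = f(s)E$ for some function $f$ built from the solution, and therefore $E\equiv0$ once $E(s_0)=0$.

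\textbf{Main obstacle.} The hardest part is the Dirac contribution to $\operatorname{div}\mathfrak T$ in Step 2. The formula for $\mathfrak T^{\mathrm D}$ already presupposes the Dirac-Yukawa equation, and its $ss$-component involves $\Im(\conj\xi\xi')$, whose derivative must again be reduced by the Dirac equations. The Yukawa contributions to $\mathfrak T^{\mathrm D}$ and $\mathfrak T^{\mathrm H}$ must then cancel, using the same charge relations as in Step 1. I would first attempt this via the abstract identity (matter equations imply $\operatorname{div}\mathfrak T=0$, as derived in \cite{bs25, bls26}) and only fall back on brute-force differentiation if the frame and sign conventions make that transfer unclear.
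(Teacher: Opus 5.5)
Your Step 1 is the paper's argument for (i). The paper also shows that $r^2e^{\alpha-\beta}$ times each Yang--Mills constraint is constant, with the $z$-cross terms and the Yukawa terms cancelling through exactly the parameter relations you name. There is one slip: by \eqref{eq-higgs-phi} the surviving Higgs term is $+m_\Y r^2e^{\alpha+\beta}\Im\bigl(\conj\phi(\Y^{\ell_-}_{k,\ell_+}\conj\xi_+\xi_-+\Y^{\ell_--1}_{k,\ell_+-1}\conj\eta_+\eta_-)\bigr)$, and with your minus sign the Yukawa terms would not cancel.

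For the Einstein constraint your route differs from the paper's. The paper proves \eqref{eq-einstein-constraint-propagation-equation} by brute force in Appendix~\ref{appendix-einstein-constraint}. It splits $H$ into seven pieces and differentiates each using all reduced equations, including the second-order Einstein equations to eliminate $\alpha''$ and $r''$. You instead use $\operatorname{div}\Ein_g=0$ together with on-shell conservation of $\mathfrak T$, and this closes cleanly:
\begin{itemize}
\item With $\mathcal E=-\Ein_g+\mathfrak T$, and after removing $\xi_\pm',\eta_\pm'$ via the Dirac equations, \eqref{eq-einstein-evo-ddr} equals $\tfrac12(\mathcal E_{tt}-\mathcal E_{ss})$.
\item Likewise \eqref{eq-einstein-evo-ddt} equals $\mathcal E_{\theta\theta}+\tfrac12(\mathcal E_{tt}+\mathcal E_{ss})$.
\item So the two evolution equations give $\mathcal E_{tt}=\mathcal E_{ss}$ and $\mathcal E_{\theta\theta}=-\mathcal E_{ss}$.
\item Your divergence identity then reads $\mathcal E_{ss}'=-2(\alpha'+2r'/r)\mathcal E_{ss}$. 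For $H=r^2\mathcal E_{ss}$ this is $H'=-2(\alpha'+r'/r)H$, precisely the paper's propagation law.
\end{itemize}
Moreover, $(\operatorname{div}\mathfrak T)(\mathbf e_s)$ pairs the Yang--Mills Euler--Lagrange expression with $F_\omega(\cdot,\mathbf e_s)$. Its $\mathbf e^s$-component, which carries the Yang--Mills constraints, meets $F_\omega(\mathbf e_s,\mathbf e_s)=0$. Only the $a,b,z$ equations are used, so your qualifier ``modulo the Yang--Mills constraints'' is unnecessary.

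What your approach buys:
\begin{itemize}
\item It explains where the coefficient comes from.
\item It separates the purely geometric content from a single matter input.
\item It makes the ``any two of three'' claim of Remark~\ref{rem-einstein-constraints} transparent. It also shows the non-degeneracy that claim needs: $r'\neq0$ if one drops \eqref{eq-einstein-evo-ddt}, and $\alpha'+r'/r\neq0$ if one drops \eqref{eq-einstein-evo-ddr}.
\end{itemize}
The price is that this input must be justified: on-shell $\operatorname{div}\mathfrak T=0$ for the coupled Yang--Mills--Higgs--Dirac--Yukawa system, with the on-shell symmetrized Dirac tensor. It follows from diffeomorphism and gauge invariance of the matter action, with equivariance of $\Y$ handling the Yukawa/Higgs cross terms. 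Alternatively it can be verified by a reduced computation of $\mathfrak T_{ss}'$ alone. The paper's computation is self-contained and doubles as a consistency check on every reduced formula, at the cost of length and opacity.
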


\begin{remark}
\label{rem-einstein-constraints}
    Concerning the Einstein equations, we have already noted that (\ref{eq-einstein-constraint-1}) is just a combination of the Yang-Mills constraints (\ref{eq-ym-constraint-1}, \ref{eq-ym-constraint-2}).
    Of the remaining three Einstein equations (\ref{eq-einstein-constraint-2}, \ref{eq-einstein-evo-ddr}, \ref{eq-einstein-evo-ddt}) one can actually choose to solve any two, and show that the third one will be automatically implied, which is a slightly more general statement than the one given in Proposition \ref{prop-constraint-propagation} (ii).
    However it turns out that is generally beneficial for the analysis to solve the second-order equations (\ref{eq-einstein-evo-ddr}, \ref{eq-einstein-evo-ddt}), and keep the first-order equation (\ref{eq-einstein-constraint-2}) as a constraint, and therefore do not bother to prove the more general statement here.
\end{remark}

\begin{proof}
    Let us first treat the Yang-Mills constraints (\ref{eq-ym-constraint-1}, \ref{eq-ym-constraint-2}).
    To this end, set
    \begin{align*}
        F &= \frac{4}{r^2} \Im(\conj{z}z') - (p-2k) \,\Im(\conj{\phi}\phi') 
        \\
        &\quad - \frac{1}{2} e^\beta \left[ (d_+ - 2\ell_+)|\xi_+|^2 - (d_+ - 2\ell_+ + 2) |\eta_+|^2 - (d_- - 2\ell_-)|\xi_-|^2 + (d_- - 2\ell_- + 2)|\eta_-|^2 \right]
        \\[0.2cm]
        G &= q \Im(\conj{\phi}\phi') + \frac{1}{2} e^\beta \left[ h_+(|\xi_+|^2 - |\eta_+|^2) + h_- (- |\xi_-|^2 + |\eta_-|^2)\right]
    \end{align*}
    A calculation shows
    \begin{align*}
        \left[\frac{1}{r^2} \Im(\conj{z}z')\right]' =
        &\, -\left(\alpha' - \beta' + \frac{2r'}{r}\right)\frac{1}{r^2}\Im(\conj{z}z')
        \\
        &\, + \frac{1}{2r} e^{2\beta} \left[ \sqrt{\ell_+(d_+ - \ell_+ + 1)}\Im(z\conj{\xi}_+\eta_+) - \sqrt{\ell_-(d_- - \ell_- + 1)}\Im(z\conj{\xi}_-\eta_-) \right]
        \\[0.2cm]
        \left[ \Im(\conj{\phi}\phi') \right]'
        =
        &\, -\left(\alpha' - \beta' + \frac{2r'}{r}\right)\Im(\conj{\phi}\phi')
        - m_\Y e^{2\beta} \left[ \Y_{k,\ell_+}^{\ell_-} \Im(\phi \conj{\xi}_-\xi_+) + \Y_{k,\ell_+-1}^{\ell_--1} \Im(\phi \conj{\eta}_-\eta_+) \right]
        \\[0.2cm]
        \left[\frac{e^\beta}{2}|\xi_\pm|^2\right]'
        =
        &\, -\left(\alpha' - \beta' + \frac{2r'}{r}\right) \frac{e^\beta}{2}|\xi_\pm|^2
        \\
        &\, - \sqrt{\ell_\pm(d_\pm - \ell_\pm + 1)}\, \frac{e^{2\beta}}{r} \Im(z\conj{\xi}_\pm\eta_\pm)
        - m_\Y \Y_{k,\ell_+}^{\ell_-} e^{\beta} \Im(\phi\conj{\xi}_-\xi_+)
        \\[0.2cm]
        \left[\frac{e^\beta}{2}|\eta_\pm|^2\right]'
        =
        &\, -\left(\alpha' - \beta' + \frac{2r'}{r}\right) \frac{e^\beta}{2}|\eta_\pm|^2
        \\
        &\, - \sqrt{\ell_\pm(d_\pm - \ell_\pm + 1)}\, \frac{e^{2\beta}}{r} \Im(z\conj{\xi}_\pm\eta_\pm)
        + m_\Y \Y_{k,\ell_+-1}^{\ell_--1} e^{\beta} \Im(\phi\conj{\eta}_-\eta_+).
    \end{align*}
    Then due to the relations between the parameters $(p,q)$, $(d_+,h_+)$, $(d_-,h_-)$ and the invariant indices $k$, $\ell_+$, $\ell_-$, we get
    \begin{align*}
        F' =&\, -\left(\alpha' - \beta' + \frac{2r'}{r}\right) F
        \\
        &\,  + [2 + (d_+-2\ell_+) - (d_+ - 2\ell_+ + 2)]\sqrt{\ell_+(d_+ - \ell_+ + 1)}\,\frac{e^{2\beta}}{r}\Im(z\conj{\xi}_+\eta_+)
        \\
        &\, - [2 + (d_--2\ell_-) - (d_- -2\ell_- + 2)]\sqrt{\ell_-(d_- - \ell_- + 1)}\,\frac{e^{2\beta}}{r}\Im(z\conj{\xi}_-\eta_-)
        \\
        &\, + [(p - 2k)  + (d_+ - 2\ell_+) - (d_- - 2\ell_-)] \, m_\Y \Y_{k,\ell_+}^{\ell_-} e^{2\beta}\Im(\phi \conj{\xi}_-\xi_+)
        \\
        &\, + [(p - 2k)  + (d_+ - 2\ell_+ + 2) - (d_- - 2\ell_- + 2)] \, m_\Y \Y_{k,\ell_+}^{\ell_-} e^{2\beta}\Im(\phi \conj{\eta}_-\eta_+)
        \\[0.2cm]
        =&\, -\left(\alpha' - \beta' + \frac{2r'}{r}\right) F,
    \end{align*}
    and
    \begin{align*}
        G' =&\, -\left(\alpha' - \beta' + \frac{2r'}{r}\right) G
        + (-q - h_+ + h_-)\, m_\Y  e^{2\beta} \left[ \Y_{k,\ell_+}^{\ell_-} \Im(\phi \conj{\xi}_-\xi_+) + \Y_{k,\ell_+-1}^{\ell_--1} \Im(\phi \conj{\eta}_-\eta_+) \right]
        \\[0.2cm]
        =&\, -\left(\alpha' - \beta' + \frac{2r'}{r}\right) G.
    \end{align*}
    This shows that $r^2 e^{\alpha-\beta} F$ and $r^2 e^{\alpha-\beta} G$ are constants of motion and it follows that (\ref{eq-ym-constraint-1}, \ref{eq-ym-constraint-2}) propagate, if satisfied initially, proving part (i).

    Next, for the Einstein constraints (\ref{eq-einstein-constraint-1}, \ref{eq-einstein-constraint-2}), we note that (\ref{eq-einstein-constraint-1}) is actually a linear combination of the Yang-Mills constraints (\ref{eq-ym-constraint-1}, \ref{eq-ym-constraint-2}), schematically
    \begin{equation*}
        {\normalfont (\ref{eq-einstein-constraint-1})} = \frac12 a \cdot  {\normalfont (\ref{eq-ym-constraint-1})}  - \frac12 b \cdot {\normalfont (\ref{eq-ym-constraint-2})}.
    \end{equation*}
    Thus this constraint propagates by part (i).
    Finally, letting $H$ denote the RHS of (\ref{eq-einstein-constraint-2}), a rather long computation (cf.\ Appendix \ref{appendix-einstein-constraint}) shows that
    \begin{equation}\label{eq-einstein-constraint-propagation-equation}
        H' = -2\left( \alpha' + \frac{r'}{r} \right)H.
    \end{equation}
    Hence $r^2e^{2\alpha} H$ is a constant of motion and it follows that (\ref{eq-einstein-constraint-2}) propagates.
\end{proof}

In particular, Proposition \ref{prop-constraint-propagation} shows that the problem of finding solutions to the total equation system from \S \ref{sec-total-eqs} is well-posed as an initial value problem for the ODEs (\ref{eq-ym-a}--\ref{eq-ym-z}, \ref{eq-higgs-phi}, \ref{eq-dirac-xi+}--\ref{eq-dirac-eta-}, \ref{eq-einstein-evo-ddr}, \ref{eq-einstein-evo-ddt}), with initial data constrained by (\ref{eq-ym-constraint-1}, \ref{eq-ym-constraint-2}, \ref{eq-einstein-constraint-1}, \ref{eq-einstein-constraint-2}).

There is another constant of motion, namely the following.

\begin{proposition}
    If the Dirac equations {\normalfont (\ref{eq-dirac-xi+}--\ref{eq-dirac-eta-})} hold, then
    \begin{equation*}
        r^2e^{\alpha}(|\xi_+|^2 - |\eta_+|^2 - |\xi_-|^2 + |\eta_-|^2) \equiv \mathrm{const}.
    \end{equation*}
\end{proposition}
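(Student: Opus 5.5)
The plan is to rescale the spinor components so that the weight $r^2e^{\alpha}$ is absorbed, and then to show that the derivative of the resulting quadratic expression vanishes identically. Concretely, I will set $\hat\xi_\pm = r e^{\alpha/2}\xi_\pm$ and $\hat\eta_\pm = r e^{\alpha/2}\eta_\pm$. Then $\hat\xi_\pm' = re^{\alpha/2}\bigl[\xi_\pm' + (\tfrac12\alpha' + \tfrac{r'}{r})\xi_\pm\bigr]$, which is exactly the combination appearing in the Dirac equations (\ref{eq-dirac-xi+}--\ref{eq-dirac-eta-}). It therefore suffices to show that $Q = |\hat\xi_+|^2 - |\hat\eta_+|^2 - |\hat\xi_-|^2 + |\hat\eta_-|^2$ has $Q' = 0$.

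Multiplying each Dirac equation by $re^{\alpha/2}$ and solving for the derivative gives first-order equations of the form
\begin{equation*}
e^{-\beta}\hat\xi_+' = -ie^{-\alpha}\zeta^{(d_+,h_+)}_{\ell_+}\hat\xi_+ + i c_+\frac{z}{r}\hat\eta_+ - im_\Y \Y_{k,\ell_+}^{\ell_-}\conj\phi\,\hat\xi_-,
\end{equation*}
where $c_\pm = \sqrt{\ell_\pm(d_\pm-\ell_\pm+1)}$, and similarly for the other three components (with the signs read off from the equations). I will then compute $e^{-\beta}(|\hat\xi_+|^2)' = 2\Re(\conj{\hat\xi}_+ e^{-\beta}\hat\xi_+')$ and its analogues term by term. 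There are three kinds of terms:
\begin{enumerate}
\item The gauge-potential terms $ie^{-\alpha}\zeta|\cdot|^2$ are purely imaginary, since $\zeta$ is real, so they drop out on taking real parts.
\item The $z$-coupling terms pair $\hat\xi_+$ with $\hat\eta_+$. The $\hat\xi_+$ equation contributes $2c_+\Re(iz\conj{\hat\xi}_+\hat\eta_+)/r$, and the $\hat\eta_+$ equation contributes $2c_+\Re(-i\conj z\conj{\hat\eta}_+\hat\xi_+)/r$, which is the same real number because it is the real part of the complex conjugate. Hence they cancel in $|\hat\xi_+|^2 - |\hat\eta_+|^2$. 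The same happens for the minus-chirality pair in $-|\hat\xi_-|^2 + |\hat\eta_-|^2$.
\item The Yukawa terms pair $\hat\xi_+$ with $\hat\xi_-$ through $\Y_{k,\ell_+}^{\ell_-}$, and $\hat\eta_+$ with $\hat\eta_-$ through $\Y_{k,\ell_+-1}^{\ell_--1}$. For instance, $\hat\xi_+$ contributes $2m_\Y\Y\Re(-i\conj\phi\conj{\hat\xi}_+\hat\xi_-)$ and $\hat\xi_-$ contributes $2m_\Y\Y\Re(i\phi\hat\xi_+\conj{\hat\xi}_-)$. These are equal, so they cancel in $|\hat\xi_+|^2 - |\hat\xi_-|^2$. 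Likewise the $\eta$-pair cancels in $-|\hat\eta_+|^2 + |\hat\eta_-|^2$.
\end{enumerate}
Summing the four contributions gives $Q' = 0$, and since $Q = r^2e^{\alpha}(|\xi_+|^2 - |\eta_+|^2 - |\xi_-|^2 + |\eta_-|^2)$, this proves the claim.

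The degenerate cases, where some of $\eta_\pm$ or $\xi_\pm$ vanish or $z\equiv 0$, are covered by the convention that the corresponding coefficients ($c_\pm$ or the relevant $\Y$ entry) are zero, so no separate case analysis should be needed. The main obstacle is sign bookkeeping. Each of the four Dirac equations has its own signs on the derivative term and on the coupling terms, so I must verify carefully that the $z$-terms enter $(|\hat\xi_\pm|^2)'$ and $(|\hat\eta_\pm|^2)'$ with equal sign, which is what makes the relative minus sign cancel them. I must also check that the Yukawa terms enter $(|\hat\xi_+|^2)'$ and $(|\hat\xi_-|^2)'$ with equal sign, and similarly for the $\eta$'s. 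If a sign fails, the first suspect is the relative sign of the derivative terms in (\ref{eq-dirac-xi+}) versus (\ref{eq-dirac-xi-}). As a cross-check, the same cancellations were used in the proof of Proposition \ref{prop-constraint-propagation} when differentiating $\tfrac{e^\beta}{2}|\xi_\pm|^2$ and $\tfrac{e^\beta}{2}|\eta_\pm|^2$.
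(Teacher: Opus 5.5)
Your proof is correct and follows the paper's route: the paper's proof simply differentiates the expression and substitutes the Dirac equations (\ref{eq-dirac-xi+}--\ref{eq-dirac-eta-}). Your rescaling $\hat\xi_\pm = re^{\alpha/2}\xi_\pm$, $\hat\eta_\pm = re^{\alpha/2}\eta_\pm$ is a tidy way to absorb the weight $r^2e^{\alpha}$, and the $z$-terms and Yukawa terms do cancel pairwise with exactly the signs you claim; the degenerate cases are covered because $\sqrt{\ell_\pm(d_\pm-\ell_\pm+1)}$ vanishes at $\ell_\pm\in\{0,d_\pm+1\}$ and the Yukawa coefficients vanish by the stated conventions.
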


\begin{proof}
    This follows directly by differenting and using the Dirac equations (\ref{eq-dirac-xi+}--\ref{eq-dirac-eta-}).
\end{proof}

\begin{remark}
    The author is not entirely certain what this constant of motion corresponds to geometrically. 
    One gets a similar constant of motion also for the Riemannian Dirac-Yang-Mills equations in spherical symmetry, cf.\ \cite{dym-adam-marko}.
\end{remark}

We also note that the equations have a $\U(1)$-invariance
\begin{gather}
\label{eq-u(1)-invariance-tau3}
    z \mapsto e^{i\mu} z,
    \qquad
    \xi_+ \mapsto e^{\frac{i\mu}{2}} \xi_+,
    \qquad
    \xi_- \mapsto e^{\frac{i\mu}{2}} \xi_-,
    \qquad
    \eta_+ \mapsto e^{-\frac{i\mu}{2}} \eta_+,
    \qquad
    \eta_- \mapsto e^{-\frac{i\mu}{2}} \eta_-,
\end{gather}
for any fixed constant $\mu \in \R$,
as well as the additional $\U(1) \times \U(1)$-invariance
\begin{gather}
\label{eq-u(1)-invariance-tau4}
    \phi \mapsto e^{i\nu} \phi
    \qquad
    \xi_+ \mapsto e^{i\tau} \xi_+,
    \qquad
    \xi_- \mapsto e^{i(\nu+\tau)} \xi_-,
    \qquad
    \eta_+ \mapsto e^{i\tau} \eta_+,
    \qquad
    \eta_- \mapsto e^{i(\nu+\tau)} \eta_-,
\end{gather}
for any fixed constants $\nu,\tau \in \R$. 
They are a consequence of the gauge-invariance of the equations and the residual gauge freedom.

\subsection{Coordinate choices}
\label{sec-coordinate-choices}

In the previous sections we have written the system down for the most general choice \eqref{eq-static-sph-sym-metric-general} of metric $g$, depending on the three functions $\alpha$, $\beta$, $r$ with $r > 0$. 
However, only two of these coefficients are truly independent, in the sense that one can without loss of generality choose $s$ in a way that the third coefficient is determined by some other desirable condition.
There are three natural choices for $s$, namely:
\begin{enumerate}
    \item the \emph{radial} coordinate system such that $s = r$, i.e. the metric is
    \begin{equation*}
        g = -e^{2\alpha(r)} \diff t^2 + e^{2\beta(r)} \diff r^2 + r^2 g_{\mathbb{S}^2},
    \end{equation*}
    and primes correspond to differentiation by $r$ so that $r' = 1$ and $r'' = 0$.
    This is the most commonly used and natural coordinate system in literature when studying spherical symmetry, but it comes at the disadvantage of breaking down at points where $r$ is no longer monotone (i.e.\ at marginally trapped spheres), as well as making the total system non-autonomous. In this case it is also common (in particular when studying black hole solutions) to replace $\alpha$ and $\beta$ by \emph{Schwarzschild-like} variables $M$ and $\delta$, which we comment on in Appendix \ref{sec-schwarzschild-coordinates}.
    \item the \emph{arc-length} coordinate system such that the independent variable $s$ is fixed by demanding that $e^\beta = 1$, so that the metric is
    \begin{equation*}
        g = -e^{2\alpha(s)} \diff t^2 + \diff s^2 + r(s)^2 g_{\mathbb{S}^2}
    \end{equation*}
    and $\beta' = 0$. In this case, the constant $t$-slices have the form of a warped product $I\times_{r}\mathbb{S}^2$ with respect to the radius function $r$. In particular, for fixed $t\in\R$ and $kH \in \mathbb{S}^2$, the curves $s \mapsto (t, s, kH)$ for fixed $t$ and $kH$ are parametrized by arc-length. 
    This system no longer suffers the same singularity problems at stationary points of $r$ and is autonomous.
    \item the \emph{log-polar} coordinate system such that the independent variable $s$ is fixed by demanding that $e^\beta = r$, so that the metric is
    \begin{equation*}
        g = -e^{2\alpha(s)} \diff t^2 + r(s)^2 (\diff s^2 + g_{\mathbb{S}^2}),
    \end{equation*}
    and $\beta' = r'/r$. In this case, the constant $t$-slices are conformally related (via the radius function $r$) to the cylinder $I \times \mathbb{S}^2$. 
    This system also does not suffer from singularities at stationary points of $r$, and has the benefit of making the equations (in particular the Yang-Mills equation) depend more simply on the metric coefficients, while also being autonomous.
\end{enumerate}

\subsection{Initial conditions}

By the invariances (\ref{eq-u(1)-invariance-tau3}, \ref{eq-u(1)-invariance-tau4}) of the equations, we can assume without loss of generality that $z$ and $\phi$ are real and non-negative initially, so that
\begin{equation*}
    z_0 = |z_0| \in \R,
    \qquad
    \phi_0 = |\phi_0| \in \R.
\end{equation*}
The electric variables of the Yang-Mills potential $a, b, a', b'$ are initially unconstrained so we may take any
\begin{equation*}
    a_0, b_0, a'_0, b'_0 \in \R.
\end{equation*}
The remaining variables can then be adjusted to satisfy the constraints.
More precisely:
\begin{itemize}
    \item If $q = 0$, then necessarily $h_+ = h_-$ and \eqref{eq-ym-constraint-2} demands that
    \begin{equation*}
        |\xi_+|^2 - |\eta_+|^2 - |\xi_-|^2 + |\eta_-|^2 = 0,
    \end{equation*}
    so in particular this must be satisfied initially. Furthermore, in this case necessarily $k = p/2$ and $p$ is even. Then \eqref{eq-ym-constraint-1} reads
    \begin{align*}
        0 =  \frac{4}{r^2} \Im(\conj{z}z') 
        - \frac{1}{2} e^\beta \Big[ & (d_+ - 2\ell_+)|\xi_+|^2 - (d_+ - 2\ell_+ + 2) |\eta_+|^2\\
        &- (d_- - 2\ell_-)|\xi_-|^2 + (d_- - 2\ell_- + 2)|\eta_-|^2 \Big].
    \end{align*}
    Thus $\phi_0' \in \C$ is also unconstrained in this case, while (since we are assuming $z_0 = |z_0|$ is real), we are forced to demand
    \begin{align*}
        |z_0|\Im (z_0') = \frac{r_0^2 e^{\beta_0}}{8} \Big[ & (d_+ - 2\ell_+)|(\xi_+)_0|^2 - (d_+ - 2\ell_+ + 2) |(\eta_+)_0|^2
        \\
        &- (d_- - 2\ell_-)|(\xi_-)_0|^2 + (d_- - 2\ell_- + 2)|(\eta_-)_0|^2 \Big],
    \end{align*}
    and $\Re(z_0') \in \R$ is free.

    \item If $q \not= 0$, we can use \eqref{eq-ym-constraint-2} to simplify \eqref{eq-ym-constraint-1} and get
    \begin{align*}
        |z_0|\Im(z_0') = - \frac{r_0^2e^{\beta_0}}{8q} \Big[&
        [(p-2k)h_+ - (d_+ - 2\ell_+)q] \, |(\xi_+)_0|^2
        \\
        &- [(p-2k)h_+ - (d_+ - 2\ell_+ + 2)q]\, |(\eta_+)_0|^2
        \\
        &- [(p-2k)h_- - (d_- - 2\ell_-)q]\, |(\xi_-)_0|^2
        \\
        &+ [(p-2k)h_- - (d_- - 2\ell_- + 2)q]\, |(\eta_-)_0|^2
        \Big].
    \end{align*}
    On the other hand $\xi_\pm, \eta_\pm$ themselves are unconstrained but $\Im(\phi_0')$ (again, since we are assuming $\phi_0 = |\phi_0|$ is real) must be determined through \eqref{eq-ym-constraint-2} via
    \begin{equation*}
        |\phi_0|\Im(\phi_0') = -\frac{1}{2q} e^{\beta_0} \left[ h_+|(\xi_+)_0|^2 - h_+|(\eta_+)_0|^2 - h_- |(\xi_-)_0|^2 + h_-|(\eta_-)_0|^2\right]
    \end{equation*}
    and $\Re(\phi_0') \in \R$ is free. 

    Assuming $n \not= 0$, using the relations between parameters we could further simplify
    \begin{align*}
        \Im(\conj{z}z') 
        &= - \frac{r^2e^{\beta}}{8n} 
        \left( |\xi_+|^2 + |\eta_+|^2 - |\xi_-|^2 - |\eta_-|^2 \right).
    \end{align*}
    In this case it is sometimes convenient to also solve the constraints for $|\xi_\pm|^2$ so that
    \begin{align*}
        |\xi_+|^2 &= - \frac{8nh_-}{qr^2} e^{-\beta} \Im(\conj{z}z') + 2e^{-\beta} \Im(\conj{\phi}\phi') - \frac{h_+ + h_-}{q} |\eta_+|^2 + \frac{2h_-}{q} |\eta_-|^2
        \\
        |\xi_-|^2 &= - \frac{8nh_+}{qr^2} e^{-\beta} \Im(\conj{z}z') + 2e^{-\beta}\Im(\conj{\phi}\phi') - \frac{2h_+}{q} |\eta_+|^2 + \frac{h_+ + h_-}{q} |\eta_-|^2.
    \end{align*}
\end{itemize}
On the other hand, the initial conditions $\alpha_0$, $\beta_0$, $r_0$ for the metric coefficients will depend additionally on the chosen coordinate system (cf.\ \ref{sec-coordinate-choices}), but in each case they will be parametrized by two real numbers.

\subsection{Trivial solutions}
\label{sec-trivial-solutions}

We would like to mention two trivial Reissner-Nordstr\"om type solutions of the system.
\begin{enumerate}
\item Set the bundle parameters to $n=m$.
Let
\begin{equation*}
    \phi \equiv \lambda e^{i\arg(\phi_0)}, \qquad \xi_+ = \xi_- = \eta_+ = \eta_- = z = 0,
\end{equation*}
and
\begin{equation*}
    a(r) = b(r) = a_0 + \frac{q_0}{r},
\end{equation*}
while the metric coefficients describe the Reissner-Nordstr\"om black hole
\begin{equation*}
    e^{2\alpha(r)} = e^{-2\beta(r)} = 1 - \frac{r_0}{r} + \frac{q_0^2 + n^2}{r^2}
\end{equation*}
where we assume without loss of generality (i.e.\ by rescaling the time variable) that $\alpha+\beta = 0$ (more generally the equations imply that this is a constant).
\item Set
\begin{equation*}
    b(r) = b_0 + \frac{q_0}{r} \qquad \xi_+ = \xi_- = \eta_+ = \eta_- = a = z = \phi = 0,
\end{equation*}
and the metric describes the Reissner-Nordstr\"om-deSitter black hole
\begin{equation*}
    e^{2\alpha(r)} = e^{-2\beta(r)} = 1 - \frac{r_0}{r} + \frac{q_0^2+n^2+m^2}{2r^2} - \frac{\lambda^4r^2}{12}. 
\end{equation*}
Moreover, if $n = 1$ then there is another such solution, obtained namely by setting
\begin{equation*}
    z = e^{i\arg(z_0)}
\end{equation*}
and keeping the remaining variables as above, in which case the metric again describes the Reissner-Nordstr\"om-deSitter black hole
\begin{equation*}
    e^{2\alpha(r)} = e^{-2\beta(r)} = 1 - \frac{r_0}{r} + \frac{q_0^2+m^2}{2r^2} - \frac{\lambda^4r^2}{12}. 
\end{equation*}
\end{enumerate}

\subsection{Further reductions}
\label{sec-further-reductions}

Given the number of equations in the total coupled system, one might instinctively want to make further simplifications and assumptions that reduce the number of variables. However, most of the usual "tricks" one uses to simplify related simpler problems do not work here, for several different reasons which we briefly discuss here.

Firstly, in $\SU(2)$ Yang-Mills-Higgs theory one often assumes that the variables $z$ and $\phi$ are purely real. 
This is perfectly valid in a Yang-Mills-Higgs theory without coupling to Dirac and can be done without loss of generality, since the constraint equations (\ref{eq-ym-constraint-1}, \ref{eq-ym-constraint-2}) imply that $\Im(\conj{z}z') = \Im(\conj{\phi}\phi') = 0$ which enables a constant gauge transformation to make $z$ and $\phi$ real-valued. 
Here, however, there are additional $\xi_\pm$ and $\eta_\pm$ terms in the constraints. 
For certain choices of parameters, these terms do vanish, as is the case for the particular ansatz in \cite{finster-smoller-yau-edym-solutions}, but in general these additional terms are truly present and non-zero in general, cf.\ e.g.\ (\ref{eq-ym-constraint-1-lepton-radial}, \ref{eq-ym-constraint-2-lepton-radial}) below.

Secondly, in $\SU(2)$ (Einstein-)Yang-Mills theory one also often assumes that the potential is in a temporal gauge, so that $a=b=0$.
This cannot be assumed by simple gauge transformations since, as noted in \S \ref{sec-yang-mills-ansatz}, this would require a $t$-dependent transformation which is not compatible with the staticity assumption.
Nevertheless, it has been shown in \cite{bizon-popp} that for the classical $\SU(2)$ Einstein-Yang-Mills system, one can still assume $a=0$ without losing generality since any asymptotically flat solution with $a \not= 0$ necessary has $z \equiv 0$, i.e.\ it must be abelian.
Here again, however, such arguments do not apply due to the additional twisted spinor terms in \eqref{eq-ym-z}, which break certain monotonicity properties that the proof in \cite{bizon-popp} is based on.

Finally, we would like to mention that for the Dirac sector one also often likes to assume that the associated variables are related via conjugations and/or are real-valued, but this cannot be assumed in general for the model considered here, most notably because of the imaginary unit in the Dirac equations (\ref{eq-dirac-xi+}--\ref{eq-dirac-eta-}), which also relates to our first point above.
Part of the reason for the lack of the well-known reductions as described above is that the model considered here is truly \emph{chiral} in the sense that the spinor fields of positive/negative chirality are twisted by vector bundles associated to non-isomorphic representations (as is the case in the Standard Model).

Therefore, the model considered here already seems to be fully reduced in the sense that any further reductions lose generality (although the considerations above of course do not prove this definitively).

\section{Lepton-like parameters}
\label{sec-lepton-like-params}

The model proposed in \S \ref{sec-sphsym-model} assumes that the representation parameters satisfying all the necessary conditions exist, without fixing any particular choice of such parameters.
We would now like to demonstrate a working choice of these parameters.
The resulting model is in many ways reminiscent of the lepton sector of the Standard Model, although it does not match it exactly due to the choice of $\U(1)$-representation parameters.
Therefore we call it \emph{lepton-like}.

\subsection{Parameters}

Firstly, we choose the structure parameters $n=m=1$.
We let
\begin{equation*}
    d_+ = 0 \quad \text{and} \quad p = d_- = 1,
\end{equation*}
so that $\SU(2)$ acts trivially on the fermionic fields of positive chirality, and via the fundamental representation on the fermionic fields of negative chirality as well as on the Higgs field.
We also let
\begin{equation*}
    q = 1, \quad h_+ = -1, \quad h_- = 0,
\end{equation*}
which describes the action of $\U(1)$ on the corresponding fields 
(herein lies the main difference between the model presented here and the actual lepton sector of the Standard Model - the action of the $\U(1)$ group is different, which then also changes the charges of the corresponding particles).
Finally, we can put
\begin{equation*}
    k = 1, \qquad \ell_+ = 0, \qquad \ell_- = 1,
\end{equation*}
so that the relations from \S \ref{sec-sph-sym-general-parameters} are all satisfied.
The fields are given (with respect to the same section $\epsilon \times \tilde{\sigma}$ as in \S \ref{sec-sphsym-model}) by
\begin{align*}
    \omega =& \, \mathbf{e}^t \otimes e^{-\alpha(s)}(a\tau_3 + b\tau_4)
    + \mathbf{e}^\varphi \otimes \frac{\cot\theta}{r}\,(\tau_3 + \tau_4)
    \\
    &+ \frac{\Re(z)}{r} \,(\mathbf{e}^\theta \otimes \tau_1 + \mathbf{e}^\varphi \otimes \tau_2) + \frac{\Im(z)}{r} \,(\mathbf{e}^\theta \otimes \tau_2 - \mathbf{e}^\varphi \otimes \tau_1),
    \\[0.2cm]
    \Phi =& \, \phi(s)\, w_1
    \\[0.2cm]
    \Psi =& \, \xi_+(s)\, \u_+ \otimes v^+_0 + \xi_-(s)\, \u_- \otimes v^-_1 + \eta_-(s)\, \d_- \otimes v^-_{0}.
\end{align*}
The Yukawa map then simplifies to
\begin{equation*}
    \Y_\Phi \Psi = -im_\Y(\phi\xi_+ \u_+ \otimes v_1^- + \conj{\phi}\xi_- \u_- \otimes v_0^+),
    \qquad
    \langle \Psi, i\Y_\Phi \Psi \rangle = 2m_\Y\Re(\phi\conj{\xi}_-\xi_+).
\end{equation*}
In particular, note that $\xi_\pm$ will give each other mass, just like the right-/left-handed electrons in the Standard Model.
On the other hand, $\eta_-$ does not appear in the Yukawa coupling (since it has no corresponding positive chirality field) and will consequently be massless, just like the neutrino in the Standard Model.

\subsection{Simplified equations}
We state here the simplified equations with respect to the chosen lepton-like parameters, and with respect to the radial coordinate $s = r$.
The matter field equations are
\footnotesize
\begin{subequations}
\begin{empheq}[left=\empheqlbrace]{align}
    0 =&\, \Im(\conj{z}z')
    + \frac{1}{8} r^2 e^\beta (|\xi_+|^2 - |\xi_-|^2 - |\eta_-|^2).
    \label{eq-ym-constraint-1-lepton-radial}
    \\[0.2cm]
    0 =&\, \Im(\conj{\phi}\phi') - \frac{1}{2} e^\beta|\xi_+|^2
    \label{eq-ym-constraint-2-lepton-radial}
    \\[0.2cm]
    0=&\, 
    \xi_+' + \left(\frac12\alpha' + \frac{1}{r}\right)\xi_+ 
    + ie^\beta\left( \frac{1}{2}e^{-\alpha}b \xi_+ 
    + m_\Y \conj{\phi} \xi_- \right)
    \label{eq-dirac-xi+-lepton}
    \\[0.2cm]
    0=& \,
    \xi_-' + \left(\frac12\alpha' + \frac{1}{r}\right)\xi_- 
    - ie^\beta\left(\frac{1}{2} e^{-\alpha} a \xi_-
    +  \frac{1}{r} z\eta_- 
    + m_\Y \phi\xi_+\right) 
    \label{eq-dirac-xi--lepton}
    \\[0.2cm]
    0=& \,
    \eta_-' + \left(\frac12 \alpha' + \frac{1}{r}\right)\eta_- 
    + ie^\beta\left(
    - \frac{1}{2}e^{-\alpha} a \eta_- 
    + \frac{1}{r}\conj{z}\xi_-\right)
    \label{eq-dirac-eta--lepton}
    \\[0.2cm]
    0 = &\,  \phi'' + \left(\alpha' - \beta' + \frac{2}{r} \right)\phi' 
    + e^{2\beta} \left[ \left(\frac14 e^{-2\alpha}(a-b)^2 
    - \frac{|z|^2}{2r^2} + \lambda^2 - |\phi|^2 \right) \phi
    - m_\Y  \conj{\xi}_+ \xi_-\right]
    \label{eq-higgs-phi-lepton}
    \\[0.2cm]
    0 =& \, a'' + \left(\frac{2}{r}-\alpha'-\beta'\right) a'
    + e^{2\beta}\left[ - \frac{2}{r^2} a|z|^2 
    - \frac14 (a-b) |\phi|^2
    + \frac{1}{4}e^{\alpha} (|\xi_-|^2 - |\eta_-|^2 ) \right]
    \label{eq-ym-a-lepton}
    \\[0.2cm]
    0 =& \,
    b'' + \left( \frac{2}{r} - \alpha' - \beta' \right)b'
    + e^{2\beta}\left[ \frac{1}{4} (a-b) |\phi|^2
    + \frac{1}{4} e^{\alpha}  |\xi_+|^2 \right] 
    \label{eq-ym-b-lepton}
    \\[0.2cm]
    0 =& \,
    z'' + (\alpha' - \beta')z'
    + e^{2\beta}\left[
    \frac{1}{r^2} z(1-|z|^2) 
    + z\left(e^{-2\alpha} a^2 - \frac{1}{4}|\phi|^2\right) 
    - \frac{1}{2} r \xi_-\conj{\eta}_- \right]
    \label{eq-ym-z-lepton}
\end{empheq}
\end{subequations}
\normalsize
and the Einstein equations are
\footnotesize
\begin{subequations}
\begin{empheq}[left=\empheqlbrace]{align}
    0 =&\, 
    1 - e^{-2\beta}(2r\alpha' + 1)
    \nonumber
    \\
    &- \frac12 r^2e^{-2\alpha-2\beta} (a'{}^2 + b'{}^2)
    +  e^{-2\alpha}a^2|z|^2 + e^{-2\beta}|z'|^2 - \frac{(1-|z|^2)^2+1}{2r^2}
    \nonumber
    \\
    & + \frac18 r^2 e^{-2\alpha} (a-b)^2|\phi|^2 + \frac12 r^2 e^{-2\beta}|\phi'|^2 - \frac14 |z|^2|\phi|^2 - \frac14 r^2 (\lambda^2 - |\phi|^2)^2
    \nonumber
    \\[0.1cm]
    &- \frac12 r^2 e^{-\alpha} \left[ b|\xi_+|^2 +  a|\xi_-|^2 - a|\eta_-|^2 \right]
    - r \Re(z\conj{\xi}_-\eta_-)
    - m_\Y r^2 \Re(\phi\conj{\xi}_-\xi_+)
    \label{eq-einstein-constraint-2-radial-lepton}
    \\[0.2cm]
    0 =& \,e^{-2\beta} \frac{1}{r}\left(\alpha' - \beta'\right) + \frac{e^{-2\beta}-1}{r^2}
    \nonumber
    \\[0.1cm]
    &\, + \frac12 e^{-2\alpha-2\beta}(a'{}^2 + b'{}^2) + \frac{(1-|z|^2)^2 + 1}{2r^4} + \frac{|z|^2 |\phi|^2}{4r^2} + \frac12 U(\phi)
    \nonumber
    \\[0.1cm]
    &\, + \frac{1}{2r}\Re(z\conj{\xi}_-\eta_-)
    + \frac12 m_\Y \Re(\phi\conj{\xi}_-\xi_+),
    \label{eq-einstein-evo-ddr-lepton}
    \\[0.2cm]
    0 =&\, -e^{-2\beta}\left[\alpha'' + \left( \alpha' - \beta' + \frac{2}{r} \right)\alpha'\right]
    \nonumber
    \\
    &+ \frac12 e^{-2\alpha-2\beta} (a'{}^2 + b'{}^2)
    + \frac{1}{r^2}\left (e^{-2\alpha}a^2|z|^2 + e^{-2\beta}|z'|^2 
    + \frac{(1-|z|^2)^2 + 1}{2r^2} \right)
    + \frac14 e^{-2\alpha}(a-b)^2|\phi|^2
    \nonumber
    \\
    & - \frac12 U(\phi) -\frac12 e^{-\alpha}( b |\xi_+|^2 +  a|\xi_-|^2
    - a |\eta_-|^2) - \frac12 m_\Y \Re(\phi\conj{\xi}_-\xi_+).
    \label{eq-einstein-evo-ddt-lepton}
\end{empheq}
\end{subequations}
\normalsize

\subsection{Abelian reduction}
\label{sec-abelian}

The lepton-like system can be further reduced (though not without loss of generality) by assuming
\begin{equation*}
    z = \eta_- = 0, \qquad a = b, \qquad \xi_- = \conj{\xi}_+ =: \xi.
\end{equation*}
This essentially turns off the weak coupling so that the model degenerates to a Dirac-Maxwell-Higgs model.
Then the SM-system reduces to (with respect to the radial coordinate)
\begin{align*}
    0 &= \Im(\conj{\phi} \phi') - \frac12 e^{\beta} |\xi|^2,
    \\
    0 &= \xi' + \left(\frac12\alpha' + \frac{1}{r}\right)\xi - ie^\beta \left( \frac12 e^{-\alpha}a\xi + m_\Y \phi \conj{\xi} \right),
    \\
    0 &= \phi'' + \left( \alpha' - \beta' + \frac{2}{r} \right)\phi' + e^{2\beta}\left[ (\lambda^2 - |\phi|^2)\phi - m_\Y \xi^2 \right],
    \\
    0 &= a'' + \left( \frac{2}{r} - \alpha' - \beta' \right) a' + \frac14 e^{\alpha+2\beta} |\xi|^2,
\end{align*}
and the Einstein equations are
(in terms of the Schwarzschild variables, cf.\ Appendix \ref{sec-schwarzschild-coordinates})
\begin{align*}
    M' =&\, \frac12 r^2 e^{2\delta} a'{}^2 + \frac{1}{2r^2}
        + \frac14 r^2 N |\phi'|^2
        + \frac14 r^2 U(\phi)
        + \frac14 N^{-\frac12} r^2 e^\delta a |\xi|^2
    \\[0.2cm]
    \delta' =& 
        - \frac12 r|\phi'|^2
        + \frac12 re^{\delta} N^{-\frac32} a|\xi|^2
        + \frac12 m_\Y rN^{-1}  \Re(\conj{\phi}\xi^2)
\end{align*}

Note that the trival Reissner-Nordström (RN) type solutions from \S \ref{sec-trivial-solutions} (with $z=0$) fall into this category.
Therefore, it may be beneficial to first understand this system well, in particular perturbing it around the known RN type solutions, before diving into the analysis of the general system.

\begin{appendix}

\section{Operator formulas}
\label{appendix-operator-formulas}

In \S \ref{sec-total-eqs} we presented the total reduced equation system, i.e.\ the reduction of (\ref{eq-yang-mills}--\ref{eq-einstein}) to static spherical symmetry with the ansätze derived in \S \ref{sec-static-spherical-symmetry}.
The goal of this appendix is to present the full formulas for the operators appearing in the general system (\ref{eq-yang-mills}--\ref{eq-einstein}), so that one can see which component of the fields each equation from \S \ref{sec-total-eqs} originates from.

\subsection{Dirac-Yukawa sector}

Using the formulae from \S \ref{sec-dirac-operator-formulas} and the considerations from \S \ref{sec-chiral}, we see that the Dirac-Yukawa operator then has components
\footnotesize
\begin{align*}
    \dirac_\omega&\Psi+\Y_\Phi\Psi 
    \\[0.2cm]
    =&\mathrel{\phantom{+}} \bigg\{ 
    - e^{-\beta} \left[\xi_+' + \left(\frac12\alpha' + \frac{r'}{r}\right)\xi_+ \right]
    - ie^{-\alpha} \zeta^{(d_+, h_+)}_{\ell_+} \xi_+ 
    + i\sqrt{\ell_+(d_+-\ell_++1)}\,\frac{z\eta_+}{r} - im_\Y \Y_{k,\ell_+}^{\ell_-} \conj{\phi} \xi_- \bigg\} \; \u_- \otimes v_{\ell_+}^+
    \\[0.2cm]
    &+ \bigg\{ e^{-\beta} \left[\xi_-' + \left(\frac12\alpha' + \frac{r'}{r}\right)\xi_- \right] - ie^{-\alpha} \zeta^{(d_-, h_-)}_{\ell_-} \xi_-
    - i\sqrt{\ell_-(d_--\ell_-+1)}\,\frac{z\eta_-}{r} - im_\Y \Y_{k,\ell_+}^{\ell_-} \phi\xi_+ \bigg\} \; \u_+ \otimes v_{\ell_-}^-
    \\[0.2cm]
    &+\bigg\{ e^{-\beta} \left[\eta_+' + \left(\frac12 \alpha' + \frac{r'}{r}\right)\eta_+\right]  -ie^{-\alpha}\zeta^{(d_+, h_+)}_{\ell_+-1} \eta_+ 
    + i\sqrt{\ell_+(d_+-\ell_++1)}\,\frac{\conj{z}\xi_+}{r} - im_\Y \Y_{k,\ell_+-1}^{\ell_--1} \conj{\phi}\eta_- \bigg\} \; \d_- \otimes v_{\ell_+-1}^+
    \\[0.2cm]
    &+\bigg\{-e^{-\beta} \left[\eta_-' + \left(\frac12 \alpha' + \frac{r'}{r}\right)\eta_-\right]  -ie^{-\alpha} \zeta^{(d_-,h_-)}_{\ell_--1}\eta_- 
    - i\sqrt{\ell_-(d_--\ell_-+1)}\,\frac{\conj{z}\xi_-}{r} - im_\Y \Y_{k,\ell_+-1}^{\ell_--1} \phi\eta_+ \bigg\}\; \d_+ \otimes v_{\ell_--1}^-.
\end{align*}
\normalsize

\subsection{Higgs sector}

Firstly, we note that the Yukawa current can directly be computed from (\ref{eq-yukawa-current}, \ref{eq-yukawa-map-general}) as
\begin{equation*}
    \langle \Psi, i\Y^- \Psi \rangle = m_\Y \left(\Y_{k,\ell_+}^{\ell_-}\, \conj{\xi}_+ \xi_- + \Y_{k,\ell_+-1}^{\ell_--1}\, \conj{\eta}_+\eta_-\right) w_k
\end{equation*}
(note that $k$ is fixed here, i.e.\ no summation).
The calculations from \S \ref{sec-higgs-ansatz} imply that
\begin{align*}
    \Box_\omega \Phi & - \frac12 \grad U_\Phi - \langle \Psi, i\Y^-\Psi \rangle
    \\[0.1cm]
    = \bigg\{&e^{-2\beta} \bigg[ \phi'' + \left(\alpha' - \beta' + \frac{2r'}{r} \right)\phi' \bigg]
    \\
    &+ \left[e^{-2\alpha} (\zeta^{(p,q)}_k)^2   
    - \frac{\kappa|z|^2}{2r^2} + \lambda^2 - |\phi|^2 \right] \phi
    - m_\Y \left(\Y_{k,\ell_+}^{\ell_-}\, \conj{\xi}_+ \xi_- + \Y_{k,\ell_+-1}^{\ell_--1}\, \conj{\eta}_+\eta_-\right)\bigg\} \, w_k,
\end{align*}
where we recall that $\kappa = k(p-k+1) + (k+1)(p-k)$.

\subsection{Yang-Mills sector}

From \eqref{eq-higgs-covariant-derivative}, we compute
\begin{align*}
    \Re\langle\nabla_\omega\Phi \otimes \rho_*\Phi \rangle
    =&
    \left[ e^{-\alpha} \,\zeta^{(p,q)}_k |\phi|^2 \, \mathbf{e}^t + e^{-\beta} \,\Im(\conj{\phi}\phi') \,\mathbf{e}^s \right] \otimes \frac{q}{2}\left(\frac{m}{n}\tau_3 - \tau_4\right)
    \\[0.1cm]
    &+ \frac{\kappa|\phi|^2}{4r}\left[ \Re(z)(\mathbf{e}^\theta \otimes \tau_1 + \mathbf{e}^\varphi \otimes \tau_2) + \Im(z)(\mathbf{e}^\theta\otimes\tau_2 - \mathbf{e}^\varphi\otimes\tau_1) \right].
\end{align*}
Furthermore, using the considerations from \S \ref{sec-chiral}, we get
\footnotesize
\begin{align*}
    &\Im\langle {\Id} \cdot \Psi \otimes \chi_*\Psi \rangle
    \\[0.1cm]
    =&\, -\frac{1}{2}  \left[ (d_+ - 2\ell_+)|\xi_+|^2 + (d_+ - 2\ell_+ + 2) |\eta_+|^2 + (d_- - 2\ell_-)|\xi_-|^2 + (d_- - 2\ell_- + 2)|\eta_-|^2 \right]\mathbf{e}^t\otimes\tau_3
    \\[0.1cm]
    & -\frac{1}{2} \left[ h_+(|\xi_+|^2 + |\eta_+|^2) + h_-(|\xi_-|^2 + |\eta_-|^2) \right] \mathbf{e}^t \otimes \tau_4
    \\[0.1cm]
    &+ \frac{1}{2} \left[ (d_+ - 2\ell_+)|\xi_+|^2 - (d_+ - 2\ell_+ + 2) |\eta_+|^2 - (d_- - 2\ell_-)|\xi_-|^2 + (d_- - 2\ell_- + 2)|\eta_-|^2 \right] \mathbf{e}^s \otimes\tau_3
    \\[0.1cm]
    &+ \frac{1}{2} \left[ h_+(|\xi_+|^2 - |\eta_+|^2) + h_- (- |\xi_-|^2 + |\eta_-|^2)\right] \mathbf{e}^s \otimes \tau_4
    \\[0.1cm]
    & + \left( \sqrt{\ell_+(d_+-\ell_++1)}\, \Re(\xi_+\conj{\eta}_+) - \sqrt{\ell_-(d_--\ell_-+1)}\, \Re(\xi_-\conj{\eta}_-) \right) (\mathbf{e}^\theta \otimes \tau_1 + \mathbf{e}^\varphi \otimes \tau_2)
    \\[0.1cm]
    & + \left( \sqrt{\ell_+(d_+-\ell_++1)}\, \Im(\xi_+\conj{\eta}_+) - \sqrt{\ell_-(d_--\ell_-+1)}\, \Im(\xi_-\conj{\eta}_-) \right)
    (\mathbf{e}^\theta \otimes \tau_2 - \mathbf{e}^\varphi \otimes \tau_1).
\end{align*}
\normalsize
It thus follows from \eqref{eq-ym-operator} that the coupled Yang-Mills operator is given by
\footnotesize
\begin{align*}
    \diff_\omega^\ast & F_\omega + \Re \langle \nabla_{\omega}\Phi \otimes \rho_*\Phi \rangle - \frac12 \Im \langle \id\cdot\Psi \otimes \chi_*\Psi \rangle
    \\[0.2cm]
    =& \mathbin{\phantom{+}} \bigg\{ - \frac{1}{r^2}e^{-\beta} (r^2 e^{-\alpha-\beta}a')' + \frac{2}{r^2} e^{-\alpha} a|z|^2 + \frac{qm}{2n} e^{-\alpha} \,\zeta^{(p,q)}_k |\phi|^2
    \\
    & \qquad + \frac{1}{4}  \left[ (d_+ - 2\ell_+)|\xi_+|^2 + (d_+ - 2\ell_+ + 2) |\eta_+|^2 + (d_- - 2\ell_-)|\xi_-|^2 + (d_- - 2\ell_- + 2)|\eta_-|^2 \right] \bigg\} \; \mathbf{e}^t \otimes \tau_3
    \\[0.2cm]
    &+ \bigg\{- \frac{1}{r^2}e^{-\beta} (r^2 e^{-\alpha-\beta}b')' -\frac{q}{2} e^{-\alpha} \,\zeta^{(p,q)}_k |\phi|^2
    +\frac{1}{4} \left[ h_+(|\xi_+|^2 + |\eta_+|^2) + h_-(|\xi_-|^2 + |\eta_-|^2) \right]
    \bigg\}\; \mathbf{e}^t \otimes \tau_4
    \\[0.2cm]
    & + \bigg\{\frac{2}{r^2} e^{-\beta}\Im(\conj{z}z') + \frac{qm}{2n} e^{-\beta} \,\Im(\conj{\phi}\phi')\\
    &\qquad - \frac{1}{4} \left[ (d_+ - 2\ell_+)|\xi_+|^2 - (d_+ - 2\ell_+ + 2) |\eta_+|^2 - (d_- - 2\ell_-)|\xi_-|^2 + (d_- - 2\ell_- + 2)|\eta_-|^2 \right] \bigg\} \; \mathbf{e}^s \otimes \tau_3
    \\[0.2cm]
    &- \bigg\{ \frac{q}{2} e^{-\beta} \,\Im(\conj{\phi}\phi') + \frac{1}{4} \left[ h_+(|\xi_+|^2 - |\eta_+|^2) + h_- (- |\xi_-|^2 + |\eta_-|^2)\right] \bigg\} \; \mathbf{e}^s \otimes \tau_4    
    \\[0.1cm]
    &- \frac{1}{r} \, \Re \bigg\{ e^{-\alpha-\beta}(e^{\alpha-\beta}z')' + e^{-2\alpha}a^2z + \frac{1}{r^2}z(n-|z|^2) 
    - \frac{\kappa z|\phi|^2}{4} 
    \\
    &\hspace{1.25cm} + \frac{r}{2}  \left[ \sqrt{\ell_+(d_+-\ell_++1)}\, \xi_+\conj{\eta}_+ - \sqrt{\ell_-(d_--\ell_-+1)}\, \xi_-\conj{\eta}_- \right] \bigg\} (\mathbf{e}^\theta \otimes \tau_1 + \mathbf{e}^\varphi \otimes \tau_2)
    \\[0.1cm]
    &- \frac{1}{r} \, \Im \bigg\{ e^{-\alpha-\beta}(e^{\alpha-\beta}z')' + e^{-2\alpha}a^2z + \frac{1}{r^2}z(n-|z|^2) 
    - \frac{\kappa z|\phi|^2}{4} 
    \\
    &\hspace{1.25cm} + \frac{r}{2}  \left[ \sqrt{\ell_+(d_+-\ell_++1)}\, \xi_+\conj{\eta}_+ - \sqrt{\ell_-(d_--\ell_-+1)}\, \xi_-\conj{\eta}_- \right] \bigg\} (\mathbf{e}^\theta \otimes \tau_2 - \mathbf{e}^\varphi \otimes \tau_1)
\end{align*}
\normalsize

\subsection{Einstein sector}

For the Einstein equations $-\Ein_g + \mathfrak{T} = 0$,
we note that the energy-momentum tensor is given by
\footnotesize
\begin{align*}
    \mathfrak{T}
    =&\;
    \mathfrak{T}^{\mathrm{YM}} + \mathfrak{T}^{\mathrm{H}} + \mathfrak{T}^{\mathrm{D}}
    \\[0.1cm]
    =&\,
     \bigg\{
    \frac12 e^{-2\alpha-2\beta} (a'{}^2 + b'{}^2)
    + \frac{1}{r^2}\left (e^{-2\alpha}a^2|z|^2 + e^{-2\beta}|z'|^2 
    + \frac{(n-|z|^2)^2+m^2}{2r^2} \right)
    \\
    &\quad 
    + \frac12\left( e^{-2\alpha}|\zeta_k^{(p,q)}\phi|^2 + e^{-2\beta}|\phi'|^2 + \frac{\kappa|z|^2|\phi|^2}{2r^2} + U(\phi)\right)
    \\
    &\quad
    -\frac12 e^{-\alpha}\left( \zeta^{(d_+,h_+)}_{\ell_+} |\xi_+|^2 +  \zeta^{(d_-,h_-)}_{\ell_-}|\xi_-|^2
    + \zeta^{(d_+,h_+)}_{\ell_+-1} |\eta_+|^2 +  \zeta^{(d_-,h_-)}_{\ell_--1}|\eta_-|^2 \right)
    \bigg\}\, \mathbf{e}^t \otimes \mathbf{e}^t
    \\[0.2cm]
    &+ \bigg\{ \frac{2}{r^2} e^{-\alpha-\beta} a\Im(\conj{z}z')  + \zeta_k^{(p,q)} e^{-\alpha-\beta}\Im(\conj{\phi}\phi') 
    \\
    &\qquad+ \frac12 e^{-\alpha}\bigg( \zeta^{(d_+,h_+)}_{\ell_+}|\xi_+|^2 - \zeta^{(d_-,h_-)}_{\ell_-}|\xi_-|^2 
    - \zeta^{(d_+,h_+)}_{\ell_+-1} |\eta_+|^2  + \zeta^{(d_-,h_-)}_{\ell_--1}|\eta_-|^2
    \bigg)  \bigg\} (\mathbf{e}^t \otimes \mathbf{e}^s + \mathbf{e}^s \otimes \mathbf{e}^t)
    \\[0.2cm]
    &+ \frac12 \bigg\{ - e^{-2\alpha-2\beta} (a'{}^2 + b'{}^2)
    + \frac{2}{r^2} \left( e^{-2\alpha}a^2|z|^2 + e^{-2\beta}|z'|^2 - \frac{(n-|z|^2)^2+m^2}{2r^2} \right)
    \\
    &\qquad + e^{-2\alpha}|\zeta_k^{(p,q)}\phi|^2 + e^{-2\beta}|\phi'|^2 - \frac{\kappa|z|^2|\phi|^2}{2r^2} - U(\phi)
    + e^{-\beta} \, \Im\left(\conj{\xi}_+ \xi_+' - \conj{\xi}_- \xi_-' - \conj{\eta}_+ \eta_+' + \conj{\eta}_- \eta_-' \right)
    \bigg\} \, \mathbf{e}^s \otimes \mathbf{e}^s
    \\[0.2cm]
    &+ \frac12 \bigg\{  e^{-2\alpha-2\beta} (a'{}^2 + b'{}^2)
    + \frac{(n-|z|^2)^2+m^2}{r^4}
    + e^{-2\alpha}|\zeta_k^{(p,q)}\phi|^2 - e^{-2\beta}|\phi'|^2 - U(\phi) 
    \\
    &\qquad - \frac{1}{r} \left[ \sqrt{\ell_+(d_+-\ell_++1)}\,\Re(z\conj{\xi}_+\eta_+) - \sqrt{\ell_-(d_--\ell_-+1)}\,\Re(z\conj{\xi}_-\eta_-) \right]
    \bigg\} (\mathbf{e}^\theta \otimes \mathbf{e}^\theta + \mathbf{e}^\varphi \otimes \mathbf{e}^\varphi),
\end{align*}
\normalsize
where we also use the considerations from \S \ref{sec-chiral}.
The equations are then easily deduced using the formula \eqref{eq-einstein-tensor-sph-sym} for the Einstein tensor.

\section{Schwarzschild coordinates}
\label{sec-schwarzschild-coordinates}

As already discussed in \S \ref{sec-coordinate-choices}, it is sometimes convenient to use the radius function $r$ as the independent variable, i.e.\ to set $s = r$. Then primes correspond to differentiation by $r$.
When studying black hole solutions in particular, is furthermore convenient to substitute the metric coefficients $\alpha$ and $\beta$ by $\delta$ and $M$ via
\begin{equation*}
    e^{2\alpha(r)} = e^{-2\delta(r)}\left(1 - \frac{2M(r)}{r}\right), \qquad 
    e^{2\beta(r)} = \left(1 - \frac{2M(r)}{r}\right)^{-1},
\end{equation*}
so that the metric can be written as
\begin{equation*}
    g = - e^{-2\delta(r)}\left(1 - \frac{2M(r)}{r}\right) \, \diff t^2 + \left(1 - \frac{2M(r)}{r}\right)^{-1} \, \diff r^2 + r^2 g_{\mathbb{S}^2}.
\end{equation*}
Then $M$ corresponds to the mass function, and e.g.\ the classical Schwarzschild black hole metric has $M \equiv \text{const}$ and $\delta \equiv 0$.
Denoting also
\begin{equation*}
    N(r) = 1-\frac{2M(r)}{r},
\end{equation*}
we note that
\begin{equation*}
    e^\alpha = e^{-\delta} N^{\frac12}, \qquad e^\beta = N^{-\frac12},
    \qquad
    \alpha' = -\delta' + \frac{M - rM'}{r(r-2M)}, \qquad \beta' = -\frac{M - rM'}{r(r-2M)},
\end{equation*}
and we can write the Einstein equations as
\begin{align*}
    M' =&\, \frac14 r^2 e^{2\delta} (a'{}^2 + b'{}^2)
        + \frac12 N^{-1} e^{2\delta} a^2 |z|^2 + \frac12 N |z'|^2 
        + \frac{(n-|z|^2)^2 + m^2}{4r^2}
        \\
        &+ \frac14 N^{-1} r^2 e^{2\delta} |\zeta^{(p,q)}_k \phi|^2
        + \frac14 r^2 N |\phi'|^2
        + \frac{\kappa}{8} |z|^2 |\phi|^2 
        + \frac14 r^2 U(\phi)
        \\
        &+ \frac14 N^{-\frac12} r^2 e^\delta 
        \left( \zeta^{(d_+,h_+)}_{\ell_+} |\xi_+|^2 +  \zeta^{(d_-,h_-)}_{\ell_-}|\xi_-|^2
        + \zeta^{(d_+,h_+)}_{\ell_+ - 1} |\eta_+|^2 +  \zeta^{(d_-,h_-)}_{\ell_- - 1}|\eta_-|^2 \right)
    \\[0.2cm]
    \delta' =& -\frac{1}{r} N^{-2} e^{2\delta} a^2 |z|^2 
        - \frac{1}{r} |z'|^2 
        - \frac12 r N^{-2} e^{2\delta} |\zeta^{(p,q)}_k \phi|^2 
        - \frac12 r|\phi'|^2
        \\
        & + \frac12 re^{\delta} N^{-\frac32} \left( \zeta^{(d_+,h_+)}_{\ell_+} |\xi_+|^2 +  \zeta^{(d_-,h_-)}_{\ell_-}|\xi_-|^2
        + \zeta^{(d_+,h_+)}_{\ell_+ - 1} |\eta_+|^2 +  \zeta^{(d_-,h_-)}_{\ell_- - 1}|\eta_-|^2 \right)
        \\
        &- \frac{1}{2} \sqrt{\ell_+ (d_+ - \ell_+ + 1)} \,N^{-1} \Re(z\conj{\xi}_+\eta_+)
        + \frac{1}{2} \sqrt{\ell_- (d_- - \ell_- + 1)} \,N^{-1} \Re(z\conj{\xi}_-\eta_-)
        \\
        &+ \frac12 m_\Y rN^{-1} \left( \Y_{k,\ell_+}^{\ell_-} \Re(\phi \conj{\xi}_- \xi_+) + \Y_{k,\ell_+-1}^{\ell_--1} \Re(\phi\conj{\eta}_-\eta_+) \right)
\end{align*}
Here, the equation for $\delta$ actually follows from the constraint (\ref{eq-einstein-constraint-2}) rather than from the second-order equation (\ref{eq-einstein-evo-ddt}), but one can easily verify that if $(M,\delta)$ solve the equations above then all three of the Einstein equations hold, cf.\ Remark \ref{rem-einstein-constraints}.
This has the clear benefit or reducing the number of derivatives (i.e.\ both of the equations above are of first order), but it comes at the cost of making them singular at marginally trapped spheres, i.e.\ those at which $N(r)=0$.
This is not a true singularity but rather a consequence of a "bad" choice of coordinates, i.e.\ at points like this the radius function $r$ is no longer monotone but has a stationary point.
This is the main reason why the other coordinate choices from \S \ref{sec-coordinate-choices} might be considered preferrable for analytic purposes.

\section{Propagation of the Einstein constraint}
\label{appendix-einstein-constraint}
In this appendix, we provide the necessary formulas to obtain \eqref{eq-einstein-constraint-propagation-equation}, which was used in the proof of Proposition \ref{prop-constraint-propagation}.
These computations are rather arduous so we do not present all the steps, but only the intermediate computations. 
That being said, all of these computations have been done by hand by the author.
Let
\begin{align*}
    A &= 1 - r^2 e^{-2\beta}\left(2\alpha'+\frac{r'}{r}\right) \frac{r'}{r}
    \\
    B &= -\frac12 r^2 e^{2\alpha-2\beta}(a'{}^2 + b'{}^2)
    \\
    C &= e^{-2\alpha} a^2 |z|^2 + e^{-2\beta}|z'|^2 - \frac{(n-|z|^2)^2 + m^2}{2r^2}
    \\
    D &= \frac12 r^2 e^{-2\alpha} |\zeta_k^{(p,q)}\phi|^2 + \frac12 r^2 e^{-2\beta}|\phi'|^2 - \frac{\kappa}{4}|z|^2 |\phi|^2 - \frac12 r^2 U(\phi)
    \\
    X &= -\frac12 r^2e^{-\alpha} \left[ \zeta_{\ell_+}^{(d_+, h_+)}|\xi_+|^2 + \zeta_{\ell_-}^{(d_-, h_-)}|\xi_-|^2 +  \zeta_{\ell_+ - 1}^{(d_+, h_+)}|\eta_+|^2 + \zeta_{\ell_- - 1}^{(d_-, h_-)}|\eta_-|^2 \right]
    \\
    Y &= r\left[ \sqrt{\ell_+(d_+ - \ell_+ - 1)} \Re(z\conj{\xi}_+\eta_+) - \sqrt{\ell_-(d_- - \ell_- - 1)} \Re(z\conj{\xi}_-\eta_-)  \right]
    \\
    Z &= -m_\Y r^2  \left[ \Y_{k,\ell_+}^{\ell_-} \Re(\phi\conj{\xi}_-\xi_+) + \Y_{k, \ell_+-1}^{\ell_--1} \Re(\phi\conj{\eta}_-\eta_+) \right]
\end{align*}
Then by the Einstein equations
\begin{align*}
    A' =& \, -2\left(\alpha' + \frac{r'}{r}\right)A
    \\
    &\, + \alpha' r^2 e^{2\alpha-2\beta} (a'{}^2 + b'{}^2) 
    - \frac{2r'}{r} (e^{-2\alpha} a^2 |z|^2 + e^{-2\beta} |z'|^2)
    + \alpha' \frac{(n-|z|^2)^2 + m^2}{r^2}
    \\
    &\, 
    + \frac{\kappa}{2} \left(\alpha' + \frac{r'}{r}\right) |z|^2 |\phi|^2
    - \frac{2r'}{r} r^2e^{-2\alpha} |\zeta_k^{(p,q)}\phi|^2
    + r^2 \left( \alpha' + \frac{2r'}{r} \right) U(\phi)
    \\
    &\, + \frac{r'}{r} r^2e^{-\alpha} \left[ \zeta_{\ell_+}^{(d_+, h_+)}|\xi_+|^2 + \zeta_{\ell_-}^{(d_-, h_-)}|\xi_-|^2 +  \zeta_{\ell_+ - 1}^{(d_+, h_+)}|\eta_+|^2 + \zeta_{\ell_- - 1}^{(d_-, h_-)}|\eta_-|^2 \right]
    \\
    &\, - \left( \alpha' + \frac{r'}{r} \right) \left[ r\sqrt{\ell_+(d_+ - \ell_+ - 1)} \Re(z\conj{\xi}_+\eta_+) - r\sqrt{\ell_-(d_- - \ell_- - 1)} \Re(z\conj{\xi}_-\eta_-)  \right]
    \\
    &\, + m_\Y r^2 \left( \alpha' + \frac{2r'}{r} \right) \left[ \Y_{k,\ell_+}^{\ell_-} \Re(\phi\conj{\xi}_-\xi_+) + \Y_{k, \ell_+-1}^{\ell_--1} \Re(\phi\conj{\eta}_-\eta_+) \right]
\end{align*}
\begin{align*}
    B' =& \, \frac{r'}{r} r^2 e^{-2\alpha-2\beta} (a'{}^2 + b{}^2) 
    - 2e^{-2\alpha} aa'|z|^2 
    -(\zeta_k^{(p,q)})' \zeta_k^{(p,q)} r^2e^{-2\alpha} |\phi|^2 
    \\
    &\,+ \frac{1}{2}r^2e^{-\alpha} \left[ (\zeta_{\ell_+}^{(d_+, h_+)})' |\xi_+|^2 + (\zeta_{\ell_-}^{(d_-, h_-)})' |\xi_-|^2 + (\zeta_{\ell_+ - 1}^{(d_+, h_+)})' |\eta_+|^2 + (\zeta_{\ell_- - 1}^{(d_-, h_-)})' |\eta_-|^2 \right] 
\end{align*}
\begin{align*}
    C' =& \, - 2\alpha' (e^{-2\alpha} a^2 |z|^2 + e^{-2\beta}|z'|^2) + \frac{r'}{r} \frac{(n-|z|^2)^2 + m^2}{r^2}
    \\
    &\, + 2e^{-2\alpha} aa' |z|^2 + \frac{\kappa}{2} |\phi|^2 \Re(\conj{z}z')
    \\
    &\, - r\sqrt{\ell_+(d_+ - \ell_+ - 1)} \Re(z'\conj{\xi}_+ \eta_+)
    + r\sqrt{\ell_-(d_- - \ell_- - 1)} \Re(z'\conj{\xi}_- \eta_-)
\end{align*}
\begin{align*}
    D' =& \,\left(\frac{r'}{r}-\alpha'\right) r^2e^{-2\alpha} |\zeta_k^{(p,q)}\phi|^2
        + r^2e^{-2\alpha}|\phi|^2 (\zeta_k^{(p,q)})' \zeta_k^{(p,q)}
    - \left( \alpha' + \frac{r'}{r} \right) r^2e^{-2\beta}|\phi'|^2
    \\
    &\,- \frac{\kappa}{2} |\phi|^2 \Re(\conj{z}z') - rr' U(\phi)
    + m_\Y r^2 \left[ \Y_{k,\ell_+}^{\ell_-} \Re(\phi' \xi_+ \conj{\xi}_-) + \Y_{k,\ell_+ - 1}^{\ell_- - 1}\Re(\phi' \eta_+ \conj{\eta}_-) \right] 
\end{align*}
\begin{align*}
    X' =&\, \alpha' r^2e^{-\alpha} \left[\zeta_{\ell_+}^{(d_+, h_+)}|\xi_+|^2 + \zeta_{\ell_-}^{(d_-, h_-)}|\xi_-|^2 +  \zeta_{\ell_+ - 1}^{(d_+, h_+)}|\eta_+|^2 + \zeta_{\ell_- - 1}^{(d_-, h_-)}|\eta_-|^2\right]
    \\
    &\, - \frac12 r^2 e^{-\alpha} \left[ (\zeta_{\ell_+}^{(d_+, h_+)})'|\xi_+|^2 + (\zeta_{\ell_-}^{(d_-, h_-)})'|\xi_-|^2 +  (\zeta_{\ell_+ - 1}^{(d_+, h_+)})'|\eta_+|^2 + (\zeta_{\ell_- - 1}^{(d_-, h_-)})' |\eta_-|^2 \right]
    \\
    &\, + re^{\beta-\alpha} \sqrt{\ell_+(d_+ - \ell_+ + 1)} \Im(z\conj{\xi}_+\eta_+) \left( -\zeta_{\ell_+}^{(d_+, h_+)} + \zeta_{\ell_+ - 1}^{(d_+, h_+)} \right)
    \\
    &\, + re^{\beta-\alpha} \sqrt{\ell_-(d_- - \ell_- + 1)} \Im(z\conj{\xi}_-\eta_-) \left( -\zeta_{\ell_-}^{(d_-, h_-)} + \zeta_{\ell_- - 1}^{(d_-, h_-)} \right)
    \\
    &\, - m_\Y \Y_{k,\ell_+}^{\ell_-} r^2 e^{\beta-\alpha} \left(\zeta_{\ell_+}^{(d_+,h_+)} + \zeta_{\ell_-}^{(d_-, h_-)}\right)  \Im(\phi \conj{\xi}_-\xi_+) 
    \\
    &\, + m_\Y \Y_{k,\ell_+ - 1}^{\ell_- - 1} r^2 e^{\beta-\alpha} \left(\zeta_{\ell_+ - 1}^{(d_+,h_+)} + \zeta_{\ell_- - 1}^{(d_-, h_-)}\right)  \Im(\phi \conj{\eta}_-\eta_+) 
\end{align*}
\begin{align*}
    Y' =& \, -\left( \alpha' + \frac{r'}{r} \right) Y
    + r\sqrt{\ell_+(d_+ - \ell_+ - 1)} \Re(z'\conj{\xi}_+\eta_+) - r\sqrt{\ell_-(d_- - \ell_- - 1)} \Re(z'\conj{\xi}_-\eta_-) 
    \\
    &\, + re^{\beta-\alpha}\sqrt{\ell_+(d_+-\ell_++1)} \, \Re(iz\conj{\xi}_+\eta_+)\left( \zeta_{\ell_+}^{(d_+, h_+)} + \zeta_{\ell_+ - 1}^{(d_+, h_+)} \right) 
    \\
    &\, + re^{\beta-\alpha}\sqrt{\ell_-(d_--\ell_-+1)} \, \Re(iz\conj{\xi}_-\eta_-)\left( \zeta_{\ell_-}^{(d_-, h_-)} + \zeta_{\ell_- - 1}^{(d_-, h_-)} \right) 
    \\
    &\, + m_\Y re^{\beta} \sqrt{\ell_+(d_+-\ell_++1)} \left[ \Y_{k,\ell_+}^{\ell_-} \Re(iz\phi\eta_+\conj{\xi}_-) + \Y_{k,\ell_+ - 1}^{\ell_- - 1} \Re(iz\conj{\phi}\conj{\xi}_+\eta_-) \right]
    \\
    &\, + m_\Y re^{\beta} \sqrt{\ell_-(d_--\ell_-+1)} \left[ \Y_{k,\ell_+}^{\ell_-} \Re(iz\conj{\phi}\conj{\xi}_+\eta_-) + \Y_{k,\ell_+ - 1}^{\ell_- - 1} \Re(iz\phi\conj{\xi}_-\eta_+) \right]
\end{align*}
\begin{align*}
    Z' =&\, -\alpha' Z - m_\Y r^2 \left[ \Y_{k,\ell_+}^{\ell_-} \Re(\phi' \conj{\xi}_-\xi_+) + \Y_{k, \ell_+-1}^{\ell_--1} \Re(\phi'\conj{\eta}_-\eta_+) \right]
    \\
    &\, - m_\Y \Y_{k,\ell_+}^{\ell_-} r^2 e^{\beta-\alpha} \Re(i\conj{\phi}\xi_-\conj{\xi}_+) \left( \zeta_{\ell_+}^{(d_+, h_+)} + \zeta_{\ell_-}^{(d_-, h_-)} \right)
    \\
    &\, - m_\Y \Y_{k,\ell_+ - 1}^{\ell_- - 1} r^2 e^{\beta-\alpha} \Re(i\phi\eta_+\conj{\eta}_-) \left( \zeta_{\ell_+ - 1}^{(d_+, h_+)} + \zeta_{\ell_- - 1}^{(d_-, h_-)} \right)
    \\
    &\, -m_\Y \sqrt{\ell_+(d_+-\ell_++1)} re^\beta \left[ \Y_{k,\ell_+}^{\ell_-} \Re(i\phi z\eta_+\conj{\xi}_-) - \Y_{k,\ell_+ - 1}^{\ell_- - 1} \Re(i\phi\conj{z}\xi_+\conj{\eta}_-) \right]
    \\
    &\, -m_\Y \sqrt{\ell_-(d_--\ell_-+1)} re^\beta \left[ \Y_{k,\ell_+}^{\ell_-} \Re(i\conj{\phi}z\eta_-\conj{\xi}_+) - \Y_{k,\ell_+ - 1}^{\ell_- - 1} \Re(i\conj{\phi}\conj{z}\xi_-\conj{\eta}_+) \right]
\end{align*}
Summing all these equations, we get that
\begin{equation*}
    (A+B+C+D+X+Y+Z)' = -2\left( \alpha' + \frac{r'}{r} \right)(A+B+C+D+X+Y+Z),
\end{equation*}
which is precisely \eqref{eq-einstein-constraint-propagation-equation}.

\end{appendix}

\bibliography{refs}
\bibliographystyle{amsplain}

\end{document}